\documentclass[12pt,reqno]{amsart}
\usepackage{graphicx}
\usepackage{amssymb,amsmath}
\usepackage{amsthm}
\usepackage{color}
\usepackage[pdf]{pstricks}
\usepackage{hyperref}
\usepackage{empheq}
\usepackage{pgfplots}
\usepackage{subcaption}
\usepackage{extarrows}
\usepackage{calc}

\usepackage{arydshln}

\newtheorem{remark}{Remark}
\newtheorem{lemma}{Lemma}

\newtheorem{prop}{Proposition}
\newtheorem{theorem}{Theorem}

\numberwithin{equation}{section}

\newcommand{\beq}{\begin{equation}}
\newcommand{\eeq}{\end{equation}}
\newcommand{\ben}{\begin{eqnarray}}
\newcommand{\een}{\end{eqnarray}}
\newcommand{\beno}{\begin{eqnarray*}}
\newcommand{\eeno}{\end{eqnarray*}}

\renewcommand{\theequation}{\thesection.\arabic{equation}}
\allowdisplaybreaks

\newdimen\eqjot \eqjot = 1\jot
\def\openupeq{\openup \the\eqjot}

\begin{document}
	\title[Rational solutions for algebraic solitons in MTM]{Rational solutions for algebraic solitons \\ in the massive Thirring model}
	
\author{Zhen Zhao}
\address[Z. Zhao]{School of Mathematics and Statistics, Ningbo University, Ningbo 315211, People's Republic of China and Department of Mathematics and Statistics, McMaster University, Hamilton, Ontario, Canada, L8S 4K1}
\email{zhaozhen00728@163.com}

\author{Cheng He}
\address[C. He]{School of Mathematics and Statistics, Ningbo University, Ningbo 315211, People's Republic of China}
\email{1811071003@nbu.edu.cn}
	
\author{Baofeng Feng}
\address[B. Feng]{School of Mathematical and Statistical Sciences, The University of Texas Rio Grande Valley, Edinburg, Texas, USA 78539}
\email{baofeng.feng@utrgv.edu}
	
\author{Dmitry E. Pelinovsky}
\address[D. E. Pelinovsky]{Department of Mathematics and Statistics, McMaster University, Hamilton, Ontario, Canada, L8S 4K1}
\email{pelinod@mcmaster.ca}
	
\begin{abstract}
An algebraic soliton of the massive Thirring model (MTM) is expressed by the simplest rational solution of the MTM with the spatial decay of $\mathcal{O}(x^{-1})$. The  corresponding potential is related to a simple embedded eigenvalue in the Kaup--Newell spectral problem. This work focuses on the hierarchy of rational solutions of the MTM, in which the $N$-th member of the hierarchy describes a nonlinear superposition of $N$ algebraic solitons with identical masses and corresponds to an embedded eigenvalue of algebraic multiplicity $N$. We show that the hierarchy of rational solutions can be constructed by using the double-Wronskian determinants. The novelty of this work is a rigorous proof that each solution is defined by a polynomial of degree $N^2$ with $2N$ arbitrary parameters, which admits $\frac{N (N-1)}{2}$ poles in the upper half-plane and $\frac{N(N+1)}{2}$ poles in the lower half-plane. Assuming that the leading-order polynomials have exactly $N$ real roots, we show that the $N$-th member of the hierarchy describes the slow scattering of $N$ algebraic solitons on the time scale $\mathcal{O}(\sqrt{t})$.
\end{abstract}   
	
	\maketitle
	
{\it Corresponding author: Dmitry E. Pelinovsky; e-mail: pelinod@mcmaster.ca}
	
\section{Introduction}

\subsection{Motivations}

The massive Thirring model (MTM) in laboratory coordinates was proposed in quantum field theory as a relativistically invariant nonlinear Dirac equation \cite{Thirring-AOP-1958}. Integrability of the MTM in one spatial dimension was shown in \cite{Mikhailov-JETP-1976} and developed in  \cite{Kaup-ANC-1977,Kuznetsov-TMP-1977,Orfanidis-PRD-1976}. The MTM is the only integrable case of the coupled-mode theory used for the homogenization of the Gross--Pitaevskii equation with a periodic potential \cite{Pel11}. 

Stability of solitary waves in the time evolution of the nonlinear Dirac equations in one spatial dimension is a challenging problem due to the lack of Lyapunov functional provided by the mass, momentum, and energy \cite{BC19}. Nevertheless, 
orbital stability of exponentially decaying solitons was proven in the MTM due to its integrability, by using the higher-order energy \cite{PS14} and
the B\"{a}cklund--Darboux transformation \cite{CPS16}. More recently, 
asymptotic stabillity of exponentially decaying solitons was shown in \cite{Cheng23} 
based on the development of the inverse scattering transform in \cite{PS19,PS18}. 

{\em The algebraic soliton} arises at the limiting point in the family of the exponentially decaying solitons, at which the soliton mass is maximal and the spatial tails are algebraically decaying. It was suggested in \cite{KPR06} that the algebraic solitons are stable with respect to perturbations, 
by using analysis of a simple embedded eigenvalue in the Kaup--Newell spectral problem related to the Lax system for the MTM. This conjecture was never proven, with some partial results on the orbital stability of algebraic solitons obtained in \cite{Wu18} for a similar model of the derivative NLS equation. 

Further progress in the study of algebraic solitons was achieved recently in \cite{Han-2024}, where the second-order rational solution to the MTM was constructed based on the bilinear (Hirota) method implemented for the MTM in \cite{Chen-SAPM-2023}. The second-order solution describes 
{\em the double algebraic soliton}, that is, two algebraic solitons with identical masses which scatter at the slow time scale of $\mathcal{O}(\sqrt{t})$. This solution suggests the existence of a hierarchy of rational solutions to the MTM with multiple algebraic solitons, similar to the hierarchy of rational solutions associated with the Zakharov--Shabat spectral problem existing up to arbitrary higher order, see \cite{bilman1,bilman2,bilman3,bilman4}. The second-order rational solution of the MTM was recovered in \cite{LiPelin2025} by using the double-pole solutions of the IST for the MTM, where it was shown that the double algebraic soliton is related to a double embedded eigenvalue in the Kaup--Newell spectral problem predicted in \cite{KPR06}. 

The main purpose of this work is to construct the hierarchy of rational solutions to the MTM, where the $N$-th member of the hierarchy describes a nonlinear superposition of $N$ algebraic solitons with identical masses and corresponds to an embedded eigenvalue of algebraic multiplicity $N$ in the Kaup--Newell spectral problem. Based on the recent development in \cite{Wang1,Wang2} for similar integrable equations, we construct the hierarchy of rational solutions by using the double-Wronskian determinants associated with the Jordan block for a multiple embedded eigenvalue. 

Compared to the previous works on rational solutions for integrable models, the novelty of this work is a rigorous proof that the $N$-the member of the hierarchy of rational solutions is defined by a polynomial of degree $N^2$ with $2N$ arbitrary parameters, which admits $\frac{N (N-1)}{2}$ poles in the upper half-plane and $\frac{N(N+1)}{2}$ poles in the lower half-plane. Under the assumption that the leading-order polynomials have exactly $N$ real roots, we show that the corresponding rational solution describes the slow scattering of $N$ algebraic solitons on the time scale $\mathcal{O}(\sqrt{t})$.

\subsection{Main results}

We write the integrable MTM system in the normalized form: 
\begin{equation}
\label{MTM}
\begin{cases}
{\rm i} (u_t+u_x)+v = |v|^2 u, \\
{\rm i} (v_t-v_x)+u = |u|^2 v,
\end{cases}
\end{equation}
where $(x,t) \in \mathbb{R}^2$ and $(u,v) \in \mathbb{C}^2$. The initial-value 
problem for the MTM system (\ref{MTM}) is known to be well-posed in  $L^2(\mathbb{R})$ \cite{Candy,Huh}, where the mass $M(u,v)$ is conserved in time:
\begin{equation}
\label{mass-intro}
M(u,v) = \int_{\mathbb{R}} (|u(x,t)|^2 + |v(x,t)|^2) dx.
\end{equation}
The nonlinear system (\ref{MTM}) is a compatibility condition for the Lax system of linear equations \cite{Mikhailov-JETP-1976}:
\begin{align}\label{Lax-pair}
\partial_x     \vec{\phi} +  L(u,v,\zeta)     \vec{\phi} = 0, \quad \partial_t     \vec{\phi} + M(u,v,\zeta)     \vec{\phi} = 0,
\end{align}
where $\zeta \in \mathbb{C}$ is the spectral parameter,
$ \vec{\phi} \in \mathbb{C}^2$ is the wave function, and
the $2 \times 2$ matrices $L(u,v,\zeta)$ and $M(u,v,\zeta)$ are given by
\begin{align*}
L=\frac{{\rm i}}{4}\left(|u|^2-|v|^2\right)\sigma_3 + \frac{{\rm i}}{2}\zeta\left(
\begin{array}{cc}
0 & \bar{v} \\
v & 0 \\
\end{array}
\right)+\frac{{\rm i}}{2\zeta}\left(
\begin{array}{cc}
0 & \bar{u} \\
u & 0 \\
\end{array}
\right)+\frac{{\rm i}}{4}\left(\zeta^2-\zeta^{-2}\right)\sigma_3
\end{align*}
and
\begin{align*}
M = -\frac{{\rm i}}{4}\left(|u|^2+|v|^2\right)\sigma_3 + \frac{{\rm i}}{2}\zeta\left(
\begin{array}{cc}
0 & \bar{v} \\
v & 0 \\
\end{array}
\right)-\frac{i}{2\zeta}\left(
\begin{array}{cc}
0 & \bar{u} \\
u & 0 \\
\end{array}
\right)+\frac{{\rm i}}{4}\left(\zeta^2+\zeta^{-2}\right)\sigma_3,
\end{align*}
with $\sigma_3 = {\rm diag}(1,-1)$ being Pauli's matrix. 
By using the variables 
\begin{equation}
\label{MTM-bil}
u=\frac{g}{\bar{f}}, \quad v=\frac{h}{f},
\end{equation}
the system (\ref{MTM}) can be rewritten in the bilinear form \cite{Chen-SAPM-2023}:
\begin{equation}
\label{bilinear}
\begin{cases}
{\rm i} (D_t+D_x)g\cdot f + h \bar{f} = 0, \\
{\rm i} (D_t-D_x)h\cdot\bar{f} + g f = 0, \\
{\rm i} (D_t+D_x)f\cdot\bar{f} - |h|^2 = 0, \\
{\rm i} (D_t-D_x)\bar{f}\cdot f - |g|^2 = 0. 
\end{cases}
\end{equation}

Our first result is a proof of the double-Wronskian solutions to the system of bilinear equations (\ref{bilinear}). Although similar solutions appear in the previous works, see Appendix A in \cite{Wang1}, we justify the validity of the 
explicit solutions independently and relate them to solutions of the two-component 
KP hierarchy.

To set up the double-Wronskian solutions, we use the characteristic coordinates
$(\xi,\eta) \in \mathbb{R}^2$ instead of the physical coordinates $(x,t) \in \mathbb{R}^2$:
\begin{equation}
\begin{cases}
t = 2(\xi + \eta),\\
x = 2(\xi - \eta),
\end{cases} \quad \Rightarrow \quad 
\begin{cases}
\partial_{\xi} = 2 (\partial_t + \partial_x),\\
\partial_{\eta} = 2 (\partial_t - \partial_x).
\end{cases}
\label{transformation}
\end{equation}
The bilinear equations (\ref{bilinear}) transform into the equivalent form:
\begin{equation}
\label{MTM-4}
\begin{cases}
{\rm i}D_\xi g \cdot f+2h \bar{f}=0, \\
{\rm i} D_\eta h \cdot \bar{f}+2g f=0, \\
{\rm i} D_\xi f \cdot \bar{f}-2 h\bar{h}=0, \\
{\rm i} D_\eta \bar{f} \cdot f - 2g\bar{g}=0.
\end{cases}
\end{equation}
The Lax pair (\ref{Lax-pair}) transforms into 
\begin{equation}
\label{Lax-1}
\partial_{\xi} \vec{\phi}  -{\rm i} |v|^2 \sigma_3 \vec{\phi} + 2{\rm i} \zeta \left(
\begin{array}{cc}
0 & \bar{v} \\
v & 0 \\
\end{array}
\right) \vec{\phi} + {\rm i} \zeta^2 \sigma_3 \vec{\phi} = 0
\end{equation}
and
\begin{equation}
\label{Lax-2}
\partial_{\eta} \vec{\phi} -{\rm i}|u|^2 \sigma_3 \vec{\phi} - 2{\rm i} \zeta^{-1} \left(
\begin{array}{cc}
0 & \bar{u} \\
u & 0 \\
\end{array}
\right) \vec{\phi} + {\rm i} \zeta^{-2} \sigma_3 \vec{\phi} = 0.
\end{equation}

The following theorem represents the double-Wronskian solutions of the bilinear equations (\ref{MTM-4}), which represent solutions of the MTM system (\ref{MTM})  via (\ref{MTM-bil}) and (\ref{transformation}).

\begin{theorem}
	\label{theorem-Wronskian}
	Fix $N \in \mathbb{N}$. Let $A \in \mathbb{C}^{2N \times 2N}$ be an invertible matrix which can be factorized by $S \in \mathbb{C}^{2N \times 2N}$ in the form
	\begin{equation}
	\label{rel-1}
	A = -S \bar{S}.
	\end{equation}
Define two vectors $\phi, \psi \in \mathbb{C}^{2N}$ from solutions of the linear equations 
\begin{equation}
\label{eigen}
\begin{cases}
\partial_{\xi} \phi = {\rm i} A \phi, \\
\partial_{\eta} \phi = {\rm i} A^{-1} \phi, 
\end{cases} 
\quad 
\mbox{\rm and} 
\qquad 
\begin{cases}
\partial_{\xi} \psi = -{\rm i} A \psi, \\
\partial_{\eta} \psi = -{\rm i} A^{-1} \psi,
\end{cases}
\end{equation}	
subject to the relation
\begin{equation}
\label{rel-2}
\psi = S \bar{\phi}.
\end{equation}
Then, the following double-Wronskian functions 
	\begin{equation}
	\label{Wronskian}
	\begin{cases}
	f = |\widetilde{N} ; \widehat{N-1}|, \\
	\bar{f} = C |\widetilde{N} ; \widetilde{N}|,
	\end{cases} \qquad 
	\begin{cases}
	g = |\widehat{N} ; \widetilde{N-1}|, \\
	\bar{g} = {\rm i}C |\overline{N} ; \widehat{N}|, 
	\end{cases} \qquad 
	\begin{cases}
	h = {\rm i} C^{-1} |\widehat{N} ; \widehat{N-2}| \\
	\bar{h} = C \bar{C}^{-1} |\widetilde{N-1} ; \widehat{N}|
	\end{cases}
	\end{equation}
satisfy the bilinear equations (\ref{MTM-4}) with $C = (-{\rm i})^N/|S|$.
\end{theorem}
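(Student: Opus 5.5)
The plan is the standard Wronskian technique in two parts. First, I will show that the six functions in (\ref{Wronskian}) are consistent under complex conjugation: the function labelled $\bar f$ really is the complex conjugate of $f$, and likewise for $\bar g$ and $\bar h$. Second, treating $f,\bar f,g,\bar g,h,\bar h$ as independent double Wronskians, I will show that each equation in (\ref{MTM-4}) reduces to a Pl\"ucker (Laplace-expansion) identity for a vanishing $4N\times 4N$ determinant.

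I read the notation in the usual way for such papers: a block $\widehat{k}$ consists of the columns $\phi, A\phi,\dots,A^{k}\phi$ (respectively $\psi,\dots,A^k\psi$), and $\widetilde{\ }$, $\overline{\ }$ denote the analogous blocks shifted to start at $A^{-1}$, $A^{-2}$. Under this reading, the ranges in (\ref{Wronskian}) are exactly what is needed for each determinant to have $4N$ columns.

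\emph{Conjugation step.} From (\ref{rel-1}) we get $\bar A=-\bar S S=S^{-1}AS$, and from (\ref{rel-2}) we get $\bar\phi=S^{-1}\psi$ and $\bar\psi=\bar S\phi$. Hence
\[
\overline{A^k\phi}=S^{-1}A^k\psi, \qquad \overline{A^k\psi}=S^{-1}A^kS\bar S\phi=-S^{-1}A^{k+1}\phi .
\]
Conjugating a double Wronskian therefore does four things:
\begin{itemize}
\item it pulls out a factor $|S|^{-1}$;
\item it interchanges the $\phi$-block and the $\psi$-block, which produces a sign $(-1)^{\text{(block sizes product)}}$;
\item it shifts the power range of the $\psi$-columns (now $\phi$-columns) up by one, with a factor $(-1)^{\#\text{cols}}$;
\item it converts powers of $A$ into powers of $A^{-1}$ or the reverse, after multiplying the whole matrix by $A^{-N}$, which costs a factor $|A|^{-N}$.
\end{itemize}
One must also check that the system (\ref{eigen}) is preserved by this map: conjugating $\phi_\xi={\rm i}A\phi$ gives $\bar\phi_\xi=-{\rm i}\bar A\bar\phi$, which is consistent with $\psi=S\bar\phi$ obeying $\psi_\xi=-{\rm i}A\psi$. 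Collecting the constants, with $|A|=|S|\,\overline{|S|}$ up to sign, should produce exactly the prefactors ${\rm i}^{\,\cdot}C$ and $C\bar C^{-1}$ in (\ref{Wronskian}). I will do this for $f$, $g$ and $h$ separately.

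\emph{Derivative step.} By (\ref{eigen}), $\partial_\xi$ acts columnwise as ${\rm i}A$ on the $\phi$-columns and as $-{\rm i}A$ on the $\psi$-columns, while $\partial_\eta$ acts the same way with $A^{-1}$. Using the trace identity
\[
\sum_j |\dots,Ba_j,\dots| = \operatorname{tr}(B)\,|\dots|
\]
with $B=A$ or $A^{-1}$, each first derivative of a consecutive-block Wronskian becomes a single determinant in which the top (or bottom) column of one block is shifted by one power. The same identity gives one additional relation between neighbouring Wronskians, which I will use to eliminate unwanted terms.

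For example, $f_\xi$ and $\bar f_\xi$ become ${\rm i}$ times single determinants with one column shifted. Then ${\rm i}D_\xi f\cdot\bar f-2h\bar h=0$ becomes a quadratic identity of the form
\[
|M,a,b|\,|M,c,d|-|M,a,c|\,|M,b,d|+|M,a,d|\,|M,b,c|=0,
\]
where $M$ is a common $4N\times(4N-2)$ column set. The equations for $g$ and $h$ work the same way using $\partial_\eta$, since $A^{-1}$ lowers the index ranges, which is why $\widetilde{\ }$ and $\overline{\ }$ appear.

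The main obstacle is the bookkeeping: matching the index ranges so that the three products in each equation share a common $(4N-2)$-column core, and tracking the signs and constant factors from the conjugation step so that the coefficient $2$ and the factor ${\rm i}$ come out exactly with $C=(-{\rm i})^N/|S|$. I will first verify the construction for $N=1$, where all determinants are $2\times2$, to fix the conventions and signs. I will then carry out the general Laplace-expansion argument for one representative equation, say the third, and treat the others by the symmetry $\xi\leftrightarrow\eta$, $A\leftrightarrow A^{-1}$ combined with conjugation.
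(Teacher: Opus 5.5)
Your overall strategy is the one the paper uses: conjugation consistency from $\psi=S\bar\phi$ and $A=-S\bar S$, the trace identity for derivatives, and a Pl\"ucker relation for each equation. However, the concrete plan rests on a misreading of the notation, and under your reading the first step fails.

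In the paper, $\widehat{k}$, $\widetilde{k}$ and $\overline{k}$ are the $\xi$-derivatives of orders $0,\dots,k$, $1,\dots,k$ and $2,\dots,k$. These are positive powers of $A$, with $k+1$, $k$ and $k-1$ columns respectively, and every determinant in (\ref{Wronskian}) is $2N\times 2N$. A $4N\times4N$ matrix appears only if you prove Lemma \ref{lem-2} by Laplace expansion, and there the core $M$ is $2N\times(2N-2)$.

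In derivative form, your conjugation rules read $\overline{\phi^{(k)}}=S^{-1}\psi^{(k)}$ and $\overline{\psi^{(k)}}={\rm i}\,S^{-1}\phi^{(k+1)}$. So if the $\phi$-block starts at order $n$ and the $\psi$-block at order $m$, the conjugate has blocks starting at $m+1$ and $n$. The offset $d=n-m$ becomes $1-d$, and no global multiplication by a power of $A$ changes $d$. Since $|\widetilde N;\widetilde N|$ has $d=0$, $f$ must have $d=1$: its $\phi$-block starts one step \emph{above} its $\psi$-block. With $\widetilde{\ }$ starting at $A^{-1}$, your $f$ has $d=-1$, its conjugate has $d=2$, and it is not proportional to $|\widetilde N;\widetilde N|$ in general. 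Your $A^{-N}$, $|A|^{-N}$ step cannot repair this. With the correct reading, the same rules give $\bar f=C|\widetilde N;\widetilde N|$, $\bar g={\rm i}C|\overline N;\widehat N|$ and $\bar h=C\bar C^{-1}|\widetilde{N-1};\widehat N|$ directly, with no power of $|A|$; this is Lemma \ref{lem-3}.

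Second, reducing to one equation does not cover the system. The third equation ${\rm i}D_\xi f\cdot\bar f-2h\bar h=0$ is self-conjugate. Conjugating the first gives $-{\rm i}D_\xi\bar g\cdot\bar f+2\bar h f=0$, which is not in (\ref{MTM-4}). The $\xi\leftrightarrow\eta$ symmetry of (\ref{MTM-4}), with $f\leftrightarrow\bar f$ and $g\leftrightarrow h$, only pairs the third equation with the fourth and the first with the second. Moreover, $A\mapsto A^{-1}$ with the same $S$ and $\psi$ violates $A=-S\bar S$. You would also have to replace $\psi,S$ by $A^{-1}\psi,A^{-1}S$ and recheck all constants. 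So at least the first (or second) equation needs its own Pl\"ucker argument; the paper verifies all four directly.

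The ``bookkeeping'' you leave open also hides a necessary step: the trace identity alone does not give the three products a common core. Already in your representative third equation, the $h\bar h$ product contains $\phi$ where the two $D_\xi f\cdot\bar f$ products contain $\phi^{(N+1)}$. You must first shift every column of $h$ by $A$, using $|\widehat N;\widehat{N-2}|=-|A^{-1}|\,|\widetilde{N+1};\widetilde{N-1}|$. Then $|A^{-1}|=(|S|\,\overline{|S|})^{-1}$ gives $\bar C^{-1}|A^{-1}|=C$; this is where the value $C=(-{\rm i})^N/|S|$ is actually tested. Only then does Lemma \ref{lem-2} apply, with $M=(\widetilde{N-1};\widetilde{N-1})$ and $a,b,c,d=\phi^{(N)},\psi^{(N)},\phi^{(N+1)},\psi$.

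Similarly, in ${\rm i}D_\eta h\cdot\bar f+2gf=0$ the $\eta$-derivative creates a column $A^{-1}\psi$. The determinants $|\widehat N;-1,\widetilde{N-2}|$, $|\widehat N;\widehat{N-2}|$ and $|\widehat N;\widetilde{N-1}|$ must all be shifted by $A$ before the Pl\"ucker relation can be used, with $M=(\widetilde N;\overline{N-1})$ and $a,b,c,d=\phi^{(N+1)},\psi,\psi',\psi^{(N)}$.
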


\begin{remark}
	In Theorem \ref{theorem-Wronskian}, we use the following notations for double-Wronskian determinants of $(2N) \times (2N)$ matrices:
\begin{align*}
|\widehat{N-1} ; \widehat{N-1}| &= |\phi,\phi',\dots,\phi^{(N-1)};\psi,\psi',\dots,\psi^{(N-1)}|, \\
|\widetilde{N} ; \widetilde{N}| &= |\phi',\phi'',\dots,\phi^{(N)};\psi',\psi'',\dots,\psi^{(N)}|, \\
|\overline{N+1} ; \overline{N+1}| &= |\phi'',\phi''',\dots,\phi^{(N)};\psi'',\psi''',\dots,\psi^{(N+1)}|, 
\end{align*}
where $|A| = \det(A)$ and the prime stands for the derivative with respect to $\xi$. Similarly, in the proof of Theorem \ref{theorem-Wronskian}, we extend the 
definitions for the following one-column modications of the double-Wronskian determinants:
\begin{align*}
|0,\overline{N}; \widehat{N-1}| &= |\phi,\phi'',\dots,\phi^{(N)};\psi,\psi',\dots,\psi^{(N-1)}|, \\
|\widetilde{N};-1,\widetilde{N-1}|&= |\phi',\phi'',\dots,\phi^{(N)};\partial_{\xi}^{-1} \psi,\psi',\dots,\psi^{(N-1)}|.
\end{align*}
\end{remark}

\begin{remark}
	\label{rem-tau}
It is customary \cite{HirotaOhtaSatsuma1988,Kakei1988} to represent the double-Wronskian determinants by using the following tau-function:
\begin{align}
\label{double-Wr}
\tau^l_{n,m} &:= |\phi^{(n)},\phi^{(n+1)},\dots,\phi^{(n+N+l-1)};\psi^{(m)},\psi^{(m+1)},\dots,\psi^{(m+N-l-1)}|.
\end{align}
The double-Wronskian solutions (\ref{Wronskian}) are expressed by using the tau functions as
	\begin{equation}
	\label{Wronskian-tau}
	\begin{cases}
	f = \tau^0_{1,0}, \\
	\bar{f} = \frac{(-{\rm i})^N}{|S|} \tau^0_{1,1},
	\end{cases} \qquad 
	\begin{cases}
	g = \tau^1_{0,1}, \\
	\bar{g} = -\frac{(-{\rm i})^{N+1}}{|S|}  \tau^{-1}_{2,0}, 
	\end{cases} \qquad 
	\begin{cases}
	h = {\rm i}^{N+1} |S|  \tau^1_{0,0} \\
	\bar{h} = (-1)^N \tau^{-1}_{1,0}
	\end{cases}
	\end{equation}
\end{remark}

\begin{remark}
In the simplest case $N = 1$, the double-Wronskian solutions in Theorem \ref{theorem-Wronskian} recover the family of the exponentially decaying 
solitons in the explicit form:
\begin{equation}
u = \frac{\sin \gamma e^{{\rm i}t \cos \gamma}}{\cosh(x \sin \gamma + \frac{{\rm i}\gamma}{2})}, \quad 
v = \frac{\sin \gamma e^{{\rm i}t \cos \gamma}}{\cosh(x \sin \gamma - \frac{{\rm i}\gamma}{2})},
\label{1-soliton-expr}
\end{equation}
where the parameter $\gamma \in (0,\pi)$ is arbitrary. This solution is 
well-known and stability of exponential solitons was studied
in \cite{PS14}, \cite{CPS16}, and \cite{Cheng23}. A general family 
of the exponential solitons has four parameters, but 
two parameters are given by the translations in $(x,t)$ 
and the speed parameter can be added by using the Lorentz transformation of the MTM, see details in \cite{Han-2024}. The limit $\gamma \to \pi$ in (\ref{1-soliton-expr}) recovers the algebraic soliton: 
\begin{equation}
\gamma = \pi : \quad u = \frac{2 e^{-{\rm i}t}}{1 + 2{\rm i}x}, \quad 
v = \frac{2 e^{-{\rm i}t}}{1-2{\rm i}x}.
\label{1-soliton-alg-expr}
\end{equation}
which can also be obtained directly from Theorem \ref{theorem-Wronskian} in the case  $N = 1$. 
\end{remark}

Our second result presents a hierarchy of rational solutions to the MTM, 
which generalizes the algebraic soliton (\ref{1-soliton-alg-expr}) for any $N \in \mathbb{N}$. In the case $N = 2$, this hierarchy includes the second-order 
rational solution which corresponds to the double algebraic soliton obtained in \cite{Han-2024} and \cite{LiPelin2025}.

To present the rational solutions, we denote the $(2N)\times(2N)$  identity matrix by $I$ and the $(2N)\times(2N)$  nilpotent matrix by $L$, 
\begin{equation}
\label{matrix-L}
L = \left( \begin{matrix} 0 & 0 & 0 & \dots & 0 & 0 \\
1 & 0 & 0 & \dots & 0 & 0\\
0 & 1 & 0 &  \dots & 0 & 0 \\
\vdots & \vdots &  \vdots & \ddots & \vdots &  \vdots \\
0 & 0 & 0 & \dots & 0 & 0 \\
0 & 0 & 0 & \dots & 1 & 0 
\end{matrix} \right).
\end{equation}
The $j$-th power of $L$ has ones at the $j$-th lower diagonal for $j = 1,2,\dots,2N-1$ such that $L^{2N} = 0$ is the $(2N)\times(2N)$ zero matrix.

The following theorem defines the hierarchy of rational solutions to the MTM. 

\begin{theorem}
	\label{theorem-rat-sol}
	Let matrix $A$ and $A^{-1}$ be defined by 
	\begin{equation}
	\label{choice-A}
	A = -I + L, \qquad A^{-1} = -I - L - L^2 - \dots - L^{2N-1}.
	\end{equation} 
Let $\phi \in \mathbb{C}^{2N}$ be defined by a general solution 
to the left system (\ref{eigen}) with $(2N)$ complex coefficients and $\psi = S \bar{\phi}$. The double-Wronskian solutions (\ref{Wronskian}) generate the rational solutions to the MTM system (\ref{MTM}) in the form:
	\begin{equation}
	\label{rat-sol}
	u(x,t) = \frac{Q_N(x,t)}{\bar{P}_N(x,t)} e^{-{\rm i}t}, \quad 
	v(u,x) = \frac{R_N(x,t)}{P_N(x,t)} e^{-{\rm i}t}, 
	\end{equation}
	where $P_N$ is a polynomial of degree $N^2$ in $x$ and $Q_N$, $R_N$ are polynomials of degree $N^2 - 1$ in $x$. The solution (\ref{rat-sol}) depends on $(2N)$ real parameters and is bounded for all $(x,t) \in \mathbb{R}^2$. 
\end{theorem}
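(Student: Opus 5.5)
The plan is to make Theorem~\ref{theorem-Wronskian} explicit for the choice (\ref{choice-A}). The first task is to produce the factorization (\ref{rel-1}). Since $A=-(I-L)$ is real, it suffices to find a real $S$ with $S^2=I-L$. I would take
\[
S=(I-L)^{1/2}=\sum_{k=0}^{2N-1}\binom{1/2}{k}(-L)^k ,
\]
which is a finite sum because $L$ is nilpotent. This $S$ is real, lower triangular and unipotent, so $|S|=1$, and it commutes with $A$ and $A^{-1}$. Hence $\psi=S\bar\phi$ automatically satisfies the right system in (\ref{eigen}), and Theorem~\ref{theorem-Wronskian} applies.

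\textbf{Step 1 (explicit form of $\phi,\psi$).} Solving the left system (\ref{eigen}) gives
\[
\phi=e^{-{\rm i}(\xi+\eta)}\exp\!\big({\rm i}L\xi-{\rm i}(L+\dots+L^{2N-1})\eta\big)\,c ,\qquad c\in\mathbb{C}^{2N}.
\]
The matrix exponential is a finite polynomial in $(\xi,\eta)$, and by (\ref{transformation}) we have $\xi+\eta=t/2$. So $\phi=e^{-{\rm i}t/2}p(\xi,\eta)$ and $\psi=e^{{\rm i}t/2}S\bar p(\xi,\eta)$ with polynomial vectors $p$, and $\xi$-derivatives preserve this structure.

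\textbf{Step 2 (the phase $e^{-{\rm i}t}$).} Counting columns in (\ref{Wronskian}), or equivalently in (\ref{Wronskian-tau}), gives the phases:
\begin{itemize}
\item $f$ and $\bar f$ contain $N$ $\phi$-columns and $N$ $\psi$-columns, so their phases cancel;
\item $g$ and $h$ contain $N+1$ $\phi$-columns and $N-1$ $\psi$-columns, so each carries $e^{-{\rm i}t}$.
\end{itemize}
Through (\ref{MTM-bil}) this yields (\ref{rat-sol}) with $P_N$, $Q_N$, $R_N$ polynomial in $(x,t)$.

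\textbf{Step 3 (degrees).} I would write the components of $p$ as elementary Schur-type polynomials $p_k(\xi,\eta)$ generated by $\exp({\rm i}\lambda\xi-{\rm i}(\lambda+\dots)\eta)=\sum_k p_k\lambda^k$. The $\xi$-derivative then acts as a shift $p_k\mapsto {\rm i}p_{k-1}$, and the double Wronskian becomes a determinant whose entries are shifted $p_k$'s and their conjugates. Column operations remove the coefficients $c_2,\dots,c_{2N}$ as lower-order terms, leaving a leading part with the structure of a Schur polynomial for a staircase-type partition. Computing the top-degree term in $x$, with $t$ fixed, should give degree $N^2$ for $P_N$ and $N^2-1$ for $Q_N$ and $R_N$. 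The nonzero leading coefficient would come from a Vandermonde/Cauchy-type evaluation. For the parameter count, I would note that after the normalization $c_1\neq0$ (overall scaling), and after noting that the gauge symmetries act trivially or by real translations, $2N$ real parameters remain.

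\textbf{Step 4 (boundedness, the main obstacle).} Boundedness requires that $f$ has no real zeros in $x$; the degree gap then gives decay $\mathcal{O}(x^{-1})$. First I would try the bilinear equations directly. At a real point where $f=0$, also $\bar f=0$ (since $\bar f$ is the complex conjugate there). The third and fourth equations of (\ref{MTM-4}) then force $h=0$ and $g=0$, and the first two equations give vanishing of the derivatives. Iterating these identities, together with Jacobi/Plücker identities for the tau functions $\tau^l_{n,m}$, I would aim to show that all tau functions vanish at that point. That contradicts a nonzero lower-order structure, for instance $\tau^0_{0,0}$ being a nonzero constant multiple of $e^{0}$ by the triangular form.

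If this fails, the fallback is a continuity argument in $t$. For large $|t|$, the asymptotics of Step 3 place the roots of $P_N$ away from $\mathbb{R}$. Conservation of the mass (\ref{mass-intro}), with $|v|^2$ equal to a total $\xi$-derivative of $\tfrac{{\rm i}}{2}\log(\bar f/f)$, fixes the winding numbers, so no root can cross the real axis. This gives $\frac{N(N-1)}{2}$ roots in the upper half-plane and $\frac{N(N+1)}{2}$ in the lower half-plane, and hence boundedness.
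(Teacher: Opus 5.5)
Your choice $S=(I-L)^{1/2}$ and Steps 1--2 are correct and match Lemma~\ref{lem-S}. Step 4, however, does not work as planned.

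Boundedness does not require $f\neq0$ on $\mathbb{R}^2$, and the paper never proves $f\neq0$; it leaves this as a conjecture in the remark after Lemma~\ref{lem-bounded}. The contradiction you aim for rests on a false premise. Since $\phi^{(n+1)}={\rm i}A\phi^{(n)}$, $\psi^{(m+1)}=-{\rm i}A\psi^{(m)}$ and $|A|=1$, one has $\tau^0_{0,0}=\tau^0_{1,1}$. This is a constant multiple of $\bar f$, a polynomial of degree $N^2$, not a nonzero constant. So the simultaneous vanishing of $f,\bar f,g,h$ at a real point contradicts nothing.

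The fallback fails as well, for three reasons:
\begin{itemize}
\item The fixed-$t$ asymptotics of Step 3 do not locate the roots for large $|t|$. That requires the principal part $p_N$ in the scaling $x\sim\sqrt{|t|}$, the unproved hypothesis of Theorem~\ref{theorem-polyn} that $\hat p_N$ has exactly $N$ real roots, and the correction computed in Appendix~\ref{app-C}.
\item Mass conservation for the rational solution presupposes its regularity, so the argument is circular.
\item Even granted, conservation fixes only the difference of the numbers of roots in $\mathbb{C}^\pm$. That does not stop roots from touching $\mathbb{R}$ or crossing it in pairs.
\end{itemize}

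The missing idea is to use the bilinear equations to compare orders of vanishing, not to seek a contradiction. Fix $\eta=\eta_0$ and let $f$ vanish to order $m$ at a real $\xi_0$. Since $\bar f$ is the complex conjugate of $f$ for real arguments, $f_\xi\bar f-f\bar f_\xi$ vanishes to order at least $2m$. The third equation of (\ref{MTM-4}) then forces $|h|^2$ to vanish to order $\ge 2m$, i.e.\ $h$ to order $\ge m$. Now $-2h\bar f$ vanishes to order $\ge 2m$. If $g$ vanished to order $k<m$, then $g_\xi f-gf_\xi$ would vanish to exact order $k+m-1<2m$, so the first equation gives order $\ge m$ for $g$. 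Hence the singularities of $u=g/\bar f$ and $v=h/f$ are removable (similarly in $\eta$, using the second and fourth equations). This is the paper's Lemma~\ref{lem-bounded}.

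Step 3 and the parameter count are only sketched, yet they carry the substance of the theorem. The degrees $N^2$ and $N^2-1$ hold only if the leading coefficients are nonzero. Saying the computation ``should give'' them, with the coefficient ``would come from a Vandermonde/Cauchy-type evaluation,'' does not establish this. The paper reduces $f$ to $|\phi',L\phi',\dots,L^{N-1}\phi';\psi,L\psi,\dots,L^{N-1}\psi|$. It shows the top-degree part is governed by $|M_N(z)|=z^{N^2}|M_N(1)|$ with $z=\frac{{\rm i}}{2}x$, and by $z^{N^2-1}|\tilde M_N(1)|$ for $g$, $h$. It then evaluates these constants by the two-column eliminations of Appendices~\ref{app-A} and~\ref{app-B}.

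Your Schur-function idea can replace those appendices, but the shape is a rectangle, not a staircase. Left-multiply by $e^{zL}$, which is unipotent with determinant $1$. This turns the columns $L^jb_N(-z)$ into unit vectors, and the columns $L^jb_N(z)$ into the coefficient vectors of $\lambda^j e^{2\lambda z}$. The result is $|M_N(z)|=(-1)^N\det\bigl[(2z)^{N+i-j}/(N+i-j)!\bigr]_{i,j=0}^{N-1}$. This is the Jacobi--Trudi determinant of the $N\times N$ square with $h_k=(2z)^k/k!$, which equals $(2z)^{N^2}$ divided by the product of the hook lengths and is therefore nonzero. For $g$, $h$ one obtains the rectangle with $N+1$ rows of length $N-1$. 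You would still have to carry this out.

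Finally, saying the ``gauge symmetries act trivially or by real translations'' does not identify what removes $2N$ of the $4N$ real parameters in $c$. The paper's point (Lemma~\ref{lem-param}) is that $\phi\mapsto T\phi$, with $T$ a real polynomial in $L$, commutes with $A$ and $S$ and sends $\psi\mapsto T\psi$. It multiplies every double Wronskian by $|T|=a_1^{2N}$, so it leaves $(u,v)$ unchanged. Writing $\sum_k c_k(\zeta^2+1)^{k-1}$ as a real polynomial times $e^{{\rm i}\sum_k b_k(\zeta^2+1)^{k-1}}$ then leaves only $b_1,\dots,b_{2N}$.
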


\begin{remark}
	In the proof of Theorem \ref{theorem-rat-sol}, we show that the asymptotic behavior of the solution (\ref{rat-sol}) is given by 
	\begin{equation}
	\label{rat-sol-asympt}
	u(x,t) \sim -\frac{{\rm i}N}{x} e^{-{\rm i} t}, \quad v(x,t) \sim \frac{{\rm i}N}{x} e^{-{\rm i}t}, \quad \mbox{\rm as} \;\; |x| \to \infty,
	\end{equation}
	see (\ref{P_N}), (\ref{Q_N}), and (\ref{R_N}).
	The behavior (\ref{rat-sol-asympt}) is in agreement with \cite[Lemma 6.4]{KPR06} about the existence of a pair of embedded eigenvalues 
	$\zeta = \pm {\rm i}$ in the linear system (\ref{Lax-pair}) of algebraic multiplicity $N$, for which $\lim\limits_{|x| \to \infty} |x| |u(x,t)| =  \lim\limits_{|x| \to \infty} |x| |v(x,t)| > N - \frac{1}{2}$ is required. 
\end{remark}

Our third and final result is about the relation of the higher-order rational solution in Theorem \ref{theorem-rat-sol} to the dynamics of $N$ copies of  algebraic solitons (\ref{1-soliton-alg-expr})  with the slow scattering on the time scale $\mathcal{O}(\sqrt{|t|})$ as $|t| \to \infty$. We show in Lemma \ref{lem-prinicipal-polyn} that the principal part of the polynomial $P_N(x,t)$ in (\ref{rat-sol}) can be written in the form
\begin{equation}
\label{polyno-principal}
p_N(x,t) = a_N^{(0)} x^{N^2} + a_N^{(1)} x^{N^2-4} t^2 + a_N^{(2)} x^{N^2-8} t^4 + 
\ldots + a_N^{(J)} x^{N^2-4J} t^{2J},
\end{equation}
where $J$ is the largest integer such that $N^2 - 4J \geq 0$ 
and $\{ a_N^{(j)}\}_{j=0}^J$ are real-valued coefficients computed from the explicit determinants, see (\ref{p-double}) below. It follows that $J = \frac{N^2}{4}$ if $N$ is even and $J = \frac{N^2-1}{4}$ if $N$ is odd. 
If $x = \upsilon \sqrt{|t|}$, then $p_N(x,t) = |t|^{\frac{N^2}{2}} \hat{p}_N(\upsilon)$, where 
\begin{equation}
\label{polyno-upsilon}
\hat{p}_N(\upsilon) = a_N^{(0)} \upsilon^{N^2} + a_N^{(1)} \upsilon^{N^2-4} + a_N^{(2)} \upsilon^{N^2-8}  + 
\ldots + a_N^{(J)} \upsilon^{N^2-4J}.
\end{equation}
The following theorem describes the corresponding result. 

\begin{theorem}
	\label{theorem-polyn}
Assume that $\hat{p}_N$ in (\ref{polyno-upsilon}) admits exactly $N$ real roots. Then, $P_N$ in (\ref{rat-sol}) admits $\frac{N(N-1)}{2}$ roots in the upper half-plane of $x$ and $\frac{N(N+1)}{2}$ roots in the lower half-plane of $x$ for large $|t|$ and the mass integral is quantized as 
\begin{equation}
\label{mass-quantization}
\mathcal{M}_N(u,v) = \int_{\mathbb{R}} \frac{|Q_N(x,t)|^2 + |R_N(x,t)|^2}{|P_N(x,t)|^2} dx = 4 \pi N.
\end{equation}
\end{theorem}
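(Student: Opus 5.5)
The proof has two parts. First we count the roots of $P_N$ for large $|t|$ by a scaling argument. Then we evaluate the mass integral by residues, or equivalently by the argument principle.

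\textbf{Root count.} Substitute $x = \upsilon\sqrt{|t|}$ and write
$$P_N(x,t) = |t|^{N^2/2}\bigl(\hat p_N(\upsilon) + \mathcal{O}(|t|^{-1/2})\bigr).$$
Here we use Lemma \ref{lem-prinicipal-polyn}: after this scaling, the lower-order monomials $x^k t^l$ of $P_N$ with $k+2l<N^2$ are all suppressed by at least $|t|^{-1/2}$. This requires checking that every monomial of $P_N$ has weighted degree $k+2l\le N^2$, with equality exactly for the terms of $p_N$ in (\ref{polyno-principal}).

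By Hurwitz's theorem, each simple root $\upsilon_0$ of $\hat p_N$ gives exactly one root of $P_N$ near $\upsilon_0\sqrt{|t|}$. We take the $N$ assumed real roots of $\hat p_N$ to be simple; since $\hat p_N$ is even or odd in $\upsilon$ with real coefficients, they come in symmetric pairs. The remaining $N^2-N$ roots of $\hat p_N$ are non-real and come in conjugate pairs, so their perturbations stay off the real axis and split evenly: $\frac{N(N-1)}{2}$ in each half-plane.

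This leaves the $N$ roots near the real axis, which correspond to the $N$ algebraic solitons. For these we compute the first correction. Near each real root $\upsilon_0$, a two-term expansion should give
$$x_0 = \upsilon_0\sqrt{|t|} + c_0 + o(1), \qquad \operatorname{Im} c_0 < 0.$$
This mimics the single soliton (\ref{1-soliton-alg-expr}), where $\bar P_1 \propto 1+2{\rm i}x$ and hence $P_1\propto 1-2{\rm i}x$, whose root $x=-{\rm i}/2$ lies in the lower half-plane. To determine $\operatorname{Im} c_0$, we will use the sub-principal terms of $P_N$ of weighted degree $N^2-1$. Alternatively, we will argue that boundedness of the solution (Theorem \ref{theorem-rat-sol}) forces $P_N\ne0$ on $\mathbb{R}$, and that locally the solution is a shifted single algebraic soliton, so its pole must lie on the same side as for $N=1$. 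Together this gives $\frac{N(N-1)}{2}$ roots in the upper half-plane and $\frac{N(N-1)}{2}+N=\frac{N(N+1)}{2}$ in the lower half-plane.

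This step is the main obstacle. We expect to settle the sign via a local matching argument: near $x_0$, rescale and use that the bilinear equations (\ref{MTM-4}) reduce at leading order to the $N=1$ bilinear system.

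\textbf{Mass quantization.} We use the MTM conservation law for the mass density. From the bilinear equations (third and fourth in (\ref{MTM-4}), converted back to $(x,t)$), $|u|^2+|v|^2$ is an exact $x$-derivative of a logarithm:
$$|u|^2+|v|^2 = {\rm i}\,\partial_x \log\frac{\bar P_N}{P_N} + (\text{total } t\text{-derivative terms}),$$
up to the factor of $2$ fixed by (\ref{transformation}). This is the standard relation $|u|^2+|v|^2 \propto \partial_x \arg$. Integrating over $\mathbb{R}$ gives
$$\mathcal{M}_N = 2\,[\arg \bar P_N - \arg P_N]_{-\infty}^{\infty}\times(\text{constant}).$$
Each root $x_k$ of $P_N$ contributes a change of $\pm\pi$ to $\arg P_N(x)$ as $x$ runs along $\mathbb{R}$: $-\pi$ if $x_k$ lies in the upper half-plane and $+\pi$ if it lies in the lower half-plane. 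By the root count, the net change is $\pi\bigl(\frac{N(N+1)}{2}-\frac{N(N-1)}{2}\bigr)=\pi N$. With the normalization fixed by the case $N=1$, where the direct computation gives $\mathcal{M}_1 = \int 8/(1+4x^2)\,dx = 4\pi$, we obtain $\mathcal{M}_N=4\pi N$.

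Since the mass is conserved, it suffices to compute it for large $|t|$, where the root count is valid. The decay (\ref{rat-sol-asympt}) guarantees that the integral converges.
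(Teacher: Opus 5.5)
\textbf{The root count has a genuine gap.} Your argument isolates the right issue but does not resolve it, and that issue is the heart of the theorem. Hurwitz's theorem disposes of the $N(N-1)$ non-real roots of $\hat p_N$. For the $N$ roots of $P_N(\cdot,t)$ lying within $\mathcal{O}(1)$ of the real axis, however, everything hinges on the sign of $\operatorname{Im} c_0$, and you leave this as an expectation.

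Your fallback does not work. Boundedness of $(u,v)$ does not force $P_N\neq 0$ on $\mathbb{R}$. Lemma \ref{lem-bounded} shows precisely that a real zero of $P_N$ is a removable singularity, because $Q_N$ and $R_N$ vanish there to the same order, and the paper leaves $P_N\neq 0$ on $\mathbb{R}$ as an open conjecture. Even granting it, non-vanishing on $\mathbb{R}$ says nothing about which half-plane the roots lie in. Likewise, ``locally a shifted single algebraic soliton'' is a restatement of $c_0=-\frac{{\rm i}}{2}+(\text{real})$, not a proof of it.

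The paper supplies the missing computation. Insert $\phi'=-{\rm i}\phi+{\rm i}L\phi$ and $S\bar\phi=\bar\phi-\frac12 L\bar\phi+\mathcal{O}(L^2\bar\phi)$ into (\ref{expression-f}). This produces two extra determinants $|M_N^{(1)}(1)|=-|M_N^{(2)}(1)|$, both proportional to $N^2$ and evaluated in Appendix \ref{app-C}. They give $P_N=a_N^{(0)}(x+\frac{{\rm i}}{2})^{N^2}+\mathcal{O}(x^{N^2-2})$, which is what displaces each real root of $p_N$ by $-\frac{{\rm i}}{2}$ into $\mathbb{C}^-$, exactly as for $N=1$.

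Positivity alone will not replace this step. From $|g|^2+|h|^2=2{\rm i}(f_x\bar f-f\bar f_x)$ you get $\sum_k \operatorname{Im}x_k/|x-x_k|^2\le 0$ for real $x$, summing over the roots $x_k$ of $P_N(\cdot,t)$. This does rule out a near-real root in $\mathbb{C}^+$: its term at $x=\operatorname{Re}x_k$ would dominate the $\mathcal{O}(|t|^{-1/2})$ terms of the other, well-separated roots. It cannot exclude a root on $\mathbb{R}$, however, and such a root would break both the count and the value of the mass.

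\textbf{The mass part is fine once the count is established.} It is the same argument-principle computation as in the paper, with two corrections.

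First, there are no ``total $t$-derivative terms.'' Adding the third and fourth equations of (\ref{bilinear}) gives exactly $|g|^2+|h|^2=2{\rm i}(f_x\bar f-f\bar f_x)$, hence $|u|^2+|v|^2=-4\,\partial_x\arg P_N$ on $\mathbb{R}$. If $t$-derivative terms were present, integrating in $x$ would not remove them.

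Second, a root in $\mathbb{C}^+$ \emph{increases} $\arg P_N$ by $\pi$ along $\mathbb{R}$, while a root in $\mathbb{C}^-$ decreases it; this is the opposite of what you wrote. Your calibration with $N=1$ absorbs the sign, so the final value $4\pi N$ is unaffected.
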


\begin{remark}
	The real roots of the real-valued polynomial $\hat{p}_N$ in the assumption of Theorem \ref{theorem-polyn}	determine the slow dynamics of individual algebraic solitons, positions of which in $x$ change with the slow time scale $\mathcal{O}(\sqrt{|t|})$ as $|t| \to \infty$. The mass quantization rule (\ref{mass-quantization}) in the conclusion of Theorem \ref{theorem-polyn} gives the total mass of $N$ identical algebraic solitons. The number $(2N)$ of real parameters proven in Theorem \ref{theorem-rat-sol} suggests that the arbitrary parameters of the rational solutions correspond to translations of $N$ individual algebraic solitons in the space-time. 
\end{remark}

\begin{remark}
	The only conjecture left open in this work is the assumptions of $N$ real roots of the polynomial $\hat{p}_N$ defined by (\ref{polyno-upsilon}). The conjecture has been checked numerically for $N = 1,\dots,6$. All other conclusions in Theorems \ref{theorem-rat-sol} and \ref{theorem-polyn} are obtained by the analysis of the double-Wronskian solutions of Theorem \ref{theorem-Wronskian} for the matrix $A$ defined in (\ref{choice-A}).
\end{remark}

\subsection{Comparison with previous works}

Rigorous study of rational solutions of integrable systems started with 
the series of works \cite{YY1,YY2,YY3}, where fundamental patterns of rogue waves 
were constructed by using roots of the limiting polynomials related to 
Yablonskii–-Vorob\'ev and Okamoto polynomial hierarchies (see also the book \cite{YY}). The former hierarchy arises for the Zakharov--Shabat spectral problem 
(e.g., for the NLS equation) and the latter one arises for the $3 \times 3$ spectral problem (e.g., for the Manakov system of the coupled NLS equations). 
Rogue waves are algebraically decaying in both space and time variables and their 
existence is related to the modulation instability of the background. 

In the context of the MTM, rogue waves and associated rational solutions 
on the nonzero background were constructed and analyzed in the previous works 
\cite{Degasperis}, \cite{He}, \cite{Ye}, and \cite{Feng2}. However, analysis of algebraic solitons on zero background is much harder for the Kaup--Newell spectral problem, and it has been an open problem for many years. Particular second-order rational solutions for the double algebraic solitons were obtained in \cite{Guo-SAPM-2013,Xu-JPA-2011} for the derivative NLS equation and 
in \cite{Han-2024,LiPelin2025} for the MTM system. A more systematic approach 
on constructing a hierarchy of rational solutions for the derivative NLS equations
and their close relatives was developed recently in \cite{Wang1,Wang2} but this work is the first one, where these solutions are rigorously analyzed in the context of the particular MTM system.  

Among further developments, which can be implied by our work, is the proof 
of the assumption in Theorem \ref{theorem-polyn} that the polynomial 
$\hat{p}_N$ given by (\ref{polyno-upsilon}) admits exactly $N$ real roots. This might be related to the special polynomial hierarchies for the Kaup--Newell spectral problem, which has not been analyzed in \cite{YY1,YY2,YY3} or \cite{Feng2}.

Another interesting direction is to consider the rational solutions in 
Theorem \ref{theorem-rat-sol} in the limit of $N \to \infty$ to 
study the limiting universal pattern of the rational solutions in the MTM system. 
The picture is likely to be very different from what has been studied in integrable equations related to the Zakharov--Shabat spectral problem, e.g. 
in \cite{bilman1,bilman2,bilman3,bilman4}.

\subsection{Methodology and organization of the paper}

Section \ref{sec-2} contains the proof of Theorem \ref{theorem-Wronskian}. We adopt the construction of \cite{Wang1,Wang2} of the double-Wronskian solutions 
and we verify all bilinear equations directly by using the fundamental properties of determinants in Lemmas \ref{lem-1} and \ref{lem-2}. We give examples of how 
the double-Wronskian solutions recover the exponential and algebraic solitons. We also show the relation between the double-Wronskian solutions of the MTM system and the double-Wronskian solutions of the two-component KP hierarchy. 

Section \ref{sec-3} gives the proof of Theorem \ref{theorem-rat-sol} with Lemmas \ref{lem-S}, \ref{lem-param}, \ref{lem-leading-order}, and \ref{lem-bounded}. Although 
the construction of the rational solutions is straightforward by taking the 
matrix $A$ in the Jordan block form (\ref{choice-A}) for eigenvalues $\zeta = \pm {\rm i}$ of algebraic multiplicity $N$, we rigorously prove in Lemma \ref{lem-leading-order} that the coefficients at the highest powers of the coresponding polynomials $P_N$, $Q_N$, and $R_N$ given by certain numerical determinants are nonzero. The key computations rely on an inductive method which reduces the numerical determinants to a factorized form by using successive two-column eliminations, described in Appendices \ref{app-A} and \ref{app-B}.

Section \ref{sec-4} presents the proof of Theorem \ref{theorem-polyn} with Lemmas \ref{lemma-polyn} and \ref{lemma-mass}. The corresponding analysis is based on the leading-order representation 
of the polynomial $P_N$ in the form 
\begin{equation}
	\label{polynom-dom}
P_N(x,t) = a_N^{(0)} \left( x + \frac{{\rm i}}{2} \right)^{N^2} + \mathcal{O}(x^{N^2-2}),
\end{equation}
which is proven in Appendix \ref{app-C} with a modification of the two-column elimination algorithm. The coefficient $a_N^{(0)}$ in (\ref{polyno-principal}), (\ref{polyno-upsilon}), and (\ref{polynom-dom}) is the same. This allows us to relate the assumption of Theorem \ref{theorem-polyn} on the real roots of $\hat{p}_N$ to the numbers of complex roots 
of $P_N(\cdot,t)$ in $\mathbb{C}^+$ and $\mathbb{C}^-$ for large $|t|$. From this information and the conservation of the mass $M(u,v)$ in time $t \in \mathbb{R}$, the quantization formula (\ref{mass-quantization}) is obtained by using an argument principle, similar to the case $N = 2$ considered in \cite{Han-2024}.

Section \ref{sec-5} presents examples of the rational solutions of Theorem \ref{theorem-rat-sol} for $N = 1,\dots,6$, from which the assumptions of Theorem \ref{theorem-polyn} are verified. We also display 
the slow scattering dynamics of $N$ identical algebraic solitons by 
using the solution surfaces computed from (\ref{rat-sol}) and by comparing the dynamics with roots of $p_N$ given by (\ref{polyno-principal}).

\section{Proof of Theorem \ref{theorem-Wronskian}}
\label{sec-2}

Let $A \in \mathbb{C}^{2N \times 2N}$ be an invertible matrix for a fixed $N \in \mathbb{N}$. We define two vectors $\phi, \psi \in \mathbb{C}^{2N}$ from solutions of the linear equations (\ref{eigen}). Furthermore, we impose the factorization (\ref{rel-1}) with an invertible matrix $S \in \mathbb{C}^{2N \times 2N}$ and the relation (\ref{rel-2}), namely 
$\psi = S \bar{\phi}$. 

For the proof of Theorem \ref{theorem-Wronskian}, we recall the following two lemmas. Lemma \ref{lem-1} is equivalent to Liouville's theorem for a system of linear differential equations. Lemma \ref{lem-2} is equivalent to the Pl\"ucker relation for determinants. 

\begin{lemma}
	\label{lem-1} Let $A \in \mathbb{C}^{n \times n}$ and $\{ x_1, x_2, \dots, x_n \} \in \mathbb{C}^n$ for some $n \in \mathbb{N}$. Then 
	\begin{equation}
	\label{trace-identity}
	{\rm tr}(A) |x_1,x_2,\dots,x_n| = |A x_1, x_2, \dots, x_n| + 
	 |x_1, A x_2, \dots, x_n| + \dots + |x_1,x_2,\dots, Ax_n|.
	\end{equation}
\end{lemma}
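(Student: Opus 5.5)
The plan is to prove the identity first for $A$ in a convenient form and then remove the restriction by continuity. The cleanest route uses multilinearity of the determinant together with the Leibniz (permutation) expansion. Write $X = (x_1,\dots,x_n)$ for the matrix with the given columns. The right-hand side of (\ref{trace-identity}) is
\[
D_A(X) := \sum_{k=1}^n |x_1,\dots,A x_k,\dots,x_n| .
\]

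The first step is to observe the structure of $D_A$. It is multilinear and alternating in the columns of $X$. Alternation holds because swapping two columns $x_i, x_j$ permutes the $n$ terms among themselves while also swapping two columns inside each term, so the whole sum changes sign. By uniqueness of alternating multilinear forms on $\mathbb{C}^n$, $D_A(X) = c(A)\,|X|$ for a scalar $c(A)$ that does not depend on $X$.

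To find $c(A)$, take $X = I$, i.e.\ $x_k = e_k$. The $k$-th term is then the determinant of the identity matrix with its $k$-th column replaced by $A e_k$. Expanding along that column, only the diagonal entry contributes, and the term equals $a_{kk}$. Summing gives $c(A) = \sum_k a_{kk} = {\rm tr}(A)$, which proves (\ref{trace-identity}) for every $X$, including singular ones, where both sides vanish consistently.

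An alternative, which explains the name "Liouville's theorem", is to differentiate $|e^{sA}x_1,\dots,e^{sA}x_n| = e^{s\,{\rm tr}(A)}|X|$ at $s=0$ using the product rule for determinants. However, that route relies on $\det e^{sA} = e^{s\,{\rm tr} A}$, which is essentially the same fact. I would therefore present the multilinear-form argument as the main proof.

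There is no real obstacle in this lemma. The only point needing care is the alternation check, namely confirming that the swapped terms pair up correctly. This is a routine verification.
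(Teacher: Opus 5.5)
Your argument is correct and complete. The right-hand side $D_A(X)$ is multilinear and alternating; your pairing of the $k=i$ and $k=j$ terms under a column swap is exactly right. Hence $D_A = c(A)\det$, and evaluating at $X=I$ gives $c(A)=\sum_k a_{kk}={\rm tr}(A)$.

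The paper gives no proof of this lemma. It simply recalls it as equivalent to Liouville's theorem, i.e.\ the formula $\frac{d}{ds}\det X(s)={\rm tr}(A)\det X(s)$ for a fundamental matrix of $X'=AX$. There the product rule for determinants turns $\frac{d}{ds}\det X(s)$ into precisely your $D_A(X)$.

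The Liouville framing matches how the lemma is used in Section 2. There the columns $\phi^{(j)},\psi^{(j)}$ satisfy $\partial_\xi\phi={\rm i}A\phi$ and $\partial_\xi\psi=-{\rm i}A\psi$, with $A^{-1}$ playing the same role for $\partial_\eta$. So applying $A$ or $A^{-1}$ to one column amounts to differentiating that column in $\xi$ or $\eta$, up to a factor $\pm{\rm i}$.

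Your route is purely algebraic and self-contained. It needs no ODE theory and no appeal to $\det e^{sA}=e^{s\,{\rm tr}A}$, which, as you note, is essentially the same fact and is usually proved by this very kind of computation.

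One small cleanup: your opening sentence promises a reduction to a special form of $A$ followed by a continuity argument, and it mentions the Leibniz expansion. The proof you actually give uses neither, so that sentence should be removed.
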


\begin{lemma}
	\label{lem-2} Let $M \in \mathbb{C}^{n \times n-2}$ and $\{ a,b,c,d \} \in \mathbb{C}^n$ for some $n \in \mathbb{N}$. The Pl\"ucker relation is 
	\begin{equation}
	\label{det-identity}
|M,a,b| |M,c,d| - |M,a,c| |M,b,d| + |M,a,d| |M,b,c| = 0.
	\end{equation}
\end{lemma}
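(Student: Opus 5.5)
We would prove the Plücker relation (\ref{det-identity}) by a Laplace-expansion argument applied to an auxiliary $(2n)\times(2n)$ determinant that can be seen to vanish for structural reasons. We may assume $n \geq 2$. The columns of $M$ are denoted $m_1,\dots,m_{n-2}$.

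\emph{Step 1: the auxiliary matrix and its vanishing.} Consider the block matrix
\[
\mathcal{D} = \left| \begin{array}{cccccc} M & 0 & a & b & c & d \\ 0 & M & a & b & c & d \end{array} \right|,
\]
which has $2n$ rows and $2(n-2)+4 = 2n$ columns. Subtracting the first block of $n$ rows from the second block does not change the determinant. This gives rows $(M,0,a,b,c,d)$ and $(-M,M,0,0,0,0)$. Adding each of the first $n-2$ columns to the corresponding column of the second block then produces a lower block of $n$ rows whose only nonzero entries lie in the first $n-2$ columns, namely the block $-M$. These $n$ rows are therefore supported on $n-2$ columns, so they are linearly dependent and $\mathcal{D}=0$.

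\emph{Step 2: Laplace expansion of the original $\mathcal{D}$.} Expand $\mathcal{D}$ along its first $n$ rows, choosing $n$ of the $2n$ columns for the top minor. The top rows vanish on the second block of $n-2$ columns. Any admissible choice must therefore avoid that block, i.e. it consists of $n$ columns taken from the first block (columns of $M$) together with $a,b,c,d$. Similarly, the complementary bottom minor must contain all columns of the second $M$ block, or it vanishes (the bottom rows are zero on the first block). Hence the top minor must contain all of $m_1,\dots,m_{n-2}$ together with exactly two of $\{a,b,c,d\}$, and the bottom minor then consists of the second copy of $M$ together with the remaining two vectors. This produces six terms, $|M,x,y|\,|M,z,w|$ with $\{x,y\}\cup\{z,w\}=\{a,b,c,d\}$. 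They pair up into three unordered partitions, each appearing twice.

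\emph{Step 3: signs.} The only real bookkeeping is the sign of each Laplace term, $(-1)^{\sum \text{rows} + \sum \text{cols}}$, after accounting for the reordering of columns into the form $|M,x,y|$. We would compute this sign relative to the choice $\{a,b\}$. Moving the columns of the second $M$ block past the chosen pair contributes a factor $(-1)^{2(n-2)}=1$, so the relevant sign is just the parity of the permutation $(x,y,z,w)$ of $(a,b,c,d)$. The two appearances of each partition come with equal signs: swapping the pairs is the even permutation $(13)(24)$. This yields
\[
0=\mathcal{D} = 2\big(|M,a,b||M,c,d| - |M,a,c||M,b,d| + |M,a,d||M,b,c|\big),
\]
with signs $+,-,+$ for the permutations $(abcd)$, $(acbd)$, $(adbc)$. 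Dividing by $2$ gives (\ref{det-identity}).

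We expect the sign verification in Step 3 to be the only delicate point. As a check on the signs, note that the left-hand side of (\ref{det-identity}) is multilinear and alternating in $(a,b,c,d)$. It therefore vanishes whenever two of the vectors coincide, which is consistent with the signs $+,-,+$ above.
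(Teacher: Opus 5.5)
Your proof is correct. Note that the paper does not prove Lemma~\ref{lem-2} at all; it simply recalls (\ref{det-identity}) as the classical Pl\"ucker relation and uses it as a black box. Your route is therefore the standard self-contained derivation from the bilinear-method literature, supplied where the paper only cites.

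The two key steps check out. First, the $2n\times 2n$ block determinant with row blocks $(M,0,a,b,c,d)$ and $(0,M,a,b,c,d)$ vanishes after one block row operation and one block column operation. Second, its Laplace expansion along the first $n$ rows reduces to six terms, which pair up under the even permutation $(13)(24)$ and give twice the left-hand side of (\ref{det-identity}).

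Your sign rule is also right. It rests on the fact that, for $I=\{1,\dots,n\}$, the Laplace sign $(-1)^{\sum I+\sum J}$ equals the sign of the permutation that lists $J$ and then $J^c$ in increasing order. Computing directly with $J=\{1,\dots,n-2\}\cup\{p,q\}$ gives $(-1)^{n^2-n+1+p+q}=(-1)^{1+p+q}$. This yields the signs $+,-,+,+,-,+$ for the pairs $ab,ac,ad,bc,bd,cd$, as you claim.

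One sentence in Step~2 is slightly muddled. $J^c$ contains the whole second $M$ block automatically, because $J$ must avoid it. It is the vanishing of the bottom rows on the first block that forces $J$ to contain all of $m_1,\dots,m_{n-2}$. Your conclusion is nevertheless correct.

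The trade-off is brevity versus self-containedness. The paper's citation costs nothing. Your derivation makes the sign conventions explicit, which is useful because the paper later applies (\ref{det-identity}) with $a,b,c,d$ in interior columns and only asserts that the signs survive the column rearrangement.
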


By using notations in Remark \ref{rem-tau}, the following lemma establishes the complex-conjugate symmetry for the tau-function $\tau_{n,m}^l$ defined by (\ref{double-Wr}). Since $C = (-{\rm i})^N/|S|$, this verifies the complex-conjugate symmetry of the double-Wronskian solutions (\ref{Wronskian}) rewritten as (\ref{Wronskian-tau}).

\begin{lemma}
 \label{lem-3}    
Let $\tau^l_{n,m}$ be defined by (\ref{double-Wr}) with  $\psi = S \bar{\phi}$. Then, $\tau^l_{n,m}$ and $\tau^{-l}_{m+1,n}$ are related by the complex-conjugate symmetry:
\begin{align}
\tau^l_{n,m} = \frac{{\rm i}^{N-l}}{|S|} \overline{\tau^{-l}_{m+1,n}}\,. 
\label{complex-red}
\end{align}
\end{lemma}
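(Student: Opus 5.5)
We prove (\ref{complex-red}) by directly conjugating the determinant $\tau^{-l}_{m+1,n}$ and using the relations (\ref{rel-1}), (\ref{rel-2}) and the linear equations (\ref{eigen}) to convert derivatives of $\bar\phi$ and $\bar\psi$ into derivatives of $\psi$ and $\phi$.

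The first step records how $\xi$-derivatives transform under conjugation. From $\psi = S\bar\phi$ and $\partial_\xi\psi = -{\rm i}A\psi$ we get $\psi^{(k)} = S\bar\phi^{(k)}$ for all $k$, hence $\bar\phi^{(k)} = S^{-1}\psi^{(k)}$. Conjugating $\psi = S\bar\phi$ gives $\bar\psi = \bar S\phi$. Since $\phi' = {\rm i}A\phi = -{\rm i}S\bar S\phi$ by (\ref{rel-1}), we obtain
\[
\phi^{(k+1)} = -{\rm i}S\bar S\phi^{(k)}, \qquad\text{so}\qquad S\bar\psi^{(k)} = S\bar S\phi^{(k)} = {\rm i}\phi^{(k+1)} .
\]
(This is consistent with the $\xi$-derivative because $\bar S$ commutes appropriately: differentiating $\bar\psi=\bar S\phi$ simply differentiates $\phi$.) Thus $\bar\psi^{(k)} = {\rm i}S^{-1}\phi^{(k+1)}$.

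The second step applies these identities columnwise. By definition,
\[
\overline{\tau^{-l}_{m+1,n}} = |\bar\phi^{(m+1)},\dots,\bar\phi^{(m+N-l)};\bar\psi^{(n)},\dots,\bar\psi^{(n+N+l-1)}|,
\]
which has $N-l$ columns of the first type and $N+l$ of the second. Substituting the identities from the first step gives
\[
\overline{\tau^{-l}_{m+1,n}} = |S^{-1}|\,{\rm i}^{N+l}\,|\psi^{(m+1)},\dots,\psi^{(m+N-l)};\phi^{(n+1)},\dots,\phi^{(n+N+l)}|.
\]

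The third step reorders the columns. Moving the $N+l$ columns of $\phi$-derivatives in front of the $N-l$ columns of $\psi$-derivatives costs a sign $(-1)^{(N+l)(N-l)} = (-1)^{N^2-l^2} = (-1)^{N-l}$. The resulting determinant is not exactly $\tau^l_{n,m}$, however: its index pattern is shifted, $\phi$ starting at $n+1$ and $\psi$ at $m+1$. I expect this index bookkeeping to be the main obstacle, and I would resolve it by pulling the derivative shift back through the relation $\phi^{(k+1)} = -{\rm i}S\bar S\phi^{(k)}$, or by rechecking the conjugation rule for $\psi$ against the paper's convention $\tau^{-l}_{m+1,n}$ so that the indices line up.

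Concretely, the intended bookkeeping works as follows. The $\phi$ indices $n+1,\dots$ arise from $\bar\psi^{(k)}$ with $k = n,\dots$, and the $\psi$ indices come from $\bar\phi^{(m+1)},\dots$, which pairs with $\tau$ having $\psi$ starting at $m+1$. One of the conjugation rules must therefore be used in the reverse direction, $\bar\phi^{(k)} = {\rm i}^{-1}\bar S^{-1}$-type, to absorb the shift. I would settle this by verifying the formula for $N=1$, $l=0$ against the explicit $f,\bar f$ in (\ref{Wronskian-tau}), and then fixing the precise choice of rules.

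Finally, I would collect the constants $|S|^{-1}$ and the powers of ${\rm i}$, using $(-1)^{N-l}{\rm i}^{N+l} = {\rm i}^{3N-l}\cdot(\pm)$, and simplify to match the prefactor ${\rm i}^{N-l}/|S|$ in (\ref{complex-red}). Because the exponents are tied to the column counts $N\pm l$, this reduces to parity arithmetic.
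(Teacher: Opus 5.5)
Your Steps 1--3 are correct, but they only establish
\[
\overline{\tau^{-l}_{m+1,n}} = \frac{(-1)^{N-l}\,{\rm i}^{N+l}}{|S|}\,\tau^{l}_{n+1,m+1}.
\]
The proposal stops there. The passage from $\tau^l_{n+1,m+1}$ to $\tau^l_{n,m}$ is left as an ``obstacle'' to be settled by checking $N=1$, $l=0$, so the lemma is not yet proved. The extra shift comes from factoring $S^{-1}$ out of every column. That choice keeps the shift already built into $\tau^{-l}_{m+1,n}$ ($\bar\phi^{(m+1)}\mapsto S^{-1}\psi^{(m+1)}$) and adds a second one ($\bar\psi^{(n)}\mapsto{\rm i}S^{-1}\phi^{(n+1)}$). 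Your suggested repair, pulling the $\phi$-columns back through $\phi^{(k+1)}=-{\rm i}S\bar S\phi^{(k)}$, does not work by itself: a matrix can be factored out of a determinant only if it multiplies all $2N$ columns, so the $\psi$-shift must be undone at the same time. Your closing claim that only parity arithmetic remains is also not right. Your prefactor has the determinant of $S$ in the denominator, whereas (\ref{complex-red}) solved for $\overline{\tau^{-l}_{m+1,n}}$ has it in the numerator. The discrepancy is the factor $|A|=|-S\bar S|=|S|\,\overline{|S|}$, which must be produced by an actual identity.

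The missing step is the shift identity $\tau^l_{n+1,m+1}=(-1)^l|A|\,\tau^l_{n,m}$. It follows by substituting $\phi^{(k+1)}={\rm i}A\phi^{(k)}$ and $\psi^{(k+1)}=-{\rm i}A\psi^{(k)}$ into \emph{all} columns; the paper records it at the end of Section~\ref{sec-2}. With it your formula becomes $\overline{\tau^{-l}_{m+1,n}}=(-1)^N{\rm i}^{N+l}\,\overline{|S|}\,\tau^l_{n,m}=(-{\rm i})^{N-l}\,\overline{|S|}\,\tau^l_{n,m}$. More simply, you can avoid the double shift by factoring $\bar S$ instead of $S^{-1}$. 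Using $\bar\psi^{(k)}=\bar S\phi^{(k)}$ and $\bar\phi^{(k+1)}=-{\rm i}\bar A\bar\phi^{(k)}={\rm i}\bar S S\bar\phi^{(k)}={\rm i}\bar S\psi^{(k)}$ gives the result in one line.

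The paper's proof runs the other way: it conjugates $\tau^l_{n,m}$ and uses $\bar\psi^{(k)}=\bar S\phi^{(k)}$. It then multiplies every column by $S$, at the cost of a factor $1/|S|$, and applies $S\bar S\phi^{(k)}=-A\phi^{(k)}={\rm i}\phi^{(k+1)}$ and $S\bar\phi^{(k)}=\psi^{(k)}$. The single shift then lands exactly on $\tau^{-l}_{m+1,n}$, giving $\overline{\tau^l_{n,m}}=\frac{(-{\rm i})^{N-l}}{|S|}\tau^{-l}_{m+1,n}$. Any correct route produces $\overline{|S|}$ where (\ref{complex-red}) writes $|S|$. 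The two agree when $|S|$ is real, as with $|S|=1$ in all applications in the paper.
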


\begin{proof}
    By using (\ref{rel-2}), we have 
\begin{align*}
&\overline{\tau^l_{n,m}} = |\bar{\phi}^{(n)},\bar{\phi}^{(n+1)},\dots,\bar{\phi}^{(n+N+l-1)};\bar{\psi}^{(m)},\bar{\psi}^{(m+1)},\dots,\bar{\psi}^{(m+N-l-1)}|\\
&=(-1)^{N-l} 
|\bar{\psi}^{(m)},\bar{\psi}^{(m+1)},\dots,\bar{\psi}^{(m+N-l-1)}; \bar{\phi}^{(n)},\bar{\phi}^{(n+1)},\dots,\bar{\phi}^{(n+N+l-1)}| \\
&= (-1)^{N-l} 
|\bar{S}{\phi}^{(m)},\bar{S}{\phi}^{(m+1)},\dots,\bar{S}{\phi}^{(m+N-l-1)}; \bar{\phi}^{(n)},\bar{\phi}^{(n+1)},\dots,\bar{\phi}^{(n+N+l-1)}|\\
&=\frac{(-1)^{N-l}}{|S|} 
|-A{\phi}^{(m)},-A{\phi}^{(m+1)},\dots,-A{\phi}^{(m+N-l-1)}; S\bar{\phi}^{(n)},S\bar{\phi}^{(n+1)},\dots,
S\bar{\phi}^{(n+N+l-1)}| \\
&=\frac{(-{\rm i})^{N-l}}{|S|} 
|{\phi}^{(m+1)},{\phi}^{(m+2)},\dots,{\phi}^{(m+N-l)}; {\psi}^{(n)},{\psi}^{(n+1)},\dots,
{\psi}^{(n+N+l-1)}|\\
&= \frac{(-{\rm i})^{N-l}}{|S|}\tau^{-l}_{m+1,n}\,,
\end{align*}
which completes the proof of (\ref{complex-red}).
\end{proof}

It remains to prove that the double-Wronskian solutions (\ref{Wronskian}) satisfy the bilinear equations (\ref{MTM-4}). These four equations are verified next, where we recall that the prime stands for the derivative of vectors $\phi$ and $\psi$ with respect to $\xi$.

\vspace{0.25cm}

\underline{Validity of ${\rm i} D_\eta(\bar{f} \cdot f) - 2g\bar{g}=0$.} 

By using expression for $f$ and $\bar{f}$ in (\ref{Wronskian}), we get 
\begin{align*}
{\rm i} D_\eta(\bar{f} \cdot f) &= {\rm i} (\bar{f}_{\eta} f - \bar{f} f_{\eta}) \\
&= {\rm i} C |\widetilde{N} ; \widetilde{N}| \left( |0,\overline{N}; \widehat{N-1}| + |\widetilde{N};-1,\widetilde{N-1}| \right) - {\rm i} C |\widetilde{N} ; \widehat{N-1}| \left( |0,\overline{N}; \widetilde{N}| + |\widetilde{N};0,\overline{N}| \right) \\
&= 2{\rm i} C |\widetilde{N} ; \widetilde{N}| |0,\overline{N}; \widehat{N-1}| - 2 {\rm i} C |\widetilde{N} ; \widehat{N-1}|  |0,\overline{N}; \widetilde{N}|. 
\end{align*}
To get the second equality, we have used 
\begin{align*}
{\rm tr}(A^{-1})  |\widetilde{N} ; \widehat{N-1}| &= {\rm i} \left( |0,\overline{N}; \widehat{N-1}| - |\widetilde{N};-1,\widetilde{N-1}| \right), \\
{\rm tr}(A^{-1}) |\widetilde{N} ; \widetilde{N}| &= {\rm i} \left( |0,\overline{N}; \widetilde{N}| - |\widetilde{N};0,\overline{N}| \right),
\end{align*}
which follow from the identity (\ref{trace-identity}) in Lemma \ref{lem-1} 
with $A^{-1}$. Combining with $-2g\bar{g}$ from (\ref{Wronskian}), we get 
\begin{align*}
{\rm i} D_\eta(\bar{f} \cdot f)- 2g\bar{g} = 2 {\rm i} C \left( 
 |\widetilde{N} ; \widetilde{N}| |0,\overline{N}; \widehat{N-1}| - |\widetilde{N} ; \widehat{N-1}|  |0,\overline{N}; \widetilde{N}| - 
|\widehat{N} ; \widetilde{N-1}| |\overline{N} ; \widehat{N}| \right).
\end{align*}
To show that the expression in brackets is identically zero, we use identity (\ref{det-identity}) of Lemma \ref{lem-2} with $M := |\overline{N};\widetilde{N-1}|$, $a = \phi'$ in the first column, 
$b = \psi^{(N)}$ in the last column, $c = \phi$ in the first column, and $d = \psi$ in the $(N+1)$-th column. The identity (\ref{det-identity}) holds after rearrangement of the columns since the order of vector $a,b,c,d$ appear to be the same in each determinant.

\vspace{0.25cm}

\underline{Validity of ${\rm i} D_\xi (g \cdot f)+2h \bar{f}=0$.}\\
By using expression for $g$ and $f$ in \eqref{Wronskian}, we have
\begin{align*}
{\rm i} D_\xi(g\cdot f)=&{\rm i}(g_\xi f-gf_\xi)\\
=& {\rm i}|\widetilde{N} ; \widehat{N-1}|\left(|\widehat{N-1},N+1;\widetilde{N-1}|+|\widehat{N};\widetilde{N-2},N|\right)\\
&-{\rm i}|\widehat{N} ; \widetilde{N-1}|\left(|\widetilde{N-1},N+1;\widehat{N-1}|+|\widetilde{N};\widehat{N-2},N|\right)\\
=&2{\rm i} \left(|\widetilde{N};\widehat{N-1}| |\widehat{N};\widetilde{N-2},N| - |\widehat{N};\widetilde{N-1}| |\widetilde{N};\widehat{N-2},N| \right).
\end{align*}
To get the second equality, we have used 
\begin{align*}
{\rm tr}(A)  |\widehat{N} ; \widetilde{N-1}| &=- {\rm i} \left( |\widehat{N-1},N+1;\widetilde{N-1}|-|\widehat{N};\widetilde{N-2},N| \right), \\
{\rm tr}(A) |\widetilde{N} ; \widehat{N-1}| &=- {\rm i} \left( |\widetilde{N-1},N+1;\widehat{N-1}|-|\widetilde{N};\widehat{N-2},N| \right),
\end{align*}
which follows from the identity (\ref{trace-identity}). 
Together with $2h\bar{f}$, we have
\begin{align*}
& {\rm i} D_\xi (g \cdot f)+2h \bar{f} \\
& \qquad =2\mathrm{i}\left(|\widetilde{N};\widehat{N-1}| |\widehat{N};\widetilde{N-2},N| - |\widetilde{N};\widehat{N-2},N| |\widehat{N};\widetilde{N-1}| + |\widehat{N};\widehat{N-2}| |\widetilde{N};\widetilde{N}|  \right).
\end{align*}
To show that the expression in brackets is identically zero, we use identity (\ref{det-identity}) with $M:=(\widetilde{N};\widetilde{N-2})$, $a=\psi$ in the $(N+1)$-th column, $b=\psi^{(N-1)}$ in the last column, $c=\phi$ in the first column, and $d=\psi^{(N)}$ in the last column.

\vspace{0.25cm}

\underline{Validity of $\mathrm{i} D_\eta(h \cdot \bar{f})+2g f=0$.}\\
By using expression for $h$ and $\bar{f}$ in \eqref{Wronskian}, we obtain
\begin{align*}
\mathrm{i}D_\eta(h\cdot \bar{f})&=\mathrm{i}(h_\eta\bar{f}-h\bar{f}_\eta)\\
&=|\widetilde{N} ; \widetilde{N}|(|-1, \widetilde{N} ; \widehat{N-2}|+|\widehat{N} ;-1, \widetilde{N-2}|)-|\widehat{N} ; \widehat{N-2}|(|0, \overline{N} ; \widetilde{N}|+|\widetilde{N} ; 0, \overline{N}|)\\
&=2|\widetilde{N} ; \widetilde{N}||\widehat{N} ;-1, \widetilde{N-2}|-2|\widehat{N} ; \widehat{N-2}||\widetilde{N} ; 0, \overline{N}|
\end{align*}
To get the second equality, we have used 
\begin{align*}
{\rm tr}(A^{-1})  |\widehat{N} ; \widehat{N-2}| &= {\rm i} \left( |-1, \widetilde{N} ; \widehat{N-2}|-|\widehat{N} ;-1, \widetilde{N-2}|)\right), \\
{\rm tr}(A^{-1}) |\widetilde{N} ; \widetilde{N}| &= {\rm i} \left( |0,\overline{N}; \widetilde{N}| - |\widetilde{N};0,\overline{N}| \right),
\end{align*}
which follow from the identity (\ref{trace-identity}) with $A^{-1}$.
Combining with $2gf$, we get
\begin{align*}
\mathrm{i} D_\eta(h\cdot \bar{f})+2gf = 2 \left(|\widehat{N} ;-1, \widetilde{N-2}||\widetilde{N} ; \widetilde{N}| - |\widehat{N} ; \widehat{N-2}| |\widetilde{N} ; 0, \overline{N}| + |\widetilde{N} ; \widehat{N-1}| |\widehat{N} ; \widetilde{N-1}|\right).
\end{align*}
To show that the expression in brackets is identically zero, we 
can not use identity (\ref{det-identity}) directly. However, we can write 
\begin{align*}
|\widehat{N} ;-1, \widetilde{N-2}| &= |\partial_{\xi}^{-1} \phi', \partial_{\xi}^{-1} \phi'',\dots,\partial_{\xi}^{-1} \phi^{(N+1)};\partial_{\xi}^{-1} \psi, \partial_{\xi}^{-1} \psi'',\dots,\partial_{\xi}^{-1} \psi^{(N-1)}| \\
&= |-\partial_{\eta} \phi', -\partial_{\eta} \phi'',\dots,-\partial_{\eta} \phi^{(N+1)};-\partial_{\eta} \psi, -\partial_{\eta} \psi'',\dots,-\partial_{\eta} \psi^{(N-1)}| \\
&= |-{\rm i} A^{-1} \phi', -{\rm i} A^{-1} \phi'',\dots,-{\rm i} A^{-1} \phi^{(N+1)};{\rm i} A^{-1} \psi, {\rm i} A^{-1} \psi'',\dots,{\rm i} A^{-1} \psi^{(N-1)}| \\
&= (-{\rm i})^{N+1} {\rm i}^{N-1} |A^{-1}| |\phi', \phi'',\dots,\phi^{(N+1)};\psi,\psi'',\dots,\psi^{(N-1)}|  \\
&= -|A^{-1}| |\widetilde{N+1};0,\overline{N-1}|
\end{align*}
and similarly,
\begin{align*}
|\widehat{N} ; \widehat{N-2}| &= -|A^{-1}| |\widetilde{N+1} ; \widetilde{N-1}|, \\
|\widehat{N} ; \widetilde{N-1}| &= -|A^{-1}| |\widetilde{N+1} ; \overline{N}|.
\end{align*}
Hence, we rewrite the formula in the equivalent way:
\begin{align*}
& \qquad \mathrm{i} D_\eta(h\cdot \bar{f})+2gf \\
& = -2 |A^{-1}|
\left( |\widetilde{N+1} ; 0, \overline{N-1}||\widetilde{N} ; \widetilde{N}|-|\widetilde{N+1} ; \widetilde{N-1}||\widetilde{N} ; 0, \overline{N}|+|\widetilde{N+1} ; \overline{N}||\widetilde{N} ; \widehat{N-1}| \right).
\end{align*}
We can now use identity (\ref{det-identity}) with $M:=(\widetilde{N},\overline{N-1})$, $a=\phi^{(N+1)}$ in the $(N+1)$-th column, $b=\psi$ in the $(N+2)$-th column, $c=\psi^{\prime}$ in the $(N+1)$-th column, and  $d=\psi^{(N)}$ in the last column. This yields zero in the brackets. 

\vspace{0.25cm}

\underline{Validity of $\mathrm{i} D_\xi (f \cdot \bar{f})-2 h\bar{h}=0$.}\\
By using expression for $f$ and $\bar{f}$ in \eqref{Wronskian}, we find
\begin{align*}
\mathrm{i} D_\xi (f \cdot \bar{f})=&\mathrm{i}(f_\xi\bar{f}-f\bar{f}_\xi)\\
=&\mathrm{i}C|\widetilde{N} ; \widetilde{N}|\left(|\widetilde{N-1},N+1;\widehat{N-1}|+|\widetilde{N};\widehat{N-2},N|\right)\\
&-\mathrm{i}C|\widetilde{N};\widehat{N-1}|\left(|\widetilde{N-1},N+1;\widetilde{N}|+|\widetilde{N},\widetilde{N-1},N+1|\right)\\
=&2\mathrm{i}C\left(|\widetilde{N};\widetilde{N}||\widetilde{N-1},N+1;\widehat{N-1}|-|\widetilde{N};\widehat{N-1}||\widetilde{N-1},N+1;\widetilde{N}|\right).
\end{align*}
To get the second equality, we have used 
\begin{align*}
{\rm tr}(A)  |\widetilde{N} ; \widetilde{N}| &=- \mathrm{i} \left( |\widetilde{N-1},N+1;\widetilde{N}|-|\widetilde{N};\widetilde{N-1},N+1| \right), \\
{\rm tr}(A) |\widetilde{N} ; \widehat{N-1}| &=- \mathrm{i} \left( |\widetilde{N-1},N+1;\widehat{N-1}|-|\widetilde{N};\widehat{N-2},N| \right),
\end{align*}
which follow from the identity (\ref{trace-identity}). 
Together with the term $-2h\bar{h}$, we have
\begin{align*}
\mathrm{i} D_\xi (f \cdot \bar{f})-2 h\bar{h} &= 2\mathrm{i} C\left(|\widetilde{N};\widetilde{N}||\widetilde{N-1},N+1;\widehat{N-1}|-|\widetilde{N};\widehat{N-1}||\widetilde{N-1},N+1;\widetilde{N}|\right) \\
& \quad -2\mathrm{i}\bar{C}^{-1}|\widehat{N};\widehat{N-2}||\widetilde{N-1};\widehat{N}|
\end{align*}
In order to use the identity (\ref{det-identity}), we need to rewrite the last term in the equivalent way. Since 
\begin{align*}
|\widehat{N};\widehat{N-2}|  &= |\phi, \phi', \dots, \phi^{(N)};\psi, \psi', \dots, \psi^{(N-2)}| \\
&= |A^{-1}| (-{\rm i})^{N+1} {\rm i}^{N-1} |{\rm i}A \phi, {\rm i}A\phi', \dots, {\rm i}A \phi^{(N)}; -iA\psi, -{\rm i}A \psi', \dots, -{\rm i}A \psi^{(N-2)}| \\
&= - |A^{-1}| |\widetilde{N+1};\widetilde{N-1}|,
\end{align*}
we use $|A^{-1}| = (|S| |\bar{S}|)^{-1}$ and $C = (-{\rm i})^N/|S|$ to rewrite 
\begin{align*}
& \mathrm{i} D_\xi (f \cdot \bar{f})-2 h\bar{h} \\ &  = 2\mathrm{i} C\left(|\widetilde{N};\widetilde{N}||\widetilde{N-1},N+1;\widehat{N-1}|-|\widetilde{N};\widehat{N-1}||\widetilde{N-1},N+1;\widetilde{N}| + |\widetilde{N+1};\widetilde{N-1}||\widetilde{N-1};\widehat{N}| \right).
\end{align*}
We can now use identity (\ref{det-identity}) with  $M:=(\widetilde{N-1};\widetilde{N-1})$, $a:=\phi^{(N)}$ in the $N$-th column, $b:=\psi^{(N)}$ in the last column, $c:=\phi^{(N+1)}$ in the $N$-th column, and $d:=\psi$ in the $(N+1)$-th column. This yields zero in the brackets. 

\vspace{0.25cm}

All four bilinear equations (\ref{MTM-4}) are satisfied. The proof of Theorem \ref{theorem-Wronskian} is complete. 

\subsection{Examples of solitons via double-Wronskian solutions}

For $N = 1$, we define in (\ref{rel-1}):
\begin{equation}
\label{1-exp}
        A =
        \begin{bmatrix}
            e^{{\rm i} \gamma} & 0\\
            0 & e^{-{\rm i}\gamma}
        \end{bmatrix}
        \quad \rm{and} \quad
        S = 
        \begin{bmatrix}
            0 & e^{\frac{{\rm i} \gamma}{2}}\\
            -e^{\frac{-{\rm i} \gamma}{2}} & 0
        \end{bmatrix},
\end{equation}
where $\gamma \in (0,\pi)$ is an arbitrary parameter. Using (\ref{eigen}) and (\ref{rel-2}), we obtain
\begin{equation}
\label{1-exp-eigen}
        \phi =  \begin{bmatrix}
        c_1 e^{{\rm i} e^{{\rm i}\gamma} \xi + {\rm i} e^{-{\rm i}\gamma} \eta} \\
        c_2 e^{{\rm i} e^{-{\rm i}\gamma} \xi + {\rm i} e^{{\rm i}\gamma} \eta}
\end{bmatrix}
 \quad \rm{and} \quad
         \psi =  \begin{bmatrix}
 \bar{c}_2 e^{\frac{{\rm i}\gamma}{2} - {\rm i} e^{{\rm i} \gamma} \xi -{\rm i} e^{-{\rm i}\gamma} \eta} \\
 - \bar{c}_1 e^{-\frac{{\rm i}\gamma}{2} - {\rm i} e^{-{\rm i}\gamma} \xi - {\rm i} e^{{\rm i}\gamma} \eta},
 \end{bmatrix}
\end{equation}
where $c_1$ and $c_2$ are arbitrary complex coefficients.

\begin{remark}
If $(u,v) = (0,0)$, solutions of the linear system (\ref{Lax-1})--(\ref{Lax-2}) are given explicitly as 
	\begin{equation}
	\label{sol-Phi}
	\vec{\phi} = e^{-\mathrm{i}(\zeta^2\xi+\zeta^{-2}\eta) \sigma_3} \vec{d},
	\end{equation}
	where $\vec{d} = (d_1,d_2) \in \mathbb{C}^2$ contain two arbitrary complex coefficients. By writing $\vec{\phi} = (\psi_0,\phi_0)^T$, we obtain
	\begin{equation}
	\label{sol-Phi-component}
	\psi_0 = e^{-\mathrm{i}(\zeta^2\xi+\zeta^{-2}\eta)} d_1, \qquad 
	\phi_0 = e^{\mathrm{i}(\zeta^2\xi+\zeta^{-2}\eta)} d_2.
	\end{equation}
If $\zeta \in \mathbb{C}$ is an eigenvalue in the first quadrant of the complex plane, so are $\bar{\zeta}$, $-\zeta$, and $-\bar{\zeta}$ in the other three quadrants. Hence, we also obtain another solution of the linear system (\ref{Lax-1})--(\ref{Lax-2}) with $(u,v) = (0,0)$:
	\begin{equation}
	\label{sol-Phi-component-new}
	\tilde{\psi}_0 = e^{-\mathrm{i}(\bar{\zeta}^2\xi+\bar{\zeta}^{-2}\eta)} \tilde{d}_1, \qquad 
	\tilde{\phi}_0 = e^{\mathrm{i}(\bar{\zeta}^2\xi+\bar{\zeta}^{-2}\eta)} \tilde{d}_2,
	\end{equation}
where $\tilde{d}_1$ and $\tilde{d}_2$ are also two arbitrary complex coefficients. 
The solution (\ref{1-exp-eigen}) agrees with the solutions (\ref{sol-Phi-component}) and (\ref{sol-Phi-component-new}) for $\zeta = e^{\frac{{\rm i} \gamma}{2}}$.
\end{remark}

Substituting (\ref{1-exp-eigen}) into (\ref{Wronskian}) with 
$N = 1$ and $C = -{\rm i}/|S| = -{\rm i}$, we get
\begin{equation*}
\left\{ \begin{array}{l} 
f = |\phi';\psi|,  \\
g = |\phi,\phi'|, \\
h =  -|\phi,\phi'|, 
\end{array} \right. \quad \Rightarrow \quad 
\left\{ \begin{array}{l} 
f = -{\rm i}\left( |c_1|^2 e^{\frac{{\rm i}\gamma}{2} -2 (\xi - \eta) \sin\gamma} + |c_2|^2 e^{-\frac{{\rm i}\gamma}{2} +2 (\xi - \eta) \sin\gamma} \right), \\
g = 2c_1 c_2 \sin \gamma e^{2{\rm i} (\xi + \eta) \cos \gamma}, \\
h = -2 c_1 c_2 \sin \gamma e^{2{\rm i} (\xi + \eta) \cos \gamma}.
\end{array} \right.
\end{equation*}
This solution with $c_1 = 1$ and $c_2 = {\rm i}$ generates the exponential soliton in the form (\ref{1-soliton-expr}) by virtue of (\ref{MTM-bil}) and (\ref{transformation}).

Next, we recover the algebraic soliton in the form  (\ref{1-soliton-alg-expr}) 
without taking the limit $\gamma \to \pi$ in (\ref{1-soliton-expr}). For $N = 1$, we define in (\ref{rel-1}):
\begin{equation}
A =
\begin{bmatrix}
-1 & 0\\
1 & -1
\end{bmatrix}
\quad \rm{and} \quad
S = 
\begin{bmatrix}
1 & 0 \\
-\frac{1}{2} & 1
\end{bmatrix}.
\label{Jordan-block-2}
\end{equation}
Using (\ref{eigen}) and (\ref{rel-2}), we obtain
\begin{equation*}
\phi =  c_1 \begin{bmatrix} 1 \\ {\rm i}(\xi-\eta) \end{bmatrix} e^{-{\rm i} (\xi + \eta)} 
+ c_2 \begin{bmatrix} 0 \\ 1 \end{bmatrix} e^{-{\rm i} (\xi + \eta)} 
\;\; \rm{and} \;\;
\psi = \bar{c}_1 \begin{bmatrix} 1 \\ - {\rm i}(\xi-\eta) - \frac{1}{2} \end{bmatrix}  e^{{\rm i} (\xi + \eta)} + \bar{c}_2 \begin{bmatrix} 0 \\ 1 \end{bmatrix}  e^{{\rm i} (\xi + \eta)}.
\end{equation*}
Substituting these expression into (\ref{Wronskian}) with $N = 1$ and $C = -{\rm i}/|S| = -{\rm i}$, we obtain
\begin{equation*}
\left\{ \begin{array}{l} 
f = |\phi';\psi|, \\
g = |\phi,\phi'|, \\
h = -|\phi,\phi'|, 
\end{array} \right. \quad \Rightarrow \quad 
\left\{ \begin{array}{l} 
f = {\rm i} (\bar{c}_1 c_2 - c_1 \bar{c}_2) - 2 (\xi - \eta) |c_1|^2 - \frac{{\rm i}}{2} |c_1|^2 , \\
g = {\rm i} c_1^2 e^{-2{\rm i}(\xi + \eta)}, \\
h = -{\rm i} c_1^2 e^{-2{\rm i}(\xi + \eta)}.
\end{array} \right.
\end{equation*}
This solution with $c_1 = 1$ and $c_2 = 0$ generates the algebraic soliton in the form (\ref{1-soliton-alg-expr}) by virtue of (\ref{MTM-bil}) and (\ref{transformation}). 

\begin{remark}
The solution for $\phi$ and $\psi$ obtained from (\ref{Jordan-block-2}) agrees with the solution (\ref{sol-Phi-component}) and its derivative with respect to $\zeta^2$ evaluated at $\zeta = \pm {\rm i}$. This corresponds to the fact that the algebraic soliton is related to the embedded eigenvalues 
$\zeta = \pm {\rm i}$ of the linear system (\ref{Lax-pair}), see \cite{KPR06,LiPelin2025}. 
To recover the algebraic soliton directly, we take the $2 \times 2$ matrix $A$ as the lower triangular Jordan block for eigenvalue $\zeta^2 = -1$.
\end{remark}

\subsection{Relation to double-Wronskian solutions of the two-component KP hierarchy}

We introduce tau-functions of the two-component KP hierarchy by using the double-Wronskian determinants in the form (\ref{double-Wr}), where $\phi=\phi(x_1,x_{-1})$, $\psi=\psi(y_1,y_{-1})$, with $\phi^{(n)}$ and $\psi^{(m)}$ representing the $n$-th and $m$-th derivatives respective to $x_1$ and $y_1$, respectively. We impose the following relations:
\begin{align*}
\partial_{x_{1}} \phi^{(n)}=\phi^{(n+1)},\qquad \partial_{y_{1}}\psi^{(m)}=\psi^{(m+1)}  
\end{align*}
and 
\begin{align*}
\partial_{x_{-1}} \phi^{(n)}=\phi^{(n-1)},\qquad \partial_{y_{-1}}\psi^{(m)}=\psi^{(m-1)}.  
\end{align*}
The double-Wronskian functions (\ref{double-Wr}) were introduced in \cite{HirotaOhtaSatsuma1988,Kakei1988} to represent soliton solutions of many integrable equations. By using the Pl\"ucker  relation in Lemma \ref{lem-2}, we show that the double-Wronskian functions in (\ref{double-Wr}) satisfy the following bilinear equations:
\begin{align}
& D_{x_1}\tau^1_{n,m+1} \cdot \tau^0_{n+1,m} = \tau^1_{n,m} \tau^0_{n+1,m+1}, \label{2KP-bilinear1}\\
& D_{x_{-1}}\tau^1_{n,m} \cdot \tau^0_{n,m} = \tau^1_{n-1,m} \tau^0_{n+1,m}, 
\label{2KP-bilinear2}\\
& D_{y_1}\tau^0_{n+1,m} \cdot \tau^0_{n,m} = \tau^1_{n,m} \tau^{-1}_{n+1,m},  \label{2KP-bilinear3}\\
& D_{y_{-1}}\tau^0_{n+1,m} \cdot \tau^0_{n,m} = -\tau^1_{n,m+1} \tau^{-1}_{n+1,m-1}\,.
\label{2KP-bilinear4}
\end{align}
For brevity, we only prove the last two equations (\ref{2KP-bilinear3}) and (\ref{2KP-bilinear4}). Since
\begin{align}
\tau^0_{n,m} &= |\phi^{(n)},\dots,\phi^{(n+N-1)};\psi^{(m)},\psi^{(m+1)},\dots,\psi^{(m+N-1)}|, \label{tau0} \\
\tau^0_{n+1,m} &= |\phi^{(n+1)},\dots,\phi^{(n+N)};\psi^{(m)},\psi^{(m+1)},\dots,\psi^{(m+N-1)}|, 
\label{tau1}
\end{align}
it follows that
\begin{align}
\partial_{y_1} \tau^0_{n,m} &= |\phi^{(n)},\dots,\phi^{(n+N-1)};\psi^{(m)},\dots,\psi^{(m+N-2)}, \psi^{(m+N)}|,  \label{Dy1a}\\
\partial_{y_1} \tau^0_{n+1,m} &= |\phi^{(n+1)},\dots,\phi^{(n+N)};\psi^{(m)},  \dots,\psi^{(m+N-2)}, \psi^{(m+N)}|, \label{Dy1b} \\
\partial_{y_{-1}} \tau^0_{n,m} &= |\phi^{(n)},\dots,\phi^{(n+N-1)};\psi^{(m-1)}, \psi^{(m+1)}, \dots, \psi^{(m+N-1)}|,  \label{Dy2a}\\
\partial_{y_{-1}} \tau^0_{n+1,m} &= |\phi^{(n+1)},\dots,\phi^{(n+N)};\psi^{(m-1)}, \psi^{(m+1)}, \dots,\psi^{(m+N-1)}|. \label{Dy2b}
\end{align}
Applying the Pl\"ucker  relation to (\ref{tau0})--(\ref{tau1}) and (\ref{Dy1a})--(\ref{Dy1b}), the bilinear equation (\ref{2KP-bilinear3}) is verified. Applying the Pl\"ucker  relation to (\ref{tau0})--(\ref{tau1}) and (\ref{Dy2a})--(\ref{Dy2b}), the bilinear equation (\ref{2KP-bilinear4}) is verified.

For the purpose of getting exponential and algebraic solitons, we take
\begin{align*}
\phi = e^{{\rm i}Ax_1+{\rm i}A^{-1}x_{-1}+x_0} \vec{c}, \quad \psi=e^{{\rm i} By_1+{\rm i}B^{-1}y_{-1}+y_0} \vec{d}.
\end{align*}
where $A, B \in \mathbb{C}^{2N \times 2N}$ are invertible matrices  
and $\vec{c}, \vec{d} \in \mathbb{C}^{2N}$ are constant vectors.
Imposing the condition $B=-A$ yields
\begin{align*}
(\partial_{x_1} -\partial_{y_1})\tau^0_{n,m} = C_1 \tau^0_{n,m}, \quad
(\partial_{x_{-1}} -\partial_{y_{-1}})\tau^0_{n,m} = C_2 \tau^0_{n,m}.
\end{align*}
As a result, the variables $y_1$ and $y_{-1}$ become dummy variables and the bilinear equations (\ref{2KP-bilinear3}) and (\ref{2KP-bilinear4}) become
\begin{align}
D_{x_1}\tau^0_{n+1,m} \cdot \tau^0_{n,m} &= \tau^1_{n,m}\tau^{-1}_{n+1,m}, 
\label{2KP-bilinear5}\\
D_{x_{-1}}\tau^0_{n+1,m} \cdot \tau^0_{n,m} &= -\tau^1_{n,m+1} \tau^{-1}_{n+1,m-1}\,.
\label{2KP-bilinear6}
\end{align}
Then, we define $x_1=2 \xi$, $x_{-1}=-2 \eta$ and modify the definition such that $\phi^{(n)}$ and $\psi^{(m)}$ represent the $n$-th and $m$-th derivatives respective to $\xi$. After some calculations, the four bilinear equations 
(\ref{2KP-bilinear1}), (\ref{2KP-bilinear2}), (\ref{2KP-bilinear5}), and (\ref{2KP-bilinear6}),  become
\begin{align}
\left\{ \begin{array}{l}
D_{\xi}\tau^1_{0,1} \cdot \tau^0_{1,0} = 2\tau^1_{0,0} \tau^0_{1,1}, \\
D_{\eta}\tau^1_{0,0} \cdot \tau^0_{0,0} = -2\tau^1_{-1,0} \tau^0_{1,0}, \\
D_{\xi}\tau^0_{1,0} \cdot \tau^0_{0,0} = 2\tau^1_{0,0} \tau^{-1}_{1,0}, \\
 D_{\eta} \tau^0_{1,0} \cdot \tau^0_{0,0} = 2\tau^1_{0,1} \tau^{-1}_{1,-1}\,. 
\end{array} \right. \label{bl}
\end{align}
by imposing $n=m=0$.
Since $\tau^l_{n+1,m+1}= (-1)^l |A| \tau^l_{n,m}$, by using (\ref{complex-red}), 
the four bilinear equations to the MTM (\ref{MTM-4}) are recovered from the four  bilinear equations (\ref{bl}). In summary, we have established the correspondence 
between the double-Wronskian solutions of the MTM system and those of the two-component KP hierarchy.

\section{Proof of Theorem \ref{theorem-rat-sol}} 
\label{sec-3}

We set $A = -I + L$ as in (\ref{choice-A}), where $L$ is the nilpotent matrix of index $(2N)$ defined by (\ref{matrix-L}). This choice for $A$ generalizes 
the Jordan block (\ref{Jordan-block-2}) from $2 \times 2$ to $(2N) \times (2N)$ matrices. Vector $\phi \in \mathbb{C}^{2N}$ satisfies the first system in (\ref{eigen}), from which we derive the following equations for components of $\phi$:
\begin{equation}
\label{eq-eigen}
\partial_{\xi} \phi_j = -{\rm i} \phi_j + {\rm i} \phi_{j-1}, \quad j = 1,2,\dots,2N
\end{equation}
and 
\begin{equation}
\label{eq-eigen-eta}
\partial_{\eta} \phi_j = -{\rm i} \phi_j - {\rm i}\phi_{j-1} - {\rm i} \phi_{j-2} - \dots - {\rm i} \phi_{1}, \quad j = 1,2,\dots,2N,
\end{equation}
closed with $\phi_0 \equiv 0$. The other vector $\psi \in \mathbb{C}^{2N}$ is defined by $\psi = S \bar{\phi}$ as in (\ref{rel-2}). The following lemma gives the unique expression for the real matrix $S$ solving (\ref{rel-1}). 

\begin{lemma}
	\label{lem-S}
	Solution of the matrix equation $-S^2 = A = -I + L$ is given by 
	\begin{equation}
	\label{choice-S}
	S = I - \frac{1}{2} L - \frac{1}{2^3} L^2 - \dots - \frac{(2m-3)!!}{m! 2^m} L^m - \dots - \frac{(4N-5)!!}{(2N-1)! 2^{2N-1}} L^{2N-1}.
	\end{equation}
\end{lemma}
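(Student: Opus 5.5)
We seek $S$ as a polynomial in the nilpotent matrix $L$. Since $A = -S\bar S$ by (\ref{rel-1}), a real $S$ with $S^2 = I - L$ suffices. This reduces the problem to extracting a square root of $I - L$ inside the commutative algebra generated by $L$, which is truncated at $L^{2N} = 0$.

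The natural candidate is the binomial series $(1-z)^{1/2} = \sum_{m\ge 0} \binom{1/2}{m}(-z)^m$ with $z = L$, truncated at $m = 2N-1$. The identity $\bigl(\sum_m \binom{1/2}{m}(-z)^m\bigr)^2 = 1 - z$ holds as formal power series in $z$. Since every term of degree $\ge 2N$ vanishes when $z = L$, the truncated polynomial $S$ satisfies $S^2 = I - L$ exactly. Equivalently, the Cauchy product coefficients $\sum_{k=0}^{m} \binom{1/2}{k}\binom{1/2}{m-k}$ vanish for $m\ge 2$ (Vandermonde's identity, $\binom{1}{m}=0$), and they equal $1$ and $1$ for $m=0$ and $m=1$, respectively.

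Next we identify the coefficients. For $m\ge 1$,
\begin{equation*}
(-1)^m\binom{1/2}{m} = (-1)^m\frac{\frac12\left(-\frac12\right)\left(-\frac32\right)\cdots\left(\frac32-m\right)}{m!} = -\frac{(2m-3)!!}{m!\,2^m},
\end{equation*}
with the convention $(-1)!! = 1$. This reproduces $-\tfrac12 L$, $-\tfrac18 L^2$, and so on, up to the last term $-\frac{(4N-5)!!}{(2N-1)!\,2^{2N-1}}L^{2N-1}$ in (\ref{choice-S}). All coefficients are real, so $\bar S = S$, and therefore $-S\bar S = -S^2 = -I + L = A$, which is exactly (\ref{rel-1}). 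Invertibility of $S$ is immediate, since $S$ is lower triangular with unit diagonal and hence $|S| = 1$; this also gives $C = (-\mathrm{i})^N$ in Theorem~\ref{theorem-Wronskian}.

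For the uniqueness claimed in the text, consider a lower-triangular Toeplitz solution $S = \sum_m s_m L^m$. Comparing powers of $L$ in $S^2 = I - L$ gives $s_0^2 = 1$, $2 s_0 s_1 = -1$, and, for $m \ge 2$, $2 s_0 s_m = -\sum_{k=1}^{m-1} s_k s_{m-k}$. Normalizing $s_0 = 1$ (the other sign gives $-S$, which does not change $S\bar S$), the recursion determines every coefficient uniquely, so the solution is the binomial one. The only step needing care is the combinatorial identification of the coefficients with the double factorials, and this is routine.
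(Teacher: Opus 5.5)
Your proposal is correct and takes essentially the paper's route. The paper applies the Taylor expansion $f(J)=\sum_m \frac{1}{m!}f^{(m)}(\lambda)L^m$ for the Jordan block with $f(\lambda)=\sqrt{\lambda}$ at $\lambda=-1$ and sets $S=-{\rm i}\sqrt{A}$; this is the same series as your binomial expansion of $(I-L)^{1/2}$, and your Cauchy-product/Vandermonde check just makes explicit the identity $f(A)^2=A$, which the paper uses without stating. Your uniqueness argument is also sound: restricting to lower-triangular Toeplitz $S$ loses nothing, because any $S$ with $S^2=I-L$ commutes with $L$, and the only matrices commuting with the single nilpotent Jordan block $L$ are polynomials in $L$, so the solutions of $-S^2=A$ are exactly $\pm S$.
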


\begin{proof}
For the Jordan block form $J = \lambda I + L \in \mathbb{C}^{n \times n}$ with $\lambda \in \mathbb{C}$, we use the following Taylor expansion for every smooth function $f : \mathbb{C} \to \mathbb{C}$ extended to matrices as $f : \mathbb{C}^{n \times n} \to \mathbb{C}^{n \times n}$: 
\begin{align*}
f(J) &= f(\lambda) I + f'(\lambda) L + \frac{1}{2!} f''(\lambda) L^2 + \dots + \frac{1}{m!} f^{(m)}(\lambda) L^m + \dots \\
&= \left( \begin{matrix} f(\lambda) & 0 & 0 & \dots & 0 \\
f'(\lambda) & f(\lambda) & 0 & \dots & 0 \\
\frac{1}{2!} f''(\lambda) & f'(\lambda) & f(\lambda) & \dots & 0 \\
\vdots & \vdots & \vdots & \ddots & \vdots \\
\frac{1}{(n-1)!} f^{(n-1)}(\lambda) & \frac{1}{(n-2)!} f^{(n-2)}(\lambda) & 
\frac{1}{(n-3)!} f^{(n-3)}(\lambda) & \dots & f(\lambda) 
\end{matrix} \right) 
\end{align*}
	We apply this formula for $A = -I + L$ with $f(\lambda) = \sqrt{\lambda}$ and $\lambda = -1$. We get recursively
	$$
	f(-1) = {\rm i}, \quad f'(-1) = -\frac{1}{2}{\rm i}, \quad 
	f''(-1) = -\frac{1}{2^2}{\rm i}, \quad f'''(-1) = -\frac{3!!}{2^3} {\rm i}, \quad f''''(-1) = -\frac{5!!}{2^4} {\rm i}, 
	$$
	and generally, 
	$$
	f^{(m)}(-1) = -\frac{(2m-3)!!}{2^m}  {\rm i}, \quad m \in \mathbb{N}.
	$$ 
	Defining $S = -{\rm i} \sqrt{A}$ and dividing $f^{(m)}(-1)$ by $m!$ yields (\ref{choice-S}). 
\end{proof}

The fundamental solution of equations (\ref{eq-eigen}) and (\ref{eq-eigen-eta}) 
denoted as $\Phi = (\Phi_1,\Phi_2,\dots,\Phi_{2N})^T$ can be obtained
by using the generating function:
\begin{equation}
\label{der-eigen}
\Phi_j = \frac{1}{(j-1)!} \partial^{j-1}_{\zeta^2} e^{{\rm i}(\zeta^2\xi+\zeta^{-2}\eta)} |_{\zeta^2 = -1}, \quad j = 1,2,\dots,2N.
\end{equation}
The example for $N = 2$ yields
\begin{align*}
\Phi = \left( \begin{array}{c} 1 \\
{\rm i} (\xi - \eta) \\
\frac{{\rm i}^2}{2!} (\xi - \eta)^2 - {\rm i} \eta \\
\frac{{\rm i}^3}{3!} (\xi - \eta)^3 + \eta (\xi - \eta) - {\rm i} \eta 
 \end{array} \right) \; e^{-{\rm i}(\xi + \eta)}.
\end{align*}
where we have used 
\begin{equation}
\label{gen-function}
e^{{\rm i}(\zeta^2\xi+\zeta^{-2}\eta)} = e^{-{\rm i}(\xi + \eta) + {\rm i} (\zeta^2 + 1) (\xi - \eta) - {\rm i} \eta \sum_{k=2}^{\infty} (\zeta^2 + 1)^k}.
\end{equation}

The general solution of  equations (\ref{eq-eigen}) and (\ref{eq-eigen-eta}) is given by the linear combination of the fundamental solutions (\ref{der-eigen}) with $(2N)$ complex parameters:
\begin{equation}
\label{der-eigen-lin}
\phi = \sum_{k=1}^{2N}  c_k L^{k-1} \Phi.
\end{equation}
The following lemma gives the exact count of irreducible real parameters in the general rational solution of the MTM system generated from (\ref{der-eigen-lin}) by using (\ref{MTM-bil}) and (\ref{Wronskian}). This yields the assertion 
of Theorem \ref{theorem-rat-sol} on $(2N)$ arbitrary real parameters.

\begin{lemma}
    \label{lem-param}
Let $\phi$ be defined by (\ref{der-eigen}) and (\ref{der-eigen-lin}) with $c_1,c_2,\dots,c_{2N} \in \mathbb{C}$ and $\psi = S \bar{\phi}$ with $S$ defined by (\ref{choice-S}). The rational solutions obtained by (\ref{MTM-bil}) and  (\ref{Wronskian}) depend on $(2N)$ arbitrary real parameters. 
\end{lemma}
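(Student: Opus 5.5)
The plan is to count the parameters by finding which transformations of $(c_1,\dots,c_{2N})$ leave the solution $(u,v)$ unchanged. We have $4N$ real parameters and must exhibit $2N$ real independent redundancies.

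\emph{Reduction to the case $c_1\neq 0$.} Since $L$ commutes with $A$ and $A^{-1}$, write $\phi = c(L)\Phi$ with $c(L)=\sum_{k} c_k L^{k-1}$, which is a polynomial in $L$. If $c_1\neq 0$, then $c(L)$ is invertible and has determinant $c_1^{2N}$. If $c_1=0$, the first component of $\phi$ vanishes. The vectors $\phi^{(j)}$ then live in a space of dimension $2N-1$ after a shift of the Jordan chain, and the construction reduces to a lower-order member. So I will treat the generic case $c_1\neq 0$.

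\emph{Factoring out the constant matrix.} Because $S$ is a polynomial in $L$ with real coefficients, it commutes with $c(L)$. Hence
\begin{equation*}
\psi = S\bar\phi = \bar c(L)\, S\bar\Phi .
\end{equation*}
Every column of each double-Wronskian determinant in (\ref{Wronskian}) is either $c(L)$ times a column built from $\Phi$, or $\bar c(L)$ times a column built from $S\bar\Phi$. The $\phi$-columns and $\psi$-columns sit in the same $2N$-dimensional space, so we cannot factor $c(L)$ and $\bar c(L)$ separately. Instead I will use the factorization $\bar c(L)=c(L)\,r(L)$, where
\begin{equation*}
r(L)=c(L)^{-1}\bar c(L).
\end{equation*}
Then every determinant equals $\det c(L)=c_1^{2N}$ times the same determinant with $\phi\to\Phi$ and $\psi\to r(L)S\bar\Phi$. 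The common factor $c_1^{2N}$ contributes only a constant to $f,g,h$. Its effect on $u=g/\bar f$ and $v=h/f$ is a constant phase, and consistency of the conjugate pairs (Lemma~\ref{lem-3}) forces $|c_1|$ to cancel. So the overall magnitude and phase of $c_1$ account for two redundant real parameters.

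\emph{Counting the remaining freedom.} The remaining dependence is through $r(L)=I+\sum_{k\ge1} r_k L^k$, and $r$ satisfies $\bar r(L)=r(L)^{-1}$. The key step is to show that the map $c\mapsto r$ leaves exactly $2N-1$ complex parameters $r_1,\dots,r_{2N-1}$, cut down by the constraint $r\bar r=1$ to $2N-2$ real ones. Together with translations this should give the right count, but the cleaner approach is a direct inverse argument. I would show that multiplying $c(L)$ by a real polynomial $q(L)$ with $q(0)=1$ leaves the solution invariant up to a constant factor: $q$ commutes with everything and $\bar q=q$, so every determinant is rescaled by $q(0)^{2N}=1$. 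These real rescalings form a $(2N-1)$-real-dimensional group. Adding the overall complex scaling by $c_1$ (two real dimensions, minus the one already counted inside the group) gives $2N$ redundant real directions. So the $4N$ real parameters reduce to $4N-2N=2N$ effective ones.

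\emph{Effectiveness, the main obstacle.} It remains to show that the remaining $2N$ parameters are genuinely effective, meaning no further redundancy exists. I would normalize $c(L)=I+{\rm i}\sum_{k\ge 2}\beta_k L^{k-1}$ with $\beta_k$ real, using the real group to remove the real parts. I would then compute the $\beta$-dependence of the subleading coefficients of $P_N$. The $\beta_k$ enter through shifts $\xi\to\xi+\cdots$ and $\eta\to\eta+\cdots$ of the generating function (\ref{gen-function}). This is because ${\rm i}\beta(L)$ acts like a complex change of the exponent in $(\zeta^2+1)$, and those shifts correspond to translations and higher flows. This step is the main obstacle. I would first verify it explicitly for $N=1,2$ against the known solutions, then argue by triangularity in $k$ that each $\beta_k$ first appears in a distinct coefficient of the polynomials.
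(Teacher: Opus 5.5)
Your basic mechanism is correct and is the paper's: a real polynomial $q(L)$ commutes with $S$ and equals its own conjugate. Hence $\phi\mapsto q(L)\phi$ induces $\psi\mapsto q(L)\psi$ and multiplies every double-Wronskian by $\det q(L)=q(0)^{2N}$. The gap is in which directions you declare redundant, and the claim that fails is that the phase of $c_1$ is redundant.

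Under $c\mapsto e^{{\rm i}\theta}c$ you get $\phi\mapsto e^{{\rm i}\theta}\phi$ and $\psi\mapsto e^{-{\rm i}\theta}\psi$. So $f$ and $C|\widetilde{N};\widetilde{N}|$, which have $N$ columns of each kind, are unchanged. But $g$ and $h$, with $N+1$ columns from $\phi$ and $N-1$ from $\psi$, acquire $e^{2{\rm i}\theta}$, i.e. $(u,v)\mapsto e^{2{\rm i}\theta}(u,v)$. Already for $N=1$, the paper's explicit formulas with general $c_1,c_2$ give $u=2e^{2{\rm i}b_1}e^{-{\rm i}t}/(1+2{\rm i}(x+2b_2))$, with $b_1=\arg c_1$ and $b_2={\rm Im}(\bar c_1c_2)/|c_1|^2$. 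So the phase is one of the two effective parameters.

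Moreover, the group $\{q(L)\ \mbox{real}: q(0)=1\}$ contains no nontrivial scalars, so there is nothing ``already counted'' to subtract. If all complex scalings were redundant, the redundancy group would be $(2N+1)$-dimensional and you would end with $2N-1$ parameters. You reach $2N$ only because these two errors cancel. Your normalization $c(L)=I+{\rm i}\sum_{k\ge 2}\beta_kL^{k-1}$ exposes this: it has only $2N-1$ real parameters. The correct redundancy group is that of all real invertible polynomials $a(L)$, with the real scale included and the phase excluded. It is $2N$-dimensional and acts freely, leaving $4N-2N=2N$.

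The same slip spoils your first route, which is otherwise the natural one. The quotient $r(L)=c(L)^{-1}\bar c(L)$ has $r(0)=\bar c_1/c_1$, not $1$. The condition $r\bar r=I$ fixes only ${\rm Re}(\bar r_0 r_k)$ for each $k\ge 1$, in terms of $r_1,\dots,r_{k-1}$. So $\{r: r\bar r=I\}$ has real dimension $1+(2N-1)=2N$, not $2N-2$.

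Done correctly, this route works. The factor $\det c(L)$ cancels exactly in $g/\bar f$ and $h/f$, with no appeal to Lemma~\ref{lem-3}, so the solution depends on $c$ only through $r$. The fibres of $c\mapsto r$ are exactly the orbits $c\mapsto q(L)c$ with $q$ real. Writing $r=e^{-2{\rm i}b(L)}$ with $b$ real recovers the paper's proof. The paper factors $c(L)=a(L)e^{{\rm i}b(L)}$ recursively, absorbs $T=a(L)$ into the factor $|T|=a_1^{2N}$, and keeps $b_1,\dots,b_{2N}$, with $b_1$ being the phase.

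Two minor points. First, for $c_1=0$ all columns lie in ${\rm range}(L)$, which is invariant under $A$, $A^{-1}$ and $S$, so every determinant vanishes identically. This is a degenerate case, not a lower-order member. Second, the paper itself only asserts that $b_1,\dots,b_{2N}$ are generally irreducible. Your effectiveness step would go beyond it, but as written it is a plan rather than an argument.
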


\begin{proof}
Writing $\Phi_j(\xi,\eta) = P_{j-1}(\xi,\eta) e^{-{\rm i}(\xi + \eta)}$ in (\ref{der-eigen}) with some polynomials $P_{j-1}$ in $(\xi,\eta)$ of degree $j-1$, we can rewrite (\ref{der-eigen-lin}) in the equivalent form:
\begin{align*}
\phi_j(\xi,\eta) &= \left( \sum_{k=1}^j c_k P_{j-k}(\xi,\eta) \right) e^{-{\rm i}(\xi + \eta)} \\
&= \frac{1}{(j-1)!} \partial^{j-1}_{\zeta^2} \left( \sum_{k=1}^{j} c_k(\zeta^2+1)^{k-1} \right) e^{{\rm i}(\zeta^2\xi+\zeta^{-2}\eta)} \Biggr|_{\zeta^2 = -1}
\end{align*}
where we have used the Leibniz rule for $\partial_{\zeta^2}^{\,j-1}$ applied to
$\big(\sum_{k=1}^{2N}c_k(\zeta^2+1)^{k-1}\big)  e^{{\rm i}(\zeta^2\xi+\zeta^{-2}\eta)}$ and the identity
\begin{equation}\nonumber
\partial_{\zeta^2}^{\,m}(\zeta^2+1)^r\Big|_{\zeta^2=-1}=
\begin{cases}
r!, & m=r,\\
0, & m\neq r.
\end{cases}
\end{equation}
Furthermore, the expression for $\phi_j$ can be rewritten in the form:
\begin{align}
\phi_j(\xi,\eta) &= \frac{1}{(j-1)!} \partial^{j-1}_{\zeta^2} 
\left( \sum_{k=1}^j a_k (\zeta^2+1)^{k-1} \right) e^{{\rm i}\sum_{k=1}^j b_k (\zeta^2+1)^{k-1}} e^{\mathrm{i}(\zeta^2\xi+\zeta^{-2}\eta)} \Bigg|_{\zeta^2 = -1},
\label{phi-represent}
\end{align}
where $(4N)$ real coefficients $a_1,a_2,\dots,a_{2N}$ and $b_1,b_2,\dots,b_{2N}$ are uniquely computed from $(2N)$ complex coefficients $c_1,c_2,\dots,c_{2N}$ by using the recursive relations:
\begin{align*}
\left\{ \begin{array}{l} 
c_1 = a_1 e^{{\rm i} b_1}, \\
c_2 = (a_2 + {\rm i} a_1 b_2) e^{{\rm i} b_1}, \\
c_3 = (a_3 + {\rm i} a_2 b_2 + {\rm i} a_1 b_3 -\frac{1}{2} a_1 b_2^2) e^{{\rm i} b_1}.
\end{array} \right.
\end{align*}
The unique solution for $a_j, b_j \in \mathbb{R}$ at each $j = 1,2,\dots,2N$ is found from a linear equation with given $c_j \in \mathbb{C}$ and 
uniquely defined $\{ a_k, b_k\}_{k=1}^{j-1}$ by induction. 

Let $\tilde\phi_j$ denote the sequence obtained from (\ref{phi-represent}) for $a_1=1$ and $a_2=\ldots=a_{2N}=0$. The representation (\ref{phi-represent}) implies that $\phi_j$ can be written recursively as 
\begin{equation}\nonumber
	\phi_j = a_1 \tilde{\phi}_j + a_2 \tilde{\phi}_{j-1} + \ldots + a_j \tilde{\phi}_1, \quad j = 1,2,\dots,2N.
\end{equation}
Similarly, we obtain
\begin{equation}\nonumber
	\psi_j = a_1 \tilde{\psi}_j + a_2 \tilde{\psi}_{j-1} + \ldots + a_j \tilde{\psi}_1, \quad j = 1,2,\dots,2N,
\end{equation}
where $\tilde{\psi}_j = S \overline{\tilde{\phi}}_j$. Introducing 
$$
T = \left( \begin{matrix} a_1 & 0 & 0 & \dots & 0 \\
a_2 & a_1 & 0 & \dots & 0 \\ a_3 & a_2 & a_1 & \dots & 0 \\ 
\vdots & \vdots & \vdots & \ddots & \vdots \\ a_{2N} & a_{2N-1} & a_{2N-2} & \dots & a_1 \end{matrix} \right),
$$
we can write 
\begin{equation}\nonumber
\phi = T \tilde\phi,\qquad \psi = T \tilde\psi.
\end{equation}
Every double-Wronskian determinant in the solution (\ref{Wronskian}) 
can be written in terms of the double-Wronskian determinant computed from $\tilde{\phi}$ and $\tilde{\psi}$ multiplied by $|T| = a_1^{2N}$, e.g. 
\begin{align*}
f &= |\widetilde{N} ; \widehat{N-1}| \\
&= |\phi',\phi'',\dots,\phi^{(N)};\psi,\psi',\dots,\psi^{(N-1)}|, \\
&= |T \tilde{\phi}',T \tilde{\phi}'',\dots,T \tilde{\phi}^{(N)};T \tilde{\psi},T \tilde{\psi}',\dots,T \tilde{\psi}^{(N-1)}|, \\
&= |T| |\tilde{\phi}',\tilde{\phi}'',\dots,\tilde{\phi}^{(N)};\tilde{\psi}, \tilde{\psi}',\dots,\tilde{\psi}^{(N-1)}|.
\end{align*}
Since $|T| = a_1^{2N}$, parameters $a_2,a_3,\dots,a_{2N}$ do not affect the solution, whereas the parameter $a_1 \neq 0$ is canceled in the quotients (\ref{MTM-bil}) since $|T| = a_1^{2N}$ appears in both terms of the quotients. As a result, the quotients (\ref{MTM-bil}) only depend on $2N$ real parameters $b_1,b_2,\dots,b_{2N}$ which are generally irreducible.
\end{proof}

From now on, we only consider the rational solutions obtained 
from the fundamental solutions (\ref{der-eigen}) by using (\ref{MTM-bil}) and (\ref{Wronskian}). These solutions give the principal 
parts in the expansion of the polynomials $P_N(x,t)$, $Q_N(x,t)$, $R_N(x,t)$ 
defined by (\ref{rat-sol}). The folowing lemma shows that 
the highest powers of these polynomials in $x$ have  nonzero coefficients. 

\begin{lemma}
	\label{lem-leading-order}
	Let $\phi = \Phi$ be defined by (\ref{der-eigen}) and $\psi = S \bar{\Phi}$ with $S$ defined by (\ref{choice-S}). Then, the polynomials $P_N(x,t)$, $Q_N(x,t)$, $R_N(x,t)$ 
	defined by (\ref{rat-sol}) are expanded as follows:
	\begin{align}
	P_N(x,t) &= a_N^{(0)} x^{N^2} + \mathcal{O}(x^{N^2-1}), \label{P_N} \\
	Q_N(x,t) &= -{\rm i}N a_N^{(0)} x^{N^2-1} + \mathcal{O}(x^{N^2-2}), \label{Q_N} \\
	R_N(x,t) &= {\rm i}N a_N^{(0)} x^{N^2-1} + \mathcal{O}(x^{N^2-2}),  \label{R_N}
	\end{align}
	where the numerical coefficient $a_N^{(0)}$ is given by 
	\begin{align}
	\label{a_N}
	a_N^{(0)} =  \frac{(-1)^{\frac{N(N+1)}{2}}}{2^{N(N-1)}} 
		\frac{1}{1^{2N-1} 3^{2N-3} 5^{2N-5} 7^{2N-7} \dots (2N-3)^3 (2N-1)^{1}}.
	\end{align}
\end{lemma}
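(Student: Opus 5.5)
The plan is to strip off the common exponential factor, reduce each double-Wronskian to a polynomial whose top-degree part in $x$ is a single numerical determinant, and then evaluate that determinant.

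\textbf{Step 1: remove the exponential factor.} By (\ref{der-eigen}), $\Phi = e^{-{\rm i}(\xi+\eta)}\,\mathcal{P}(\xi,\eta)$, where $\mathcal{P}_j$ is a polynomial of degree $j-1$. Likewise $\psi = S\bar\Phi = e^{{\rm i}(\xi+\eta)}\, S\bar{\mathcal{P}}$.

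\textbf{Step 2: clean up the derivatives.} Each $\xi$-derivative acts as ${\rm i}A=-{\rm i}I+{\rm i}L$ on $\phi$ and as $-{\rm i}A$ on $\psi$. Inside each Wronskian block, subtract multiples of earlier columns from later ones. This replaces the columns $\phi^{(k)}$ by $({\rm i}L)^k\phi$ up to an overall nonzero constant, and similarly for $\psi$. The missing columns (for instance $\phi$ in $f=|\widetilde N;\widehat{N-1}|$) are handled by factoring out $A$, which has $|A|=1$.

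After this step every determinant in (\ref{Wronskian}) equals a nonzero constant times $e^{\pm 2{\rm i}(\xi+\eta)}$ or $1$, times a determinant with columns
\[
L^k\mathcal{P}, \qquad L^k S\bar{\mathcal{P}} .
\]
Since $\xi+\eta=t/2$, this produces the factor $e^{-{\rm i}t}$ in (\ref{rat-sol}).

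\textbf{Step 3: isolate the leading order.} Write $X=\xi-\eta=x/2$. In $\mathcal{P}_j$, the top degree in $(\xi,\eta)$ comes from the term ${\rm i}(\zeta^2+1)X$ in (\ref{gen-function}), so $\mathcal{P}_j=({\rm i}X)^{j-1}/(j-1)!+\dots$. The $\eta$-terms lower the degree, so the top power of $x$ at fixed $t$ is governed by the pure powers of $X$. The polynomial degrees appearing in the determinant are $0,\dots,2N-1$ in each of the two blocks, each shifted by the powers of $L$.

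A naive count overestimates the degree, because columns $L^k\mathcal{P}$ and $L^kS\bar{\mathcal{P}}$ with equal $k$ share the same leading monomial up to sign: $\bar{{\rm i}}=-{\rm i}$, and $S=I+\dots$. To expose the true degree $N^2$, use successive two-column eliminations. Pair column $L^k\mathcal{P}$ with the matching column of the $\psi$ block and subtract, which kills the leading terms. The difference involves the lower-order corrections, namely the $-\tfrac12 L$ part of $S$ and the $X$-independent parts coming from the $\eta$-terms.

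Iterating this reduction should leave a determinant whose entries at leading order are monomials $c_{jk}X^{d_{jk}}$ with $d_{jk}$ additive in the row and column index. Such a determinant factors as $X^{N^2}$ times a numerical determinant $D_N$.

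\textbf{Step 4: evaluate $D_N$ by induction on $N$.} The entries of $D_N$ involve $1/m!$ and the coefficients $(2m-3)!!/(m!2^m)$ of $S$, so $D_N$ has Cauchy/Vandermonde-like structure. The two-column elimination reduces the evaluation of $D_N$ to that of $D_{N-1}$ times an explicit factor of the form $\pm 2^{-2(N-1)}/\prod(2k-1)$. Tracking this factor gives (\ref{a_N}), and in particular shows $a_N^{(0)}\neq0$.

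\textbf{Step 5: the numerators.} For $g$ and $h$ the block sizes are $(N+1,N-1)$. The same eliminations give degree $N^2-1$, and comparison with the $f$ computation gives the ratio $\mp{\rm i}N$, after multiplying by the constants $C$, ${\rm i}C^{-1}$ and normalizing $P_N$ so that $u=g/\bar f$ and $v=h/f$. A cleaner route to the ratio is to use the bilinear equation ${\rm i}D_\xi f\cdot\bar f=2|h|^2$ at leading order. Write $f=a X^{N^2}+bX^{N^2-1}+\dots$. Then the $X^{2N^2-2}$ coefficient of $D_\xi f\cdot\bar f$ is $2N^2\,{\rm Im}$-type data, which links $|R_N|$ to the subleading coefficient of $P_N$. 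The phase is then fixed by Lemma \ref{lem-3}. I would rather compute the numerators directly, via the same determinant reduction, to be safe.

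\textbf{Main obstacle.} The hard part is Steps 3–4: proving that the massive cancellations leave exactly degree $N^2$, and obtaining the closed form of the numerical determinant. My first attempt would be to compute $N=2,3$ explicitly, guess the elimination pattern, and then formalize it as a recursive factorization of the determinant.
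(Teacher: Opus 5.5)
Your Step 3 rests on a false premise, and it determines what you would actually compute. There is no cancellation between the $\phi$-block and the $\psi$-block. After Step 2, the entry in row $j$ ($0\le j\le 2N-1$) of a column $L^k\mathcal{P}$ or $L^kS\bar{\mathcal{P}}$ is, at fixed $t$, a polynomial in $x$ of degree $j-k$. Its top term is $(\pm z)^{j-k}/(j-k)!$ with $z=\tfrac{\rm i}{2}x$. The $-\tfrac12 L+\dots$ part of $S$ and the $\eta$-terms in $\mathcal{P}$ only contribute lower degrees.

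Hence every term of the Leibniz expansion has degree at most $\sum_{j=0}^{2N-1}j-2\sum_{k=0}^{N-1}k=N^2$. The degree-$N^2$ part is $z^{N^2}|M_N(1)|$, by the diagonal scaling $M_N(z)=D_-(z)M_N(1)D_+(z)$, and $M_N(1)$ has entries $1/(j-k)!$ and $(-1)^{j-k}/(j-k)!$ only. Your ``naive overestimate'' comes from summing column maxima, which is the wrong count for this triangular structure: a single block $(L^kb_N(z))_{k=0}^{2N-1}$ already has determinant $1$.

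The move you propose does not do what you claim either. $L^kb_N(z)-L^kb_N(-z)$ vanishes only in the rows with $j-k$ even and doubles the others, and column operations leave the determinant unchanged anyway. So the leading coefficient contains nothing from the coefficients $(2m-3)!!/(m!2^m)$ of $S$ or from the $\eta$-terms, contrary to your Steps 3--4; chasing those corrections would give the wrong $D_N$. Separately, in Step 2 you cannot factor $A$ out of the $\phi$-block alone. This is harmless at leading order, since $L^k\phi'=-{\rm i}L^k\phi+{\rm i}L^{k+1}\phi$ and the second term has lower degree.

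Once this is fixed, the content of the lemma is the closed-form evaluation, in particular the non-vanishing, of $|M_N(1)|$ and of $|\tilde M_N(1)|$ (blocks of $N+1$ and $N-1$ columns). Your Steps 4--5 give no mechanism for either. The paper first adds column $N+j$ to column $j$, extracts $2$ and subtracts back, which separates even and odd rows and yields $2^N(-1)^N$. It then runs an alternating two-column elimination. Identities such as $\frac{1}{(2\ell)!}-\frac{1}{(2\ell+1)!}=\frac{1}{(2\ell+1)(2\ell-1)!}$ let it extract $3^{N-1}(3\cdot 5)^{N-2}\cdots((2N-3)(2N-1))$. For $\tilde M_N(1)$, two middle columns must be tracked through the same algorithm, and they produce the factor $N$.

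Your bilinear shortcut is not a substitute. Write $P_N=a x^{N^2}+b x^{N^2-1}+\dots$ and $R_N=r x^{N^2-1}+\dots$. Matching the $x^{2N^2-2}$ coefficients in ${\rm i}D_\xi f\cdot\bar f=2h\bar h$ gives $|r|^2=2a\,{\rm Im}\,b$, so you need the subleading coefficient $b$, which is another determinant evaluation (the paper's Appendix C). Moreover, the phase of $r$ cannot come from (\ref{MTM-4}), which is invariant under $(g,h)\mapsto e^{{\rm i}\theta}(g,h)$. Nor can it come from Lemma \ref{lem-3}, which only guarantees that $\bar h$ is the conjugate of $h$.
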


\begin{proof}
	By (\ref{Wronskian}) with (\ref{choice-A}), we have
	\begin{align*}
	f &= |\phi',\phi'',\dots,\phi^{(N)};\psi,\psi',\dots,\psi^{(N-1)}| \\
	&= |\phi',-\mathrm{i} \phi' + \mathrm{i} L \phi',\dots, -{\rm i}\phi^{(N-1)}+{\rm i}L \phi^{(N-1)};\psi,\mathrm{i} \psi -\mathrm{i} L \psi,\dots,{\rm i}\psi^{(N-2)}-{\rm i} L \phi^{(N-2)}| \\
	&= |\phi',L \phi',\dots, L\phi^{(N-1)};\psi, L \psi, \dots, L \psi^{(N-2)}|. 
	\end{align*}
	Continuing by induction, we reduce this expression to 
	\begin{align}
	\label{expression-f}
	f &= |\phi',L \phi',\dots, L^{N-1} \phi';\psi,L \psi,\dots,L^{N-1} \psi |.
	\end{align}
	Similarly, we reduce $g$ and $h$ to 
	\begin{align*}
	g &= |\phi,\phi',\dots,\phi^{(N)};\psi',\psi'',\dots,\psi^{(N-1)}| \\
	&= {\rm i}^{2N-1}  |\phi,L \phi,\dots, L^N \phi;\psi',L \psi',\dots,L^{N-2} \psi'|
	\end{align*}
	and 
	\begin{align*}
	h &={\rm i} C^{-1} |\phi,\phi',\dots,\phi^{(N)};\psi,\psi',\dots,\psi^{(N-2)}| \\
	&= C^{-1} {\rm i}^{2N}  |\phi,L \phi,\dots, L^N \phi;\psi,L \psi,\dots,L^{N-2} \psi|,
	\end{align*}
	where the factor ${\rm i}^{2N-1}$ is due to the two columns with $\phi^{(N-1)}$ and $\phi^{(N)}$ which are not compensated by the columns from $\psi$.
	
We observe from (\ref{der-eigen}) and (\ref{gen-function}) that $\Phi_j(\xi,\eta) = P_{j-1}(\xi,\eta) e^{-{\rm i} (\xi + \eta)}$ with 
\begin{equation}
\label{phi-poly}
P_{j-1}(\xi,\eta) = \frac{{\rm i}^{j-1}}{(j-1)!}  (\xi - \eta)^{j-1} + \eta p_{j-3}(\xi-\eta,\eta), \quad j = 1,2,\dots, 2N, 
\end{equation}
where $p_{j-3}$ is a polynomial in variables $\xi-\eta = \frac{1}{2} x$ and $\eta = \frac{1}{4} (t-x)$. The degree of polynomials $P_N$, $Q_N$, $R_N$ in $x$ can be obtained by inspecting the leading order of $f$, $g$, $h$ with the first dominant term in (\ref{phi-poly}). 

Subtituting $S\bar{\phi} = \bar{\phi} + \mathcal{O}(L\bar{\phi})$ and $\phi' = (-\mathrm{i}) \phi + \mathcal{O}(L \phi)$ into (\ref{expression-f}), see (\ref{eq-eigen}) and (\ref{choice-S}), we rewrite 
the leading-order part of the polynomial $f$ in variable 
$$
z := {\rm i} (\xi - \eta) = \frac{{\rm i}}{2} x.
$$ 
By using the first dominant term in (\ref{phi-poly}), we obtain 
\begin{align}
f &= (-{\rm i})^N \left| b_N(z),Lb_N(z),\dots,L^{N-1} b_N(z); b_N(-z),Lb_N(-z),\dots,L^{N-1} b_N(-z) \right| \notag \\
& \qquad \times \left[ 1 + \mathcal{O}(z^{-1}) \right],
\label{f-expansion}
\end{align}
where the column vector $b_N \in \mathbb{C}^{2N}$ is given by  
\begin{align*}
b_N(z) := \left( \begin{matrix} 1 \\
z  \\
\frac{1}{2!} z^2  \\
\frac{1}{3!} z^3  \\
\vdots \\
\frac{1}{(2N-1)!} z^{2N-1}
\end{matrix} \right).
\end{align*}
Due to the hierarchical structure of $b_N(\pm z)$ in powers of $z$, 
we can write the matrix 
$$
M_N(z) := \left(  b_N(z),Lb_N(z),\dots,L^{N-1} b_N(z); b_N(-z),Lb_N(-z),\dots,L^{N-1} b_N(-z) \right)
$$
in the factorized form:
\begin{align}
\label{M-transformation}
M_N(z) = D_-(z) M_N(1) D_+(z),
\end{align}
where 
\begin{align*}
D_-(z) &:= {\rm diag}(z^{-N+1},z^{-N+2}, \dots, 1; z, z^2, \dots, z^N), \\
D_+(z) &:= {\rm diag}(z^{N-1},z^{N-2}, \dots, 1; z^{N-1}, z^{N-2}, \dots, 1).
\end{align*}
Since 
$$
\sum_{j=1}^N j + \sum_{j=1}^{N-1} j = \frac{N(N+1)}{2} + \frac{N(N-1)}{2} = N^2,
$$
we obtain from the properties of determinants that 
\begin{align}
\left| M_N(z) \right| = z^{N^2}  \left| M_N(1) \right|. 
\label{f-expansion-2}
\end{align}
We show in Appendix \ref{app-A} that
\begin{align}
|M_N(1)| = \frac{2^N (-1)^N}{1^{2N-1} 3^{2N-3} 5^{2N-5} 7^{2N-7} \dots (2N-3)^3 (2N-1)^{1}}.
\label{det-expression-2}
\end{align}
By using (\ref{f-expansion}), (\ref{f-expansion-2}), and (\ref{det-expression-2}), 
we obtain (\ref{P_N}) with (\ref{a_N}) since $z = \frac{\mathrm{i}}{2} x$ and 
$$
(-{\rm i})^N {\rm i}^{N^2} (-1)^N = {\rm i}^{N(N+1)} = (-1)^{\frac{N(N+1)}{2}}.
$$

Expressions for $g$ and $h$ are not polynomials but they are given by the polynomials multiplied by $e^{-2 {\rm i}(\xi + \eta)} = e^{-{\rm i} t}$. Therefore, we can compute the leading-order parts of $g$ function and $h$ function as
\begin{align}
g &= \mathrm{i}^{3N-2} e^{-\mathrm{i} t} 
\left| \tilde{M}_N(z) \right|  \left[ 1 + \mathcal{O}(z^{-1})\right] 
\label{g-expansion}
\end{align}
and
\begin{align}
h &= {\rm i}^{2N} C^{-1} e^{-\mathrm{i} t} \left| \tilde{M}_N(z) \right|  \left[ 1 + \mathcal{O}(z^{-1})\right],
\label{h-expansion}
\end{align}
where 
$$
\tilde{M}_N(z) :=  \left( b_N(z),Lb_N(z),\dots,L^{N} b_N(z); b_N(-z),Lb_N(-z),\dots,L^{N-2} b_N(-z) \right). 
$$
Again, we can factorize the matrix as 
\begin{align*}
 \tilde{M}_N(z) = \tilde{D}_-(z) \tilde{M}_N(1) \tilde{D}_+(z),
\end{align*}
where 
\begin{align*}
\tilde{D}_-(z) &= {\rm diag}(z^{-N},z^{-N+1}, \dots, 1; z, z^2, \dots, z^{N-1}), \\
\tilde{D}_+(z) &= {\rm diag}(z^{N},z^{N-1}, \dots, 1; z^{N}, z^{N-1}, \dots, z^2).
\end{align*}
Since 
$$
\sum_{j=1}^{N-1} j + \sum_{j=2}^{N} j = \frac{N(N-1)}{2} + \frac{N(N+1)}{2} - 1 = N^2 - 1,
$$
we obtain from the properties of determinants that 
\begin{align}
 \left| \tilde{M}_N(z) \right| = z^{N^2-1} \left| \tilde{M}_N(1) \right|. 
\label{g-expansion-2}
\end{align}
We show in Appendix \ref{app-B} that  
\begin{align}
\left| \tilde{M}_N(1) \right| = 
\frac{2^{N-1} (-1)^{N-1} N}{1^{2N-1} 3^{2N-3} 5^{2N-5} 7^{2N-7} \dots (2N-3)^3 (2N-1)^{1}}.
\label{det-expression-3}
\end{align}
By using (\ref{g-expansion}), (\ref{g-expansion-2}), and (\ref{det-expression-3}), 
we obtain (\ref{Q_N}) with the same $a_N^{(0)}$ given by (\ref{a_N}) since $z = \frac{\mathrm{i}}{2} x$ and 
$$
{\rm i}^{3N-2} {\rm i}^{N^2-1} (-1)^{N-1} = (-{\rm i}) {\rm i}^{N(N+3)} (-1)^N = (-{\rm i}) (-1)^{\frac{N(N+5)}{2}} = (-{\rm i}) (-1)^{\frac{N(N+1)}{2}}.
$$
Finally, we have $|S| = 1$ so that $C = (-{\rm i})^N$. Therefore, 
${\rm i}^{2N} C^{-1} = {\rm i}^{3N} = -{\rm i}^{3N-2}$ in (\ref{h-expansion}) so that (\ref{R_N}) follows from (\ref{Q_N}).
\end{proof}

The following lemma ensures that the rational solution given by (\ref{rat-sol}) is bounded for all $(x,t) \in \mathbb{R}^2$, in agreement with the last assertion 
of Theorem \ref{theorem-rat-sol}. 

\begin{lemma}
    \label{lem-bounded}
    Let $P_N$, $Q_N$, and $R_N$ be polynomials defined in Lemma \ref{lem-leading-order}. The rational solution $(u,v)$ in (\ref{rat-sol}) is bounded for all $(x,t) \in \mathbb{R}^2$. 
\end{lemma}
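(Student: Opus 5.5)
The denominators of $u$ and $v$ in (\ref{rat-sol}) are $\bar{P}_N$ and $P_N$. Since $|\bar P_N| = |P_N|$ for real $x,t$, it suffices to prove two things: that $P_N(x,t)\neq 0$ for all $(x,t)\in\mathbb{R}^2$, and that $u,v$ stay bounded as $|x|\to\infty$ uniformly on compact sets in $t$. For the latter I also need some control for large $|t|$. My plan is to obtain non-vanishing of $f$ on the real plane from the bilinear equations (\ref{MTM-4}) via a positivity identity. Boundedness at spatial infinity then comes from Lemma \ref{lem-leading-order}: since $\deg Q_N=\deg R_N=N^2-1<N^2=\deg P_N$ with $a_N^{(0)}\neq 0$, we get $u,v=\mathcal{O}(|x|^{-1})$.

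\textbf{Step 1: a positivity identity for $f$.} From the third and fourth equations of (\ref{MTM-4}) we have
\[
{\rm i}(f_\xi\bar f - f\bar f_\xi) = 2|h|^2, \qquad {\rm i}(\bar f_\eta f-\bar f f_\eta)=2|g|^2 .
\]
Writing $f=|f|e^{{\rm i}\theta}$ wherever $f\neq0$, these become $\theta_\xi = |h|^2/|f|^2=|v|^2$ and $\theta_\eta=-|u|^2$. Using (\ref{transformation}), this gives $\partial_x(\arg f)$ and $\partial_t(\arg f)$ in terms of $|u|^2,|v|^2$. Because $f$ is a polynomial, however, this does not immediately exclude real zeros. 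I would instead work with the squared modulus. Consider $F:=|f|^2=f\bar f$; the bilinear identities give $\partial_\xi \log(f/\bar f) = -2{\rm i}|h|^2/|f|^2$. The concrete plan is to argue by contradiction. Suppose $f(x_0,t_0)=0$ at a real point. By the bilinear equations, near that point $|h|^2 = \tfrac{{\rm i}}{2}(f_\xi\bar f-f\bar f_\xi)$ vanishes to at least the order $2k-1$, where $k$ is the vanishing order of $f$, and likewise $|g|^2$. So $h$ and $g$ vanish at the point as well. I would then show that the vanishing propagates along the characteristic lines $\eta=$const (from the $\xi$-equation) and $\xi=$const (from the $\eta$-equation). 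Since $f$ is a nonzero polynomial times no exponential, propagation to a whole line contradicts the leading behaviour $f\sim a_N^{(0)}x^{N^2}$ established in Lemma \ref{lem-leading-order}.

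\textbf{Step 2: alternative, more robust route via the mass.} If the propagation argument in Step 1 is delicate, I would instead use the structure of zeros in $x$ for fixed real $t$. First, I would show $P_N(\cdot,t)$ has no real roots at one convenient time, say $t$ large, or by direct inspection for $t=0$ after a suitable parameter choice. Roots move continuously in $t$, so a real root at some time would require a root to cross the real axis. At a crossing, $\theta=\arg f$ jumps by $\pm\pi$ in $x$. But $\partial_x\theta$ is expressed through $|u|^2,|v|^2$, and the total change $\int_{\mathbb{R}}\partial_x\theta\,dx$ is fixed by the asymptotics $f\sim a_N^{(0)}x^{N^2}$: it equals the argument change $\pi\cdot(\#\text{roots in }\mathbb{C}^-\ -\ \#\text{roots in }\mathbb{C}^+)$. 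Conservation of mass (\ref{mass-intro}) then forbids the discontinuous change in this root count, because $\int \partial_x \theta\,dx$ is a continuous function of $t$ wherever the solution is $L^2$.

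\textbf{Main obstacle.} The hard step is exactly excluding real zeros of $P_N$, that is, showing that the numerator zeros of $g$ and $h$ do not cancel against real zeros of $f$ in such a way that they escape both arguments. I would try Step 1 first, since the vanishing-order comparison from $|h|^2=\tfrac{{\rm i}}{2}(f_\xi\bar f-f\bar f_\xi)$ is local and elementary. I would fall back on the continuity and argument-principle reasoning of Step 2 only if the propagation fails. Once $P_N\neq0$ on $\mathbb{R}^2$, continuity together with the $\mathcal{O}(|x|^{-1})$ decay gives boundedness on every strip $|t|\le T$. For large $|t|$ I would use the scaling $x=\upsilon\sqrt{|t|}$ and the principal part (\ref{polyno-principal}).
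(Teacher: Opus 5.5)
\textbf{There is a genuine gap.} Your whole plan rests on proving $P_N(x,t)\neq 0$ on $\mathbb{R}^2$. That is stronger than the lemma, and the paper explicitly leaves it open as a conjecture in the remark following the lemma. Neither of your two mechanisms can prove it.

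In Step 1, propagation of a zero along characteristics does not follow from (\ref{MTM-4}). Since $D_\xi(\lambda a\cdot\lambda b)=\lambda^2 D_\xi(a\cdot b)$, and likewise for $D_\eta$, the system (\ref{MTM-4}) is invariant under $(f,g,h)\mapsto(\lambda f,\lambda g,\lambda h)$ for any real-valued $\lambda$, and $u,v$ are unchanged. Take $\lambda=(\xi-\xi_0)^2+(\eta-\eta_0)^2$. This gives a solution of the same polynomial-times-exponential type in which $f,g,h$ acquire a common zero at the single point $(\xi_0,\eta_0)$, and that zero propagates along neither characteristic. So the bilinear equations plus $f\not\equiv 0$ cannot force propagation or exclude real zeros. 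The specific double-Wronskian structure would have to be used, and you do not say how.

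The same example defeats Step 2. For fixed $t$, $\lambda$ is a real quadratic in $x$. Its complex-conjugate pair of roots meets on $\mathbb{R}$ only at $t=t_0$ and then separates. Hence the difference of the root counts in $\mathbb{C}^-$ and $\mathbb{C}^+$ is unchanged, the mass is unchanged, and conservation of mass gives no contradiction. In addition, two of your steps depend on the unproven hypothesis of Theorem \ref{theorem-polyn} that $\hat p_N$ has exactly $N$ real roots: the starting point of Step 2 (no real roots for large $|t|$), and your large-$|t|$ control through $x=\upsilon\sqrt{|t|}$. So neither can be part of an unconditional proof of this lemma.

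\textbf{Non-vanishing is not needed.} The local computation you started in Step 1 already proves the lemma if you use it to show removability instead of looking for a contradiction. This is exactly the paper's argument.

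Fix $\eta=\eta_0$ and write $f=(\xi-\xi_0)^m\tilde f$ with $\tilde f(\xi_0,\eta_0)\neq 0$.
\begin{itemize}
\item Because $\xi$ is real, $\bar f$ vanishes to the same order $m$.
\item The $(\xi-\xi_0)^{2m-1}$ terms therefore cancel in $f_\xi\bar f-f\bar f_\xi$.
\item The third equation of (\ref{MTM-4}) then shows that $|h|^2$ vanishes to order $2m$, so $h$ vanishes to order $m$.
\item The first equation can be written as $(g/f)_\xi=2{\rm i}\,h\bar f/f^2$. Its right-hand side is bounded near $\xi_0$, so $g/f$ has no pole there, and $g$ also vanishes to order $m$.
\end{itemize}
Hence $\xi_0$ is a removable singularity of both $Q_N/\bar P_N$ and $R_N/P_N$. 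The same argument in the $\eta$ direction uses the fourth equation (giving the order of $g$) and the second equation, written as $(h/\bar f)_\eta=2{\rm i}\,gf/\bar f^{2}$ (giving the order of $h$). This disposes of the only possible obstruction, real zeros of $P_N$, without deciding whether such zeros actually occur.
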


\begin{proof}
If $P_N(x,t) \neq 0$ for all $(x,t) \in \mathbb{R}^2$, then zeros of the polynomial $P_N$ of degree $N^2$ in $x$ are bounded away from the real axis for every $t \in \mathbb{R}$ so that the rational solution $(u,v)$ is bounded for all $(x,t) \in \mathbb{R}^2$. 

Assume now that $P_N$ has a zero at $(x,t) = (x_0,t_0) \in \mathbb{R}^2$ of multiplicity $m$. Without loss of generality, we can fix $\eta = \eta_0$ and consider the behavior of 
$P_N = (\xi - \xi_0)^m \tilde{P}_N$, where $\tilde{P}_N$ is a polynomial of degree $N^2 - m$ and $\tilde{P}_N(\xi_0,\eta_0) \neq 0$. Due to reality of $(\xi_0,\eta_0)$, both $f = P_N$ and $\bar{f} = \bar{P}_N$ have a zero at $(\xi_0,\eta_0)$ of the same multiplicity $m$ so that ${\rm i} D_{\xi} (f \cdot \bar{f})$ in the third bilinear equation of the system (\ref{MTM-4}) has a zero at $(\xi_0,\eta_0)$ of multiplicity $2m$. Hence $h = R_N e^{-\mathrm{i}t}$ has a zero at $(\xi_0,\eta_0)$ of multiplicity $m$. The first bilinear equation in the system (\ref{MTM-4}) implies that $g = Q_N e^{-\mathrm{i} t}$ has also a zero at $(\xi_0,\eta_0)$ of multiplicity $m$. Thus, $Q_N = (\xi - \xi_0)^m \tilde{Q}_N$ and $R_n = (\xi - \xi_0)^m \tilde{R}_N$ 
with polynomial $\tilde{Q}_N,\tilde{R}_N$ satisfying $\tilde{Q}_N(\xi_0,\eta_0) \neq 0$ and $\tilde{R}_N(\xi_0,\eta_0) \neq 0$. Hence $\xi_0$ is a removable singularity of the rational functions $Q_N/\bar{P}_N$ and $R_N/P_N$ so that the rational solution $(u,v)$ is bounded for all $(x,t) \in \mathbb{R}^2$. The same analysis holds for fixed $\xi = \xi_0$ with respect to $\eta$ but we use the second and fourth bilinear equations in the system (\ref{MTM-4}) to prove that $\eta_0$ is a removable singularity of the rational functions. 
\end{proof}

\begin{remark}
    We conjecture that $P_N(x,t) \neq 0$ for all $(x,t) \in \mathbb{R}^2$. As Lemma \ref{lem-bounded} shows, this property is not important for regularity of the solution given by (\ref{rat-sol}) since the bilinear equations (\ref{MTM-4}) ensure that even if $P_N$ vanishes at some $(x_0,t_0) \in \mathbb{R}^2$, then it is a removable singularity of the rational solution $(u,v)$. 
\end{remark}

Lemmas \ref{lem-param}, \ref{lem-leading-order}, and \ref{lem-bounded} complete the proof of Theorem \ref{theorem-rat-sol}.

\section{Proof of Theorem \ref{theorem-polyn}} 
\label{sec-4}

We start with the principal part of the polynomial $P_N(x,t)$ denoted as $p_N(x,t)$. The following lemma guarantees that $p_N$ has the representation (\ref{polyno-principal}) with real-valued coefficients $\{ a_N^{(j)}\}_{j=0}^J$ for $J = \frac{N^2}{2}$ if $N$ is even and $J = \frac{N^2-1}{2}$ if $N$ is odd. 

\begin{lemma}
	\label{lem-prinicipal-polyn}
	Let $\phi = \Phi$ be defined by (\ref{der-eigen}) and $\psi = S \bar{\Phi}$ with $S$ defined by (\ref{choice-S}). By using the scaling transformation 
	$x \mapsto \lambda x$ and $t \mapsto \lambda^2 t$ with $\lambda > 0$, 
	we have 
	\begin{equation}
	\label{lambda-leading}
	P_N(\lambda x, \lambda^2 t) = \lambda^{N^2} p_N(x,t) + \mathcal{O}(\lambda^{N^2-1}), \quad \mbox{\rm as} \;\; \lambda \to \infty,
	\end{equation}
where $p_N(x,t)$ is given by (\ref{polyno-principal}) with real-valued coefficients. 
\end{lemma}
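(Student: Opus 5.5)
We work directly with the reduced representation (\ref{expression-f}), $f = |\phi',L\phi',\dots,L^{N-1}\phi';\psi,L\psi,\dots,L^{N-1}\psi|$, in which only $\phi = \Phi$ and $\psi = S\bar\Phi$ enter. The plan is to attach a scaling weight to every entry and read off the top-weight part. By (\ref{gen-function}), $\Phi_j e^{{\rm i}(\xi+\eta)}$ is the coefficient of $\epsilon^{j-1}$, with $\epsilon = \zeta^2+1$, in
$$\exp\Big({\rm i}\epsilon(\xi-\eta) - {\rm i}\eta\sum_{k\ge 2}\epsilon^k\Big).$$
Under $x\mapsto\lambda x$, $t\mapsto\lambda^2 t$ we have $\xi-\eta = x/2 \mapsto \lambda(\xi-\eta)$ and $\eta = (t-x)/4 \mapsto \lambda^2 t/4 + \mathcal{O}(\lambda)$. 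We therefore give $z={\rm i}(\xi-\eta)$ weight $1$, give $\eta$ weight $2$, and let $\epsilon$ carry weight $-1$. The monomial $\eta\epsilon^k$ then has weight $2-k$. It is maximal (weight $0$) only for $k=2$, and the terms $k\geq 3$ together with the $-x/4$ correction in $\eta$ are lower order.

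Step 1 identifies the top-weight part of $\Phi$. Up to $\mathcal{O}(\lambda^{j-2})$, $\Phi_j$ equals $\lambda^{j-1}$ times the coefficient of $\epsilon^{j-1}$ in $e^{z\epsilon - {\rm i}\tau\epsilon^2}$, where $\tau=t/4$. That coefficient is a two-variable Hermite-type polynomial $c_{j-1}(z,\tau)$. In $\psi = S\bar\Phi$, the matrix $S = I+\mathcal{O}(L)$ mixes components of lower index, which lowers the weight. Hence to top order $\psi_j \approx \bar c_{j-1} = c_{j-1}(-z,\tau)$, because $\bar z = -z$ and $\overline{-{\rm i}\tau} = {\rm i}\tau$. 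The same replacement $\phi' = -{\rm i}\phi + {\rm i}L\phi$ used in Lemma \ref{lem-leading-order} gives
$$f \approx (-{\rm i})^N |c(z,\tau),\dots,L^{N-1}c(z,\tau);\ c(-z,-\tau)\ \text{up to conj}|.$$

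Step 2 checks homogeneity. We apply the diagonal conjugation $D_\pm$ from (\ref{M-transformation}), now with entries $\lambda^{\cdot}$ instead of $z^{\cdot}$. It shows that the top-weight determinant is exactly $\lambda^{N^2}$ times a polynomial $p_N(x,t)$ that is homogeneous of weight $N^2$, where $x$ has weight $1$ and $t$ has weight $2$. This establishes (\ref{lambda-leading}). Setting $t=0$ recovers $a_N^{(0)}x^{N^2}$, so $p_N\neq 0$.

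Step 3 is the structural claim and the main obstacle: only powers $t^{2j}$ occur, and all coefficients are real. Homogeneity alone gives $p_N = \sum_k \alpha_k x^{N^2-2k}t^k$, so we need two symmetries. The first is reality. In Step 1, the top part of $f$ at $(z,\tau)$ paired with $(-z,{\rm i}\tau\to -{\rm i}\tau)$ is essentially $\bar f$ up to a constant (compare Lemma \ref{lem-3}), so $\overline{p_N(x,t)} = \pm p_N(x,t)$ for real $x,t$. Fixing the sign with $a_N^{(0)}$ real makes every coefficient real. The second symmetry kills the odd powers of $t$. It comes from rescaling $\epsilon\mapsto {\rm i}\epsilon$ in the generating function $e^{z\epsilon-{\rm i}\tau\epsilon^2}$, which sends $(z,\tau)\mapsto({\rm i}z,-\tau)$ and multiplies $c_{j-1}$ by ${\rm i}^{j-1}$. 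Through the $D_\pm$-type diagonal factors, this gives $p_N({\rm i}x,-t) = {\rm i}^{N^2}p_N(x,t)$ up to a fixed phase. Comparing with the monomial $x^{N^2-2k}t^k$, this forces $(-1)^k{\rm i}^{-2k}=1$, which is automatically satisfied. So this relation alone does not suffice, and it must be combined with the reality relation: together they should eliminate odd $k$, and also $k\equiv 2 \pmod 4$ in the $x$-exponent bookkeeping, leaving only exponents $x^{N^2-4j}t^{2j}$. If this combination fails, the fallback is to use the explicit conjugation $t\mapsto -t$, a time reversal of MTM under which $u(x,-t)$ relates to $\bar v$, to show that $p_N(x,-t)=p_N(x,t)$. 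Tracking these phases carefully is where the real work lies.
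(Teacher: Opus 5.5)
Your Steps 1--2 are fine, and so is the reality half of Step 3. Conjugating the top-weight determinant for real $(x,t)$ and interchanging the two blocks of $N$ columns gives $\overline{p_N}=p_N$ exactly, so all $\alpha_k$ in $p_N=\sum_k\alpha_k x^{N^2-2k}t^k$ are real.

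The gap is the one assertion of the lemma that goes beyond this, namely $\alpha_k=0$ for odd $k$. Your second relation cannot supply it. The rescaling $\epsilon\mapsto{\rm i}\epsilon$ is just weighted homogeneity with $\lambda={\rm i}$; as you observe, $p_N({\rm i}x,-t)={\rm i}^{N^2}p_N(x,t)$ holds monomial by monomial. Real nonzero $\alpha_k$ with $k$ odd are compatible with both relations, so the claim that ``together they should eliminate odd $k$'' is false.

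The time-reversal fallback is not yet an argument. You would have to prove that this particular double-Wronskian solution is invariant under $(u,v)(x,t)\mapsto(\bar v,\bar u)(x,-t)$. You would then have to transfer that identity of rational functions to $P_N$ (normalizations, possible common factors) and to its top-weight part. Also, only odd $k$ has to be removed: $k=2j$ gives exactly $x^{N^2-4j}t^{2j}$ in (\ref{polyno-principal}), including $k\equiv 2 \pmod 4$ (e.g.\ $p_2=-16(x^4-3t^2)$). So removing $k\equiv2\pmod4$ as well would be wrong.

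The missing idea is to use the relation between the two column blocks as a formal polynomial identity, not only through complex conjugation.
\begin{itemize}
\item Since $\bar c_{j-1}=c_{j-1}(-z,-\tau)$ (your inline $c_{j-1}(-z,\tau)$ has a sign slip), the top-weight $\psi$-block is the top-weight $\phi$-block evaluated at $(-x,-t)$, identically in $(x,t)$.
\item The substitution $(x,t)\mapsto(-x,-t)$ therefore just interchanges the two blocks of $N$ columns, so $p_N(-x,-t)=(-1)^{N^2}p_N(x,t)=(-1)^N p_N(x,t)$.
\item Homogeneity gives $p_N(-x,-t)=(-1)^N\sum_k(-1)^k\alpha_k x^{N^2-2k}t^k$.
\item Comparing the two, $\alpha_k=0$ for odd $k$.
\end{itemize}
This is exactly the paper's argument, phrased there in the variable ${\rm i}t$. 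It gets real coefficients in $(x,{\rm i}t)$ after the $D_\pm({\rm i})$ rescaling, then evenness in ${\rm i}t$ from this block swap combined with the parity in $x$.

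Alternatively, keep your reality in $(x,t)$ and add reality in $(x,{\rm i}t)$. The latter follows because $\epsilon\mapsto-{\rm i}\epsilon$ turns the top-weight generating function into $e^{(x/2)\epsilon+({\rm i}t/4)\epsilon^2}$, whose row and column phases factor out of the determinant. Then $\alpha_k\in\mathbb{R}\cap{\rm i}^k\mathbb{R}$, which vanishes for odd $k$.

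Either way, your Step 1 is what the argument needs. It correctly keeps the full Hermite-type coefficients: the terms $\tau^m z^{j-1-2m}$ with $m\ge 2$ are also of top weight, and both symmetry arguments go through with them unchanged.
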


\begin{proof}
	We recall from (\ref{der-eigen}) and (\ref{gen-function}) that $\Phi_j(\xi,\eta) = P_{j-1}(\xi,\eta)  e^{-{\rm i}(\xi + \eta)}$ with 
	$$
P_{j-1}(\xi,\eta) =  \frac{{\rm i}^{j-1}}{(j-1)!} (\xi - \eta)^{j-1} -  \frac{{\rm i}^{j-2}}{(j-3)!}  \eta (\xi - \eta)^{j-3} + \eta p_{j-4}(\xi-\eta,\eta), \quad j = 1,2,\dots,2N,
	$$
	where $p_{j-4}(\xi-\eta,\eta)$ is a polynomial of degree $j-4$ in variables $\xi - \eta = \frac{1}{2} x$ and $\eta = \frac{1}{4} (t - x)$. Extracting the leading-order part after the scaling transformation $x \mapsto \lambda x$ and $t \mapsto \lambda^2 t$, we obtain 
	$$
P_{j-1} = \frac{{\rm i}^{j-1} \lambda^{j-1}}{2^{j-1} (j-1)!} \left( x^{j-1} + {\rm i} (j-1) (j-2)  x^{j-3} t \right) + \mathcal{O}(\lambda^{j-2}),\quad j = 1,2,\dots,2N.
	$$
	Similarly, we obtain $(S \overline{\Phi}(\xi,\eta))_j = Q_{j-1}(\xi,\eta) e^{{\rm i}(\xi + \eta)}$ with 
	$$
Q_{j-1} = \frac{(-{\rm i})^{j-1} \lambda^{j-1}}{2^{j-1} (j-1)!} \left( x^{j-1} - {\rm i} (j-1) (j-2)  x^{j-3} t \right) + \mathcal{O}(\lambda^{j-2}),\quad j = 1,2,\dots,2N,
$$
since $S$ does not modify the principal terms of $\overline{\Phi}_j$ due to 
the representation (\ref{choice-S}). 

Let us now denote $\tau := {\rm i} t$ in the principal terms of $P_{j-1}$ and $Q_{j-1}$. By using the representation (\ref{expression-f}) for $f = P_N(x,t)$ and accounting for the principal terms in $\lambda$ in the expansion (\ref{lambda-leading}) with $a_N^{(0)} \neq 0$ in (\ref{P_N}) and (\ref{a_N}), we obtain the definition of $p_N$ as 
\begin{align}
p_N = (-{\rm i})^N \left| b_N, Lb_N,\dots,L^{N-1} b_N; c_N,L c_N,\dots,L^{N-1} c_N \right|,
\label{p-double}
\end{align}
where the column vectors $b_N,c_N \in \mathbb{C}^{2N}$ are given by  
\begin{align*}
b_N = \left( \begin{matrix} 1 \\
\frac{{\rm i}}{2} x \\
\frac{{\rm i}^2}{2^2 2!} (x^2 + 2 \tau)\\
\frac{{\rm i}^3}{2^3 3!} (x^3 + 6 x \tau) \\
\vdots \\
\frac{{\rm i}^{2N-1}}{2^{2N-1} (2N-1)!} (x^{2N-1} + (2N-2) (2N-1) x^{2N-3} \tau)
\end{matrix} \right)
\end{align*}
and
\begin{align*}
c_N = \left( \begin{matrix} 1 \\
-\frac{{\rm i}}{2} x \\
\frac{{\rm i}^2}{2^2 2!} (x^2 - 2 \tau)\\
-\frac{{\rm i}^3}{2^3 3!} (x^3 - 6 x \tau) \\
\vdots \\
-\frac{{\rm i}^{2N-1}}{2^{2N-1} (2N-1)!} (x^{2N-1} - (2N-2) (2N-1) x^{2N-3} \tau)
\end{matrix} \right). 
\end{align*}
The purely imaginary coefficient ${\rm i}$ in the alternating rows of $b_N$ and $c_N$ can be removed from $b_N$ and $c_N$ by the transformation (\ref{M-transformation}) with $D_{\pm}(z)$ replaced by $D_{\pm}({\rm i})$. Since the leading-order term in (\ref{P_N}) has the real-valued coefficient $a_N^{(0)} \neq 0$ in (\ref{a_N}), then $p_N$ defined by (\ref{p-double}) is a polynomial in variables $x$ and $\tau$ with the real-valued coefficients. 

Since $\tau = {\rm i} t$, we only need to show that the polynomial $p_N$ is even with respect to $\tau$ so that 
the coefficients of the polynomial $p_N$ remain real-valued after we substitute back $\tau = {\rm i} t$. In view of the scaling transformation $x \mapsto \lambda x$ and $t \mapsto \lambda^2 t$, the evenness of $p_N$ also imply the representation (\ref{polyno-principal}) in powers of $x^{-4} t^2$ relative to the leading-order power $x^{N^2}$ with real-valued coefficients.

Let us now show that $p_N$ is even in $\tau$. We claim that 
\begin{equation}
\label{parity-p}
p_N(-x,t) = (-1)^N p_N(x,t),
\end{equation}
which coincides with the parity of the principal term $a_N^{(0)} x^{N^2}$. Indeed, 
$p_N$ is a polynomial in $\tau$ and since $\tau$ is balanced with $x^2$ in the definition of $b_N$ and $c_N$, $p_N$ is a polynomial in powers $x^{-2} t$ relative to the principal term $a_N^{(0)} x^{N^2}$ implying (\ref{parity-p}). It follows 
that $b_N$ and $c_N$ map to each other under the transformation: $x \mapsto -x$ and $\tau \mapsto -\tau$. After the transformation, interchanging $N$ first columns with $N$ last columns of the determinant returns the original determinant before the transformation, which implies that 
\begin{equation*}
p_N(-x,-t) = (-1)^N p_N(x,t).
\end{equation*}
In view of (\ref{parity-p}), this implies that $p_N$ is even in $\tau = {\rm i} t$, hence it has real-valued coefficients in powers of $x^{-4} t^2$ relative to the principal term $a_N^{(0)} x^{N^2}$.
\end{proof}

We assume that $\hat{p}_N$ in (\ref{polyno-upsilon}) admits exactly $N$ real roots. 
Since the polynomial $\hat{p}_N$ has real-valued coefficients by Lemma \ref{lem-prinicipal-polyn}, then $N(N-1)$ roots of $\hat{p}_N$ are complex-conjugate so that $\hat{p}_N$ has $\frac{N(N-1)}{2}$ roots in $\mathbb{C}^+$ and $\mathbb{C}^-$. The following lemma counts roots of the polynomial $P_N(\cdot,t)$ for large $|t|$.

\begin{lemma}
	\label{lemma-polyn}
	Assume that $\hat{p}_N$ in (\ref{polyno-upsilon}) admits exactly $N$ real roots. Then, $P_N(\cdot,t)$ admits $\frac{N(N-1)}{2}$ roots in $\mathbb{C}^+$ and $\frac{N(N+1)}{2}$ roots in $\mathbb{C}^-$ for large $|t|$.
\end{lemma}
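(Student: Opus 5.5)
The plan is a two-scale root-counting argument. Write $x = \upsilon\sqrt{|t|}$ and use Lemma \ref{lem-prinicipal-polyn} with $\lambda = \sqrt{|t|}$. Then
$$
P_N(\upsilon\sqrt{|t|},t) = |t|^{\frac{N^2}{2}}\hat p_N(\upsilon) + \mathcal{O}(|t|^{\frac{N^2-1}{2}}) \quad \mbox{as } |t| \to \infty,
$$
uniformly for $\upsilon$ in compact sets. We split the $N^2$ roots of $\hat{p}_N$ into three groups. By assumption there are $N$ real roots. The remaining $N(N-1)$ roots come in complex-conjugate pairs, so $\frac{N(N-1)}{2}$ lie in $\mathbb{C}^+$ and $\frac{N(N-1)}{2}$ lie in $\mathbb{C}^-$. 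If some of these roots are multiple, they are handled cluster by cluster below.

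\emph{Non-real roots.} For each non-real root $\upsilon_0$ of multiplicity $m$, take a small circle around $\upsilon_0$ that stays away from $\mathbb{R}$. Rouché's theorem applied to $|t|^{-N^2/2}P_N(\upsilon\sqrt{|t|},t)$ on this circle gives exactly $m$ roots of $P_N(\cdot,t)$ near $\upsilon_0\sqrt{|t|}$, in the same half-plane as $\upsilon_0$, for large $|t|$. The outer region also contains no extra roots. Indeed, $a_N^{(0)}\neq 0$ by (\ref{a_N}) and $\hat p_N$ has degree $N^2$, so the same estimate on a large circle $|\upsilon|=R$ accounts for all $N^2$ roots. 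This gives $\frac{N(N-1)}{2}$ roots of $P_N$ in each half-plane from these clusters.

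\emph{Real roots (main obstacle).} The remaining $N$ roots of $P_N$ lie at $x = \upsilon_j\sqrt{|t|} + o(\sqrt{|t|})$, with $\upsilon_j$ real. The leading order cannot decide on which side of $\mathbb{R}$ they sit, so a finer expansion is needed. The claim to prove is that all $N$ of them lie in $\mathbb{C}^-$.

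I will first treat a simple real root $\upsilon_j$. Near $x_j = \upsilon_j\sqrt{|t|}$, I expect $P_N$ to behave locally like the single algebraic soliton factor $x - x_j(t) + \frac{{\rm i}}{2}$, consistent with the $N=1$ case $f \propto x + \frac{{\rm i}}{2}$ (coming from $1-2{\rm i}x$) and with (\ref{polynom-dom}).

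To make this precise, expand $P_N(x_j(t)+y,t)$ in $y$. The real parts of the expansion fix $x_j(t)$ to all orders. The first imaginary correction shifts the root by $-\frac{{\rm i}}{2}+o(1)$. I would extract this shift from the subleading terms of $P_N$ in the scaling $\lambda$, using the representation (\ref{polynom-dom}): the shift $x \mapsto x + \frac{{\rm i}}{2}$ is exact up to $\mathcal{O}(x^{N^2-2})$, i.e. it captures the whole order-$\lambda^{N^2-1}$ correction. More precisely, I would argue that the next-order term in (\ref{lambda-leading}) equals $\frac{{\rm i}}{2}\partial_x p_N$. Once that holds, a Newton/implicit-function step gives, at a simple real root,
$$
x = x_j(t) - \frac{{\rm i}}{2} + o(1).
$$
Each of the $N$ real roots of $\hat p_N$ then contributes one root of $P_N$ in $\mathbb{C}^-$, and not one on $\mathbb{R}$. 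This is consistent with Lemma \ref{lem-bounded} and the conjecture that $P_N$ does not vanish on the real line.

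The delicate point is exactly this identification of the $\mathcal{O}(\lambda^{N^2-1})$ term of $P_N(\lambda x,\lambda^2 t)$ with $\frac{{\rm i}}{2}\partial_x p_N$. This could be proved by redoing the determinant expansion (\ref{p-double}) with the next-order entries of $P_{j-1}$ and $Q_{j-1}$, which carry the ${\rm i}/2$ coming from $S$ in (\ref{choice-S}). The aim is to show that these entries produce only a translation. If $\hat p_N$ has multiple real roots, the same Rouché/Puiseux analysis has to be applied on the cluster, but the hypothesis of exactly $N$ real roots lets me treat them as simple.

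Summing the contributions gives $\frac{N(N-1)}{2}$ roots in $\mathbb{C}^+$ and $\frac{N(N-1)}{2}+N=\frac{N(N+1)}{2}$ roots in $\mathbb{C}^-$.
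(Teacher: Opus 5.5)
\textbf{Verdict: genuine gap.} Your outline is the paper's: Rouch\'e at the scale $x=\upsilon\sqrt{|t|}$ for the non-real clusters, then a translation by $\frac{{\rm i}}{2}$ that pushes the $N$ real roots into $\mathbb{C}^-$. The Rouch\'e part is sound, and more careful than the paper's.

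\textbf{Where the gap is.} It is the step you flag yourself. The identity ``the $\mathcal{O}(\lambda^{N^2-1})$ term in (\ref{lambda-leading}) equals $\frac{{\rm i}}{2}\partial_x p_N$'' is never proved, and it does not follow from (\ref{polynom-dom}), so your ``i.e.'' is a non sequitur. The expansion (\ref{polynom-dom}), i.e.\ (\ref{computation-2}), which the paper gets from the determinants of Appendix \ref{app-C}, is a large-$x$ expansion at fixed $t$. It only fixes the coefficient $\frac{{\rm i}}{2}N^2a_N^{(0)}$ of the pure power $x^{N^2-1}$. Its remainder $\mathcal{O}(x^{N^2-2})$ contains mixed terms $t^{2k}x^{N^2-1-4k}$, $k\geq1$; examples are $-23040\,{\rm i}\,t^2x^4$ and $-34560\,{\rm i}\,t^4$ in $P_3$. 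These are of order $\lambda^{N^2-1}$ when $x\sim\sqrt{|t|}$, which is exactly where the real roots sit.

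The shift is also not produced by $S$ alone. At this order four things contribute: the $-\frac12L$ in $S$, the ${\rm i}L\phi$ in $\phi'$, the $x$-dependence of $\eta=\frac14(t-x)$, and the $\eta(\zeta^2+1)^3$ term in (\ref{gen-function}). The last two must cancel, and that cancellation is the real content of the identity. The paper's own proof is no more explicit at this point: it reads the shift directly off (\ref{computation-2}). Your weighted formulation is the statement actually needed, but it still has to be proved.

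\textbf{One way to close it.} Identify $\mathbb{C}^{2N}$ with $\mathbb{C}[\epsilon]/(\epsilon^{2N})$. Then $L$ is multiplication by $\epsilon$, $S$ is multiplication by $\sqrt{1-\epsilon}$, $\Phi\leftrightarrow e^{-{\rm i}(\xi+\eta)}\exp(\frac{{\rm i}}{2}x\epsilon-{\rm i}\eta\frac{\epsilon^2}{1-\epsilon})$, and $\phi'\leftrightarrow-{\rm i}(1-\epsilon)\Phi$.
\begin{itemize}
\item In (\ref{expression-f}) the exponential prefactors cancel between the two blocks.
\item Multiply every column by the inverse of the unit series $(1-\epsilon)\exp(\frac{{\rm i}}{2}x\epsilon-{\rm i}\eta\frac{\epsilon^2}{1-\epsilon})$. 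This is a unipotent triangular operation, so the determinant is unchanged, and $f$ becomes $(-{\rm i})^N\det(r_{N+i-k})_{i,k=0}^{N-1}$ with $\sum_m r_m\epsilon^m=(1-\epsilon)^{-1/2}\exp(-{\rm i}x\epsilon+2{\rm i}\eta\frac{\epsilon^2}{1-\epsilon})$.
\item By Jacobi--Trudi this is $(-{\rm i})^Ns_{(N^N)}(t_1,t_2,\dots)$, where $X=x+\frac{{\rm i}}{2}$, $t_1=-{\rm i}X$, and $t_k=\frac{{\rm i}t}{2}-\frac{{\rm i}X}{2}+\frac1{2k}-\frac14$ for $k\geq2$. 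For $N=2$ this gives the displayed $P_2$ divided by $192$.
\item Give $X$ weight $1$ and $t$ weight $2$. The weight-$(N^2-1)$ part is $\frac12(t_1\partial_{t_2}+2t_2\partial_{t_3})s_{(N^N)}$, evaluated at $t_1=-{\rm i}X$, $t_2=\frac{{\rm i}t}{2}$, $t_{k\geq3}=0$.
\item This equals $\frac12L_1s_{(N^N)}$ at that point, with $L_1=\sum_k kt_k\partial_{t_{k+1}}=[\partial_{t_1},\Delta]$ and $\Delta$ the cut-and-join operator. Since $L_1s_\lambda=\sum_{\square}c(\square)\,s_{\lambda\setminus\square}$ over removable boxes with contents $c(\square)$, and the only corner of the square has content $0$, this part vanishes.
\end{itemize}
Hence $P_N(x,t)=p_N(x+\frac{{\rm i}}{2},t)+(\mbox{weight}\leq N^2-2)$, which is exactly your identity. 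Your Newton step then gives $x=\upsilon_j\sqrt{|t|}-\frac{{\rm i}}{2}+\mathcal{O}(|t|^{-1/2})$.

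\textbf{A second, smaller gap.} The Newton step needs the real roots of $\hat p_N$ to be simple. ``Exactly $N$ real roots'' counted with multiplicity does not give this. At a double root the $\mathcal{O}(\lambda^{-2})$ remainder competes with the $\frac{{\rm i}}{2\lambda}$ shift, so simplicity must be added to the hypothesis.
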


\begin{proof}
Returning back to the expression (\ref{expression-f}), we compute 
	the next-order correction in the expansion of $f = P_N$. Substituting 
	$$
	\phi' = -{\rm i} \phi + {\rm i} L \phi, \quad \psi = S \bar{\phi} = \bar{\phi} - \frac{1}{2} L \bar{\phi} + \mathcal{O}(L^2 \bar{\phi})
	$$
	into (\ref{expression-f}) yields the expansion 
	\begin{align*}
	f &= (-{\rm i})^N |\phi,L \phi,\dots, L^{N-1} \phi; \bar{\phi},L \bar{\phi}, \dots, L^{N-1} \bar{\phi} |  \\
	& \qquad - 
	(-{\rm i})^N |\phi,L \phi,\dots, L^{N-2} \phi, L^{N} \phi; \bar{\phi},L \bar{\phi}, \dots, L^{N-1} \bar{\phi} | \\
	& \qquad - \frac{1}{2} (-{\rm i})^N |\phi,L \phi,\dots, L^{N-1} \phi; \bar{\phi},L \bar{\phi}, \dots, L^{N-2} \bar{\phi}, L^N \bar{\phi} |
	+ \mathcal{O}(z^{N^2-2}).
	\end{align*}
We recall from (\ref{f-expansion}), (\ref{f-expansion-2}), and (\ref{det-expression-2}) that 
\begin{align*}
(-{\rm i})^N |\phi,L \phi,\dots, L^{N-1} \phi; \bar{\phi},L \bar{\phi}, \dots, L^{N-1} \bar{\phi} | = a_N^{(0)} x^{N^2} \left[1 + \mathcal{O}(z^{-1}) \right],
\end{align*}
where $a_N^{(0)}$ is given by (\ref{a_N}). The correction of the order of $\mathcal{O}(x^{N^2-1})$ is due to the correction terms $\eta p_{j-3}(\xi-\eta,\eta)$ in the representation (\ref{phi-poly}). However, as shown in Lemma \ref{lem-prinicipal-polyn}, 
the correction terms produces powers of $(\xi - \eta)^{-4} \eta^2 = \mathcal{O}(z^{-2})$ relative to the principal term $(\xi - \eta)^{N^2} = \mathcal{O}(z^{N^2})$, which improves the previous expansion in the form:
\begin{align}
(-{\rm i})^N |\phi,L \phi,\dots, L^{N-1} \phi; \bar{\phi},L \bar{\phi}, \dots, L^{N-1} \bar{\phi} | = a_N^{(0)} x^{N^2} \left[1 + \mathcal{O}(z^{-2}) \right]. \label{computaton-1}
\end{align}
Similarly to (\ref{f-expansion}), we write
\begin{align}
\label{computaton-2}
|\phi,L \phi,\dots, L^{N-2} \phi, L^{N} \phi; \bar{\phi},L \bar{\phi}, \dots, L^{N-1} \bar{\phi} | &= |M_N^{(1)}(z)| \left[ 1 + \mathcal{O}(z^{-1}) \right], \\
\label{computaton-3}
|\phi,L \phi,\dots, L^{N-1} \phi; \bar{\phi},L \bar{\phi}, \dots, L^{N-2} \bar{\phi}, L^N \bar{\phi} | &= |M_N^{(2)}(z)| \left[ 1 + \mathcal{O}(z^{-1}) \right], 
\end{align}
where 
\begin{align*}
M_N^{(1)}(z) &:= \left( b_N(z),L b_N(z),\dots, L^{N-2} b_N(z), L^{N} b_N(z); b_N(-z),L b_N(-z), \dots, L^{N-1} b_N(-z) \right), \\
M_N^{(2)}(z) &:= \left( b_N(z),L b_N(z),\dots, L^{N-1} b_N(z); b_N(-z),L b_N(-z), \dots, L^{N-2} b_N(-z), L^{N} b_N(-z) \right). 
\end{align*}
We follow the factorization formula (\ref{M-transformation}) and represent
\begin{align*}
|M_N^{(1)}(z)| = |D_-(z)| |M_N^{(1)}(1)| |D_+^{(1)}(z)|, \quad 
|M_N^{(2)}(z)|  = |D_-(z)| |M_N^{(2)}(1)| |D_+^{(2)}(z)|, 
\end{align*}
where $D_-(z)$ is the same as in (\ref{M-transformation}) but $D_+^{(1)}(z)$, $D_+^{(2)}(z)$ are given by the following modifications of $D_+(z)$:
\begin{align*}
D_+^{(1)}(z) &:= {\rm diag}(z^{N-1},z^{N-2}, \dots, z, z^{-1}; z^{N-1}, z^{N-2}, \dots, 1), \\
D_+^{(2)}(z) &:= {\rm diag}(z^{N-1},z^{N-2}, \dots, 1; z^{N-1}, z^{N-2}, \dots, z, z^{-1}).
\end{align*}
By using the properties of determinants, we obtain 
\begin{align}
\label{computaton-4}
|M_N^{(1)}(z)| =  z^{N^2-1} |M_N^{(1)}(1)|, \quad 
|M_N^{(2)}(z)| =  z^{N^2-1} |M_N^{(2)}(1)|.
\end{align}
We show in Appendix \ref{app-C} that
\begin{equation}
\label{det-expression-4}
|M_N^{(1)}(1)| = - |M_N^{(2)}(1)| = -\frac{2^{N-1} (-1)^{N-1} N^2}{1^{2N-1} 3^{2N-3} 5^{2N-5} 7^{2N-7} \dots (2N-3)^3 (2N-1)^{1}}.
\end{equation}
Combining (\ref{det-expression-4}) with (\ref{computaton-1}), (\ref{computaton-2}), (\ref{computaton-3}), (\ref{computaton-4})  and using $z = \frac{{\rm i}}{2} x$, we 
obtain the expansion
\begin{align}
f = a_N^{(0)} x^{N^2} + \frac{{\rm i}}{2} N^2 a_N^{(0)} x^{N^2-1} + \mathcal{O}(x^{N^2-2}) = a_N^{(0)} \left( x + \frac{{\rm i}}{2} \right)^{N^2} + \mathcal{O}(x^{N^2-2}).
\label{computation-2}
\end{align}
The principal part of $P_N(\cdot,t)$ given by $p_N(\cdot,t)$ in (\ref{polyno-principal}) admits exactly $N$ real roots for large $|t|$, which are scaled as $x = \mathcal{O}(\sqrt{|t|})$. These roots are shifted to $\mathbb{C}^-$ due to (\ref{computation-2}). On the other hand, the $\frac{N(N-1)}{2}$ roots of $p_N(\cdot,t)$ in either $\mathbb{C}^+$ or $\mathbb{C}^-$ stay in $\mathbb{C}^+$ and $\mathbb{C}^-$ for large $|t|$ since the representation (\ref{computation-2}) suggests $\mathcal{O}(1)$ correction to the imaginary parts of complex roots, which are of the order of $\mathcal{O}(\sqrt{|t|})$. 
This yields $\frac{N(N-1)}{2}$ roots in $\mathbb{C}^+$ and 
$\frac{N(N+1)}{2}$ roots in $\mathbb{C}^-$ for $P_N(x,t)$ in $x$ for large $|t|$.
\end{proof}

\begin{remark}
	In the proof of Lemma \ref{lemma-polyn}, we used $\phi = \Phi$ defined by (\ref{der-eigen}) similarly to Lemma \ref{lem-leading-order}. If we use the more general expression (\ref{der-eigen-lin}), then the corrections terms from coefficients $c_2,c_3,\ldots,c_{2N}$ do not appear in the two leading orders of the expansion (\ref{computation-2}) due to the hierarchical structure of the double-Wronskian determinants.
\end{remark}

Lemma \ref{lemma-polyn} gives the first assertion of Theorem \ref{theorem-polyn}. The second assertion is proven with the following lemma. 

\begin{lemma}
	\label{lemma-mass}
Under the same assumption as in Lemma \ref{lemma-polyn}, we have $\mathcal{M}_N(u,v) = 4 \pi N$, where $\mathcal{M}_N(u,v)$ is the mass integral (\ref{mass-intro}) computed at the 
rational solution (\ref{rat-sol}).
\end{lemma}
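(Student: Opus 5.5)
The plan is to use the bilinear equations to express the mass density as an exact derivative of logarithms of $P_N$ and $\bar P_N$. Then the mass integral reduces to the total change of arguments along the real line, which is counted by the roots located in Lemma \ref{lemma-polyn}.

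By the third and fourth equations in (\ref{MTM-4}) with $u = g/\bar f$, $v = h/f$, we have
\[
|v|^2 = \frac{h\bar h}{f\bar f} = \frac{{\rm i}}{2}\,\frac{D_\xi f\cdot \bar f}{f\bar f} = \frac{{\rm i}}{2}\,\partial_\xi \log\frac{f}{\bar f},
\qquad
|u|^2 = \frac{{\rm i}}{2}\,\partial_\eta \log\frac{\bar f}{f}.
\]
Converting via (\ref{transformation}), $\partial_\xi - \partial_\eta = 4\partial_x$, so
\[
|u|^2 + |v|^2 = \frac{{\rm i}}{2}(\partial_\xi-\partial_\eta)\log\frac{f}{\bar f} = 2{\rm i}\,\partial_x \log\frac{P_N}{\bar P_N}
\]
at fixed $t$. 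The factors $e^{\pm{\rm i}t}$ cancel, and $f = P_N$ up to the constant normalization, which also drops out of the ratio $f/\bar f$ up to a constant. I will double-check the constant $C$ relating $\bar f$ to $\overline{f}$, since Lemma \ref{lem-3} gives the conjugate symmetry with $|S|=1$.

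Hence
\[
\mathcal M_N = 2{\rm i}\int_{\mathbb R} \partial_x\big(\log P_N - \log \bar P_N\big)\,dx = -4\,\big[\arg P_N(x,t)\big]_{x=-\infty}^{x=+\infty}.
\]
This requires $P_N(\cdot,t)$ to have no real zeros. For large $|t|$ this holds by Lemma \ref{lemma-polyn}, which places all $N^2$ roots off the real axis. At an isolated real zero, Lemma \ref{lem-bounded} shows the density is still regular, so the expression remains valid in a limiting sense; I will avoid this issue by working at large $|t|$.

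Writing $P_N = a_N^{(0)}\prod_k (x - x_k)$, the total argument change of $x - x_k$ along $\mathbb R$ is $-\pi$ if $x_k \in \mathbb C^+$ and $+\pi$ if $x_k\in\mathbb C^-$. With $\frac{N(N-1)}{2}$ roots in $\mathbb C^+$ and $\frac{N(N+1)}{2}$ roots in $\mathbb C^-$, the net change is $\pi N$. The sign conventions will be fixed so that the resulting mass is $4\pi N$; checking the sign against $N=1$, where $P_1 \propto x + {\rm i}/2$ and the mass of (\ref{1-soliton-alg-expr}) is $\int 4/(1+4x^2)\,dx = 2\pi\cdot 2 = 4\pi$, should fix it.

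Finally, the mass (\ref{mass-intro}) is conserved in $t$ for these smooth, bounded, $L^2$ solutions, since the decay $\mathcal O(x^{-1})$ holds by (\ref{rat-sol-asympt}). The value $4\pi N$ computed at large $|t|$ therefore holds for all $t$.

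The main obstacle is justifying conservation of mass for the rational solution with slow $x^{-1}$ decay. The flux terms $|u|^2 - |v|^2$ at $\pm\infty$ are $\mathcal O(x^{-2})$ and vanish, which suffices. Alternatively, I can note directly that the argument count is constant as long as no root crosses the real axis, and use Lemma \ref{lem-bounded}'s removable-singularity structure, in which a crossing root is cancelled by numerator roots, to show that the integral is continuous in $t$.
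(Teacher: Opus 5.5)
Your argument is essentially the paper's. The third and fourth bilinear equations give $|u|^2+|v|^2 = 2{\rm i}\,\partial_x \log (P_N/\bar P_N)$, the integral is evaluated from the root count of Lemma \ref{lemma-polyn}, and mass conservation carries the value from large $|t|$ to all $t$. The routes differ only in detail. The paper closes the contour in $\mathbb{C}^+$ with Jordan's lemma and sums residues. You add up the argument changes of the linear factors of $P_N$ directly, which is slightly more elementary because no semicircle estimate is needed. You also supply the flux argument, $|u|^2-|v|^2 \to 0$ as $|x|\to\infty$, that justifies conservation, whereas the paper only invokes it.

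One step is wrong as written: your per-root signs are reversed. For $x_k \in \mathbb{C}^+$, the point $x - x_k$ moves left to right along a line below the real axis. Its argument therefore increases from $-\pi$ to $0$, a change of $+\pi$. For $x_k\in\mathbb{C}^-$ the change is $-\pi$. For example, with $N=1$ and $P_1 \propto x+\frac{{\rm i}}{2}$, the argument goes from $\pi$ to $0$. The net change is then $\frac{N(N-1)}{2}\pi - \frac{N(N+1)}{2}\pi = -\pi N$, and your own formula $\mathcal{M}_N = -4\,[\arg P_N]_{-\infty}^{+\infty}$ gives $+4\pi N$ with no further adjustment.

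As you wrote it, the count yields $-4\pi N$. Once $\mathcal{M}_N = -4[\arg P_N]$ is fixed, there is no sign convention left to choose, so calibrating against $N=1$ does not replace the correct count. A minor slip as well: $\int_{\mathbb{R}} \frac{4}{1+4x^2}\,dx = 2\pi$. The value $4\pi$ for $N=1$ comes from $|u|^2+|v|^2 = \frac{8}{1+4x^2}$.
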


\begin{proof}
	It follows from (\ref{MTM-bil}) and (\ref{bilinear}) that 
	$$
	|u|^2 + |v|^2 = \frac{|g|^2 + |h|^2}{|f|^2} = 2 {\rm i} \left( \frac{f_x}{f} - 
	\frac{\bar{f}_x}{\bar{f}} \right),
	$$
	where $f(x,t) = P_N(x,t)$. By Lemma \ref{lemma-polyn}, $P_N$ has no roots in $x$ on $\mathbb{R}$ for large $|t|$ and it admits  $\frac{N(N-1)}{2}$ roots in $\mathbb{C}^+$ and $\frac{N(N+1)}{2}$ roots in $\mathbb{C}^-$.  By using (\ref{computation-2}), we have 
	$$
	\frac{f_x}{f} - 
	\frac{\bar{f}_x}{\bar{f}} = 
	\frac{N^2}{x + \frac{{\rm i}}{2}} - \frac{N^2}{x - \frac{{\rm i}}{2}} 
	+ \mathcal{O}\left(\frac{1}{|x|^2}\right) = 
	\mathcal{O}\left(\frac{1}{|x|^2}\right) \quad \mbox{\rm as} \;\; |x| \to \infty.
	$$
Therefore, the Jordan lemma of complex analysis is satisfied with 
	$$
	\lim_{R \to \infty} \int_0^{\pi}  \left( \frac{f_x}{f} - 
	\frac{\bar{f}_x}{\bar{f}} \right) \biggr|_{x = R e^{{\rm i} \theta}} {\rm i} R e^{{\rm i} \theta} d \theta = 0.
	$$
	By adding this integral to the integral on $[-R,R]$, we compute the mass integral $\mathcal{M}_N(u,v)$ by using the argument principle:
	\begin{align*}
\mathcal{M}_N(u,v) &= \lim_{R \to \infty} 
	\int_{[-R,R]} 2 \mathrm{i} \left( \frac{f_x}{f} - 
	\frac{\bar{f}_x}{\bar{f}} \right) dx \\
	&= -4 \pi \sum_{x_j \in \mathbb{C}^+ : f(x_j) = 0} {\rm Res}_{x = x_j} \frac{f_x}{f} + 4 \pi \sum_{x_j^* \in \mathbb{C}^+ : \bar{f}(x_j^*) = 0}
	{\rm Res}_{x = x_j^*} \frac{\bar{f}_x}{\bar{f}}   \\
	&= 4 \pi \left( \frac{N(N+1)}{2} - \frac{N (N-1)}{2} \right) = 4 \pi N,
	\end{align*}
which gives the assertion due to the conservation of the mass integral (\ref{mass-intro}) in $t \in \mathbb{R}$.
\end{proof}

Lemmas \ref{lemma-polyn} and \ref{lemma-mass} complete the proof of Theorem \ref{theorem-polyn}.

\section{Examples of the rational solutions}
\label{sec-5}

We illustrate the hierarchy of rational solutions constructed in Theorem \ref{theorem-rat-sol} with explicit examples for $N = 2,\dots,6$. We will 
show that the assumptions of Theorem \ref{theorem-polyn} are satisfied. 
By plotting the solution surfaces, we show that the 
corresponding rational solution describes the slow scattering of $N$ 
algebraic solitons on the time scale $\mathcal{O}(\sqrt{t})$.

\subsection{Double algebraic soliton}
\label{sec1}

We use the double-Wronskian solutions (\ref{Wronskian}) with $N = 2$ and 
define in (\ref{rel-1}):
\begin{equation*}
A =\begin{pmatrix}
-1 & 0 & 0 & 0 \\
1 & -1 & 0 & 0 \\
0 & 1 & -1 & 0 \\
0 & 0 & 1 & -1
\end{pmatrix}
\qquad
\mbox{\rm and}
\qquad
S=\begin{pmatrix}
1 & 0 & 0 & 0 \\
-\frac{1}{2} & 1 & 0 & 0 \\
-\frac{1}{8} & -\frac{1}{2} & 1 & 0 \\
-\frac{1}{16} & -\frac{1}{8} & -\frac{1}{2} & 1
\end{pmatrix}.
\end{equation*}
Since $C=(-{\rm i})^{2}/|S| =-1$, we obtain from (\ref{Wronskian}):
\begin{equation*}
\begin{cases}
f = \vert \phi', \phi''; \psi, \psi' \vert, \\
g = \vert \phi, \phi', \phi''; \psi' \vert, \\
h = -{\rm i} \vert \phi, \phi', \phi''; \psi \vert,
\end{cases}
\end{equation*}
which generates the exact solution due to (\ref{MTM-bil}) and (\ref{transformation}) 
\begin{equation}
\label{rat-2}
u_{2}(x,t) = \frac{Q_2(x,t)}{\bar{P}_2(x,t)} {\mathrm e}^{-{\rm i} t}, \quad v_{2}(x,t) = \frac{R_2(x,t)}{P_2(x,t)} {\mathrm e}^{-{\rm i} t},
\end{equation}
where
\begin{align*}
P_2(x,t) &= -32 {\rm i} \,x^{3}-16 x^{4}-24 {\rm i} x +48 t^{2}-24 x^{2}+3, \\
Q_2(x,t) &= -4 \left(-8 {\rm i} \,x^{3}+12 {\rm i} t +6 {\rm i} x -24 t x -12 x^{2}-3\right), \\
R_2(x,t) &= -4 \left(8 {\rm i} \,x^{3}+12 {\rm i} t -6 {\rm i} x +24 t x -12 x^{2}-3\right).
\end{align*}

\begin{remark}
	Compared to the definition of $P_N$, $Q_N$, and $R_N$ in (\ref{rat-sol}), 
	we divide these polynomials by the common factor 
	$$
		2^{N(2 N-1)} 3^{2N-3} 5^{2N-5}  \ldots  (2N-1),
	$$
	which does not change the rational function. In addition, we use the fundamental solutions $\phi = \Phi$ given by (\ref{der-eigen}) without the arbitrary parameters obtained from the linear superposition (\ref{der-eigen-lin}). 
	This remark applies to all consequent examples.
\end{remark}

\begin{remark}
	The exact solution (\ref{rat-2}) was derived in \cite{Han-2024} and studied in \cite{LiPelin2025}, where it was written in the equivalent form 
	obtained after the transformation $(u,v) \mapsto -(u,v)$, which does not affect 
	the MTM system (\ref{MTM}).
\end{remark}

Figure \ref{FIG:1} shows the solution surface (left) and the density plot 
(right). The principal part of the polynomial $P_2$ is given by 
\begin{equation*}
p_2(x,t) = - 16 (x^{4} - 3 t^{2}).
\end{equation*}
There are only two real roots of $p_2$ in $x$ found from $x^2 = \sqrt{3}~|t|$ and shown by the red curves on the right panel. The maximum of the solution surface 
is given by 
\begin{equation*}
\max\limits_{(x,t) \in \mathbb{R}^{2}}(|u_{2}(x,t)|^2 + |v_{2}(x,t)|^2)=|u_{2}(0,0)|^2 + |v_{2}(0,0)|^2=2\cdot 4^{2}=32,
\end{equation*}   
which defines the magnification factor $2^2$ compared to the algebraic soliton (\ref{1-soliton-alg-expr}).

\begin{figure}[htb!]
	\centering
	\includegraphics[width=0.5\columnwidth,height=6cm]{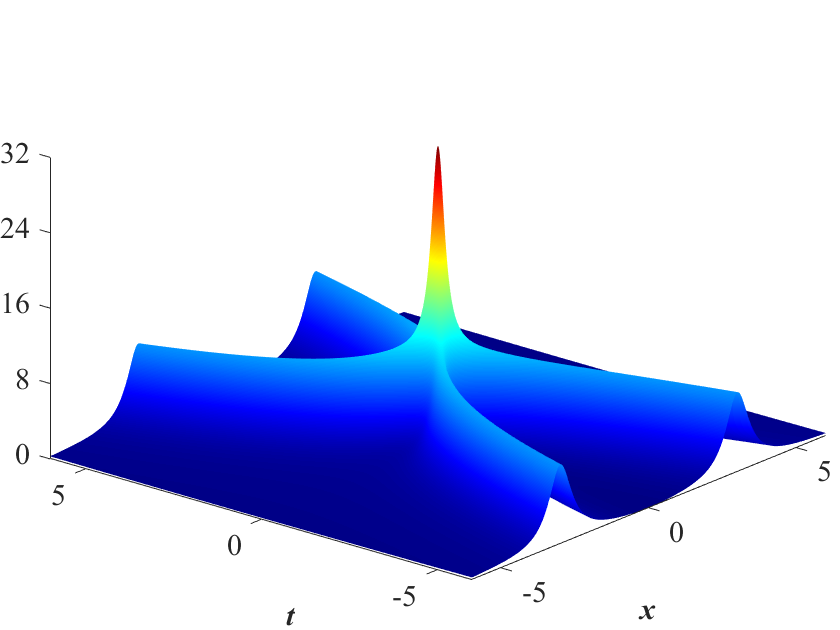}
	\includegraphics[width=0.45\columnwidth,height=6cm]{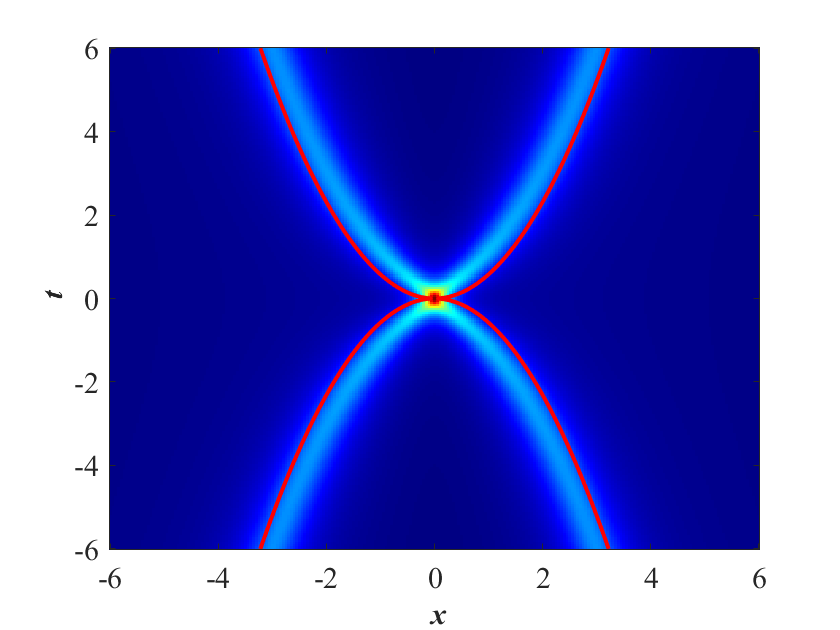}
	\caption{Double algebraic solitons given by (\ref{rat-2}): the solution surface for $|u(x,t)|^2 + |v(x,t)|^2$ (left) and the density plot (right) together with the roots of the principal polynomial $p_2(x,t)$ (red curves).}
	\label{FIG:1}
\end{figure}

The polynomial $P_2$ can be rewritten in the equivalent form:
\[
P_2(x,t) =-\left(2 x +{\rm i} \right)^{4}-12 \left(2 x +{\rm i} \right)^{2}+8 {\rm i} \left(2 x +{\rm i} \right)+48 t^{2},
\]
which agrees with the representation (\ref{polynom-dom}). 
Since $p_2(\cdot,t)$ has only two real roots,
one root in $\mathbb{C}^+$, and one root in $\mathbb{C}^-$, 
this representation implies that $P_2(\cdot,t)$ has $\frac{N(N-1)}{2} = 1$ root in $\mathbb{C}^+$ and $\frac{N(N+1)}{2} = 1$ root in $\mathbb{C}^-$ for large $|t|$. 
It was shown in \cite{Han-2024} that $P_2(\cdot,t)$ admits no roots on $\mathbb{R}$ for all $t \in \mathbb{R}$.

\subsection{Triple algebraic soliton}
\label{sec2}

We use the double-Wronskian solutions (\ref{Wronskian}) with $N = 3$ and 
define in (\ref{rel-1}):
\begin{equation*}
A=\begin{pmatrix}
-1 & 0 & 0 & 0 & 0 & 0 \\
1 & -1 & 0 & 0 & 0 & 0 \\
0 & 1 & -1 & 0 & 0 & 0 \\
0 & 0 & 1 & -1 & 0 & 0 \\
0 & 0 & 0 & 1 & -1 & 0 \\
0 & 0 & 0 & 0 & 1 & -1
\end{pmatrix}
\quad \mbox{\rm and} \quad 
S=\begin{pmatrix}
1 & 0 & 0 & 0 & 0 & 0 \\
-\frac{1}{2} & 1 & 0 & 0 & 0 & 0 \\
-\frac{1}{8} & -\frac{1}{2} & 1 & 0 & 0 & 0 \\
-\frac{1}{16} & -\frac{1}{8} & -\frac{1}{2} & 1 & 0 & 0 \\
-\frac{5}{128} & -\frac{1}{16} & -\frac{1}{8} & -\frac{1}{2} & 1 & 0 \\
-\frac{7}{256} & -\frac{5}{128} & -\frac{1}{16} & -\frac{1}{8} & -\frac{1}{2} & 1
\end{pmatrix}.
\end{equation*}
Since $C = (-{\rm i})^{3}/|S| = {\rm i}$, we obtain from (\ref{Wronskian}):
\begin{eqnarray*}
\begin{cases}
f &= \begin{vmatrix} \phi', \phi'', \phi''' ; \psi, \psi', \psi'' \end{vmatrix}, \\
g &= \begin{vmatrix} \phi, \phi', \phi'', \phi''' ; \psi', \psi'' \end{vmatrix}, \\
h &= \begin{vmatrix} \phi, \phi', \phi'', \phi''' ; \psi, \psi' \end{vmatrix},
\end{cases}
\end{eqnarray*}
which generates the exact solution due to (\ref{MTM-bil}) and (\ref{transformation}):
\begin{equation}
\label{rat-3}
u_{3}(x,t) = \frac{Q_3(x,t)}{\bar{P}_3(x,t)} {\mathrm e}^{-{\rm i} t}, \quad v_{3}(x,t) = \frac{R_3(x,t)}{P_3(x,t)} {\mathrm e}^{-{\rm i} t},
\end{equation}
where
{\small \begin{align*}
P_3(x,t) &= 
512 x^{9} + 2304 {\rm i} \,x^{8} + 4608 x^{7} + 23808 {\rm i}\,x^{6} + \left(-9216 t^{2} - 12096\right) x^{5} + \left(-23040 {\rm i} \,t^{2} \right.\\
& \quad\left.+ 21600 {\rm i} \right) x^{4} + \left(69120 t^{2}  -27360 \right) x^{3} + \left(34560 {\rm i}\,t^{2} + 6480 {\rm i}\right) x^{2} + \left(-69120 t^{4}+  \right.\\
& \quad\left. 8640 t^{2}- 2430\right) x - 34560 {\rm i} \,t^{4} - 12960 {\rm i} \,t^{2} - 135 {\rm i},\\
Q_3(x,t) &= -6 {\rm i} \left( 256 x^{8} - 1024 {\rm i} \,x^{7} + \left(-2048 {\rm i} t + 768\right) x^{6} + \left(-5376 {\rm i} - 6144 t \right) x^{5}+ (7680 {\rm i} t \right. \\
& \quad \left. - 7680 t^{2} + 4320) x^{4} + \left(15360 {\rm i} \,t^{2} - 8640 {\rm i}\right) x^{3} + \left(5760 {\rm i} t - 11520 t^{2} - 3600\right) x^{2} +\right. \\
& \quad \left. \left(11520 {\rm i} \,t^{2} + 720 {\rm i}- 5760 t \right) x + 11520 t^{4}  + 1440 t^{2} + 1440 {\rm i} t - 135 \right), \\
R_3(x,t) &= 6 {\rm i} \left(256 x^{8} + 1024 {\rm i} \,x^{7} + \left(-2048 {\rm i} t + 768\right) x^{6} + \left(5376 {\rm i} + 6144 t \right) x^{5} + (7680 {\rm i} t \right. \\
& \quad \left. - 7680 t^{2} + 4320) x^{4} + \left(-15360 {\rm i} \,t^{2} + 8640 {\rm i} \right) x^{3} + \left(5760 {\rm i} t - 11520 t^{2} - 3600\right) x^{2} \right. \\
& \quad \left. + \left(-11520 {\rm i} \,t^{2} - 720{\rm i} + 5760 t \right) x + 11520 t^{4} + 1440 t^{2} + 1440 {\rm i} t - 135\right).
\end{align*}} 

\begin{figure}[htb!]
	\centering
	\includegraphics[width=0.5\columnwidth,height=6cm]{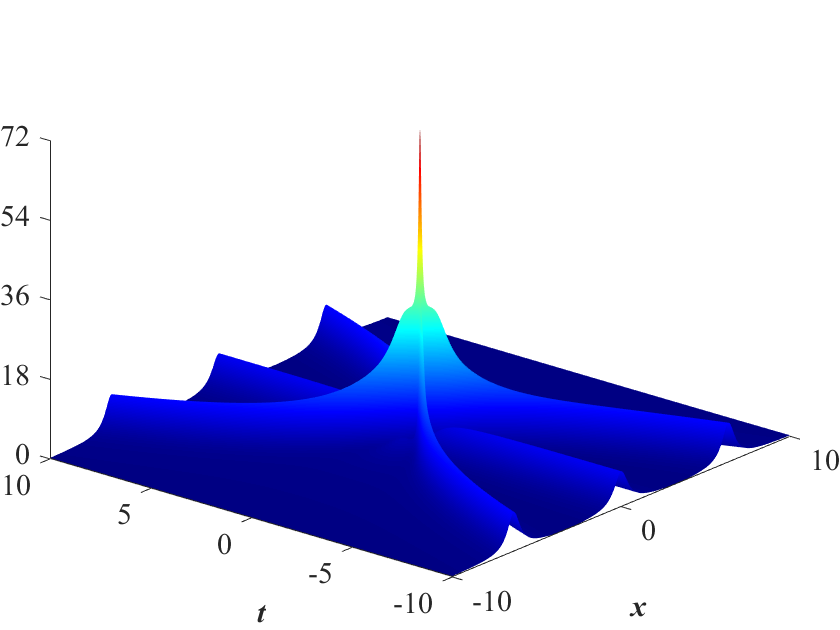}
	\includegraphics[width=0.45\columnwidth,height=6cm]{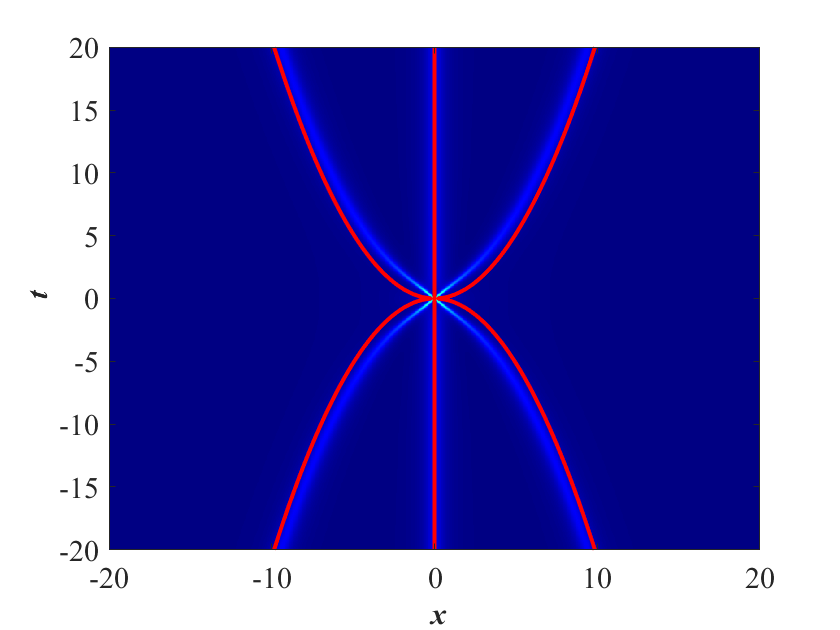}
	\caption{Triple algebraic solitons given by (\ref{rat-3}): the solution surface for $|u(x,t)|^2 + |v(x,t)|^2$ (left) and the density plot (right) together with the roots of the principal polynomial $p_3(x,t)$ (red curves).}
	\label{FIG:2}
\end{figure}

Figure \ref{FIG:2} shows the solution surface (left) and the density plot 
(right). The principal part of the polynomial $P_3$ is given by 
\begin{equation*}
p_3(x,t) = 512 x (x^{8}-18 x^{4} t^{2}-135 t^{4}).
\end{equation*}
The roots of $p_3$ in $x$ are given by 
\begin{equation*}
x^4=9+6 \sqrt{6}~ t^2,\quad x^4=9-6 \sqrt{6} ~t^2, \quad x=0.
\end{equation*}  
Since $9-6 \sqrt{6}<0$, there exist only three real roots found from 
$x = 0$, $x^{2} = \sqrt{9+6 \sqrt{6}} |t|$ and shown by the red curves on the right panel. The maximum of the solution surface is given by 
\begin{equation*}
\max_{(x,t) \in \mathbb{R}^{2}} (|u_{3}(x,t)|^2 + |v_{3}(x,t)|^2) =|u_{3}(0,0)|^2 + |v_{3}(0,0)|^2=2\cdot 6^{2}=72,
\end{equation*}   
which defines the magnification factor $3^2$ compared to the algebraic soliton (\ref{1-soliton-alg-expr}).

The polynomial $P_3$ can be rewritten in the equivalent form:
\begin{align*}
P_3(x,t) &= \left(2 x +{\rm i} \right)^{9}+72 \left(2 x +{\rm i} \right)^{7}-48{\rm i} \left(2 x +{\rm i} \right)^{6}+\left(-288 t^{2}+720\right) \left(2 x +{\rm i} \right)^{5}\\
& \quad -576 {\rm i} \left(2 x +{\rm i}\right)^{4} +5760 t^{2} \left(2 x +{\rm i} \right)^{3}+\left(-11520 {\rm i} \,t^{2}+4608 {\rm i} \right) \left(2 x +{\rm i} \right)^{2} \\
& \quad +\left(-34560 t^{4}+6912\right)\left(2 x +{\rm i} \right) -2560 {\rm i} -18432 {\rm i} \,t^{2},
\end{align*}
which agrees with the representation (\ref{polynom-dom}). 
Since $p_3(\cdot,t)$ has only three real roots,
three roots in $\mathbb{C}^+$, and three roots in $\mathbb{C}^-$, 
this representation implies that $P_3(\cdot,t)$ has $\frac{N(N-1)}{2} = 3$ roots in $\mathbb{C}^+$ and $\frac{N(N+1)}{2} = 6$ roots in $\mathbb{C}^-$ for large $|t|$. 
We also confirm numerically that $P_3(\cdot,t)$ admit no roots on $\mathbb{R}$ for all $t \in \mathbb{R}$.

\subsection{Quadruple algebraic soliton}
\label{sec3}

The rational solution for $N =4$ can be written explicitly:
\begin{equation}
\label{rat-4}
u_{4}(x,t) = \frac{Q_4(x,t)}{\bar{P}_4(x,t)} {\mathrm e}^{-{\rm i} t}, \quad v_{4}(x,t) = \frac{R_4(x,t)}{P_4(x,t)} {\mathrm e}^{-{\rm i} t},
\end{equation}
where
{\small 
\begin{align*}
P_4(x,t) &= 
\begin{aligned}[t]
&65536 x^{16}+524288 {\rm i} \,x^{15}+1966080 x^{14}+21626880 {\rm i} \,x^{13}+\left(-3932160 t^{2}-8601600\right) \\
&x^{12}+\left(-23592960 {\rm i} \,t^{2}+207912960 {\rm i}\right) x^{11}+\left(29491200 t^{2}-268001280\right) x^{10}+\\
&\left(-88473600 {\rm i} \,t^{2}+628531200 {\rm i} \right) x^{9}+\left(-29491200 t^{4}-873676800 t^{2}-571968000\right) \\
&x^{8}+\left(-117964800 {\rm i} \,t^{4}-3140812800 {\rm i}\,t^{2}+1251072000 {\rm i} \right) x^{7}+\left(3096576000 t^{4}-\right.\\
&\left.1935360000 t^{2}-2484518400\right) x^{6}+\left(6812467200{\rm i} \,t^{4}-5806080000 {\rm i}\,t^{2}-631411200 {\rm i}\right) \\
&x^{5}+\left(-2064384000 t^{6}+3483648000 t^{4}+2685312000 t^{2}-358344000\right) x^{4}+\left(-\right.\\
&\left. 4128768000 {\rm i}\,t^{6}+6967296000 {\rm i}\,t^{4}+3048192000 {\rm i} \,t^{2}-281232000 {\rm i} \right) x^{3}+\left(-3096576000\right.\\
&\left. t^{6}-580608000 t^{4}-762048000 t^{2}-34020000\right) x^{2}+\left(-3096576000 {\rm i} \,t^{6}-2903040000 \right.\\
&\left.{\rm i} \,t^{4}+108864000{\rm i} \,t^{2}-6804000 {\rm i} \right) x +1548288000 t^{8}+387072000 t^{6}+762048000 t^{4}\\
&+68040000 t^{2}+212625,
\end{aligned} 
\end{align*}
\begin{align*}
Q_4(x,t) &= 
\begin{aligned}[t]
& 8 (-32768 {\rm i}\,x^{15}-245760 x^{14}+\left(-614400 {\rm i} -491520 t \right) x^{13}+\left(3194880 {\rm i} t -7311360\right)\\
&x^{12}+\left(4423680 {\rm i} \,t^{2} +1382400 {\rm i} -737280 t \right) x^{11}+\left(38707200 {\rm i} t +24330240 t^{2}-\right.\\
& \left.65940480\right)x^{10}+\left(-16588800 {\rm i} \,t^{2}+17203200 t^{3}+138355200 {\rm i} -37324800 t \right) x^{9}\\
&+\left(-77414400 {\rm i} \,t^{3}+402969600 {\rm i} t +118886400 t^{2}+19526400\right) x^{8}+\left(-22118400 {\rm i} \,t^{4}\right.\\
& \left.-505958400 {\rm i} \,t^{2}+154828800 t^{3}+263952000 {\rm i} +31795200 t \right) x^{7}+\left(-799948800 {\rm i} \,t^{3}\right.\\
& \left. -77414400 t^{4} +846720000 {\rm i} t -493516800 t^{2}+89208000\right) x^{6}+\left(580608000 {\rm i} \,t^{4}\right.\\
&\left. -154828800 t^{5}-217728000 {\rm i} \,t^{2}-406425600 t^{3}+298015200 {\rm i} +988848000 t \right) x^{5}\\
&+\left(387072000 {\rm i} \,t^{5}-725760000 {\rm i} \,t^{3}+ 870912000 t^{4}-258552000 {\rm i} t +108864000 t^{2}\right.\\
&\left.-91854000\right) x^{4}+\left(-258048000 {\rm i} \,t^{6}-435456000 {\rm i} \,t^{4}+ 1161216000 t^{5}-789264000 \right.\\
&\left.i \,t^{2}-919296000 t^{3}+92421000 {\rm i} +40824000 t \right) x^{3}+\left(-580608000 {\rm i} \,t^{5}-387072000 t^{6}\right.\\
&\left. -217728000 {\rm i} \,t^{3}+217728000 t^{4}-74844000 {\rm i} t +231336000 t^{2}+19561500\right) x^{2}+\\
& (193536000 {\rm i} \,t^{6}-774144000 t^{7}+108864000 {\rm i} \,t^{4}+145152000 t^{5}-142884000 {\rm i}\,t^{2}\\
& -81648000 t^{3}-2126250 i +28917000 t ) x +387072000 {\rm i} \,t^{7} -96768000 t^{6}+\\
&217728000 {\rm i} \,t^{5}-272160000 t^{4}+4536000 i \,t^{3}-4252500 i t -10206000 t^{2}+212625),
\end{aligned} 
\end{align*}
\begin{align*}\nonumber
R_4(x,t) &= 8  (32768 {\rm i} \,x^{15}-245760 x^{14}+(614400 {\rm i}+491520 t ) x^{13}+(3194880 {\rm i} t -7311360)\\
&\quad x^{12}+(-4423680 {\rm i} \,t^{2}-1382400 {\rm i} +737280 t ) x^{11}+(38707200 {\rm i} t +24330240 t^{2}- \\
&\quad 65940480) x^{10}+(16588800 {\rm i} \,t^{2}-17203200 t^{3}- 138355200 i+37324800 t ) x^{9}+\\
&\quad (-77414400 {\rm i} \,t^{3}+402969600 {\rm i} t +118886400 t^{2}+19526400) x^{8}+(22118400 {\rm i} \,t^{4} \\
&\quad +505958400 {\rm i}\,t^{2}-154828800 t^{3}-263952000 {\rm i} -31795200 t ) x^{7}+(-799948800 {\rm i} \,t^{3}  \\
&\quad -77414400 t^{4}+846720000 {\rm i}t-493516800 t^{2}+89208000) x^{6}+(-580608000 {\rm i} \,t^{4}+\\
&\quad 154828800 t^{5}+217728000 {\rm i}\,t^{2}+406425600 t^{3}-298015200 i-988848000 t ) x^{5}\\
& \quad  +(387072000 {\rm i} \,t^{5}-725760000 {\rm i} \,t^{3}+870912000 t^{4}-258552000 {\rm i} t +108864000 t^{2}\\
&\quad  -91854000)x^{4}+(258048000 {\rm i} \,t^{6}+435456000 {\rm i} \,t^{4}-1161216000 t^{5}+789264000 {\rm i} \,t^{2} \\
&\quad +919296000 t^{3}-92421000 {\rm i} -40824000 ) x^{3}+(-580608000 {\rm i} \,t^{5}-387072000 t^{6}-\\
&\quad 217728000 {\rm i} \,t^{3}+217728000 t^{4}-74844000 {\rm i} t +231336000 t^{2}+19561500) x^{2}+\\
&\quad (-193536000 {\rm i} \,t^{6}+774144000 t^{7}-108864000 {\rm i} \,t^{4}-145152000 t^{5}+142884000 {\rm i} \,t^{2}\\
&\quad +81648000 t^{3}+2126250 {\rm i}-28917000 t ) x +387072000 {\rm i} \,t^{7}+217728000 {\rm i} \,t^{5}\\
&\quad  -96768000 t^{6}+4536000 {\rm i} \,t^{3}-272160000 t^{4}-4252500 {\rm i} t -10206000 t^{2}+212625).
 \end{align*}}

\begin{figure}[htb!]
	\centering
	\includegraphics[width=0.5\columnwidth,height=6cm]{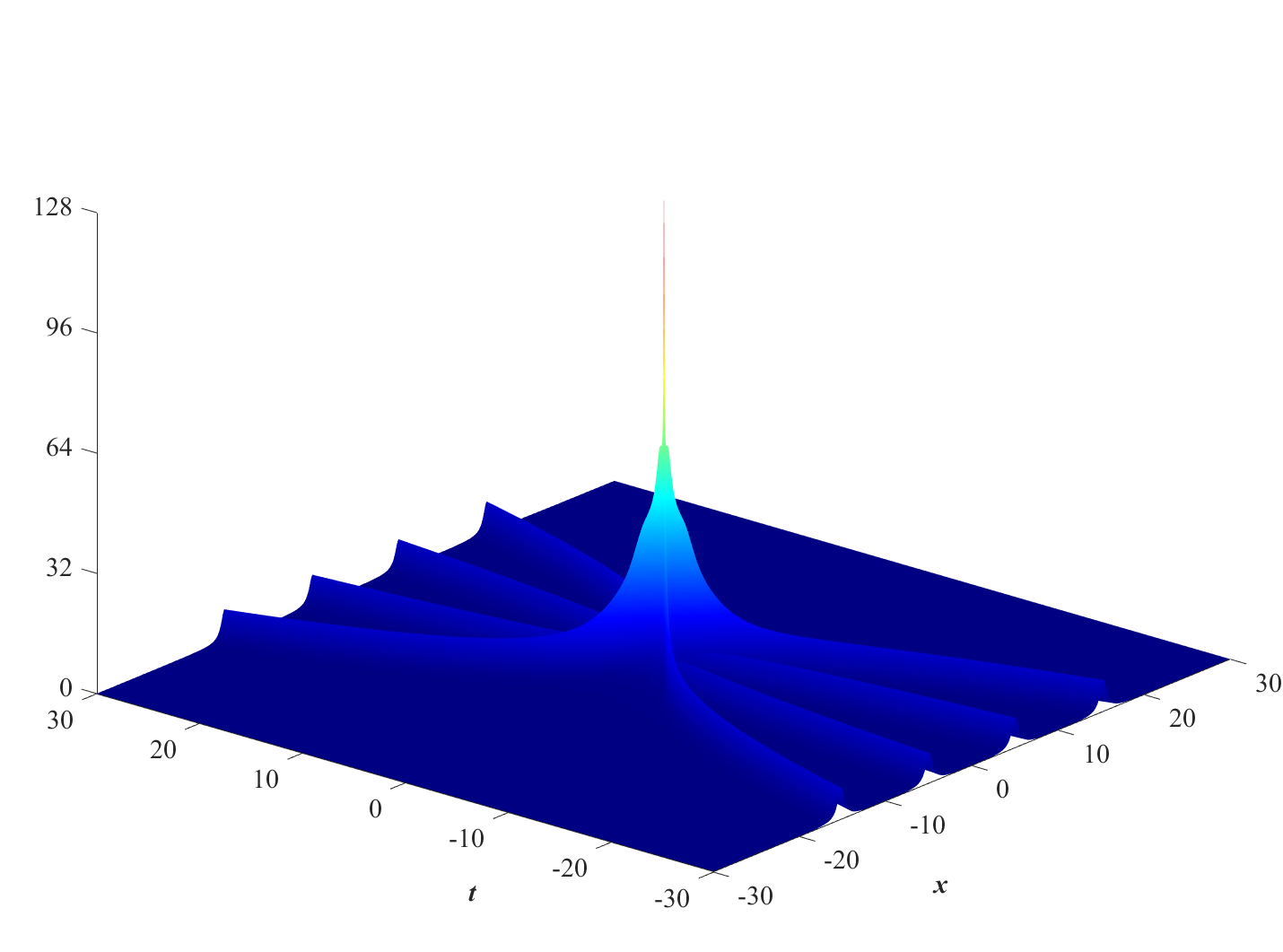}
	\includegraphics[width=0.45\columnwidth,height=6cm]{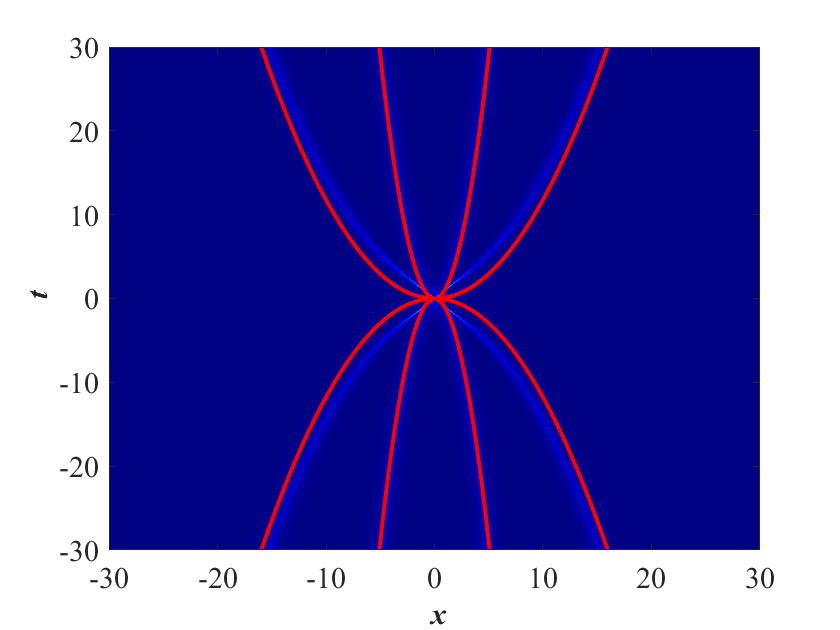}
	\caption{Quadruple algebraic solitons given by (\ref{rat-4}): the solution surface for $|u(x,t)|^2 + |v(x,t)|^2$ (left) and the density plot (right) together with the roots of the principal polynomial $p_4(x,t)$ (red curves).}
	\label{FIG:3}
\end{figure}

Figure \ref{FIG:3} shows the solution surface (left) and the density plot 
(right). The principal part of the polynomial $P_4$ is given by 
\begin{equation*}
p_4(x,t) = 65536 (x^{16}-60 t^{2} x^{12}-450 t^{4} x^{8}-31500 t^{6} x^{4}+23625 t^{8}).
\end{equation*}
The roots of $p_4$ in $x$ are given by 
\begin{align*}
x^4 = \left( 15+2 \sqrt{15}\, s_1 \lambda + 2 \sqrt{15}\, s_2  \sqrt{\frac{10 \sqrt{15}s_1+15  \lambda - \lambda^{3}}{\lambda}} \right) t^2,
\end{align*}
where $s_1 = \pm 1$, $s_2 = \pm 1$ are two independent sign combinations and 
$$
\lambda=\sqrt{\frac{\left(50+10 \sqrt{35}\right)^{\frac{2}{3}}+5 \left(50+10 \sqrt{35}\right)^{\frac{1}{3}}-10}{\left(50+10 \sqrt{35}\right)^{\frac{1}{3}}}}.
$$ 
There exist only four real roots shown by the red curves on the right panel. The maximum of the solution surface is given by 
\begin{equation*}
\max\limits_{(x,t) \in \mathbb{R}^{2}} \left(|u_{4}(x,t)|^2 + |v_{4}(x,t)|^2 \right) =|u_{4}(0,0)|^2 + |v_{4}(0,0)|^2=2\cdot 8^{2}=128,
\end{equation*}   
which defines the magnification factor $4^2$ compared to the algebraic soliton (\ref{1-soliton-alg-expr}).

\begin{figure}[htb!]
	\centering
	\includegraphics[width=0.5\columnwidth,height=6cm]{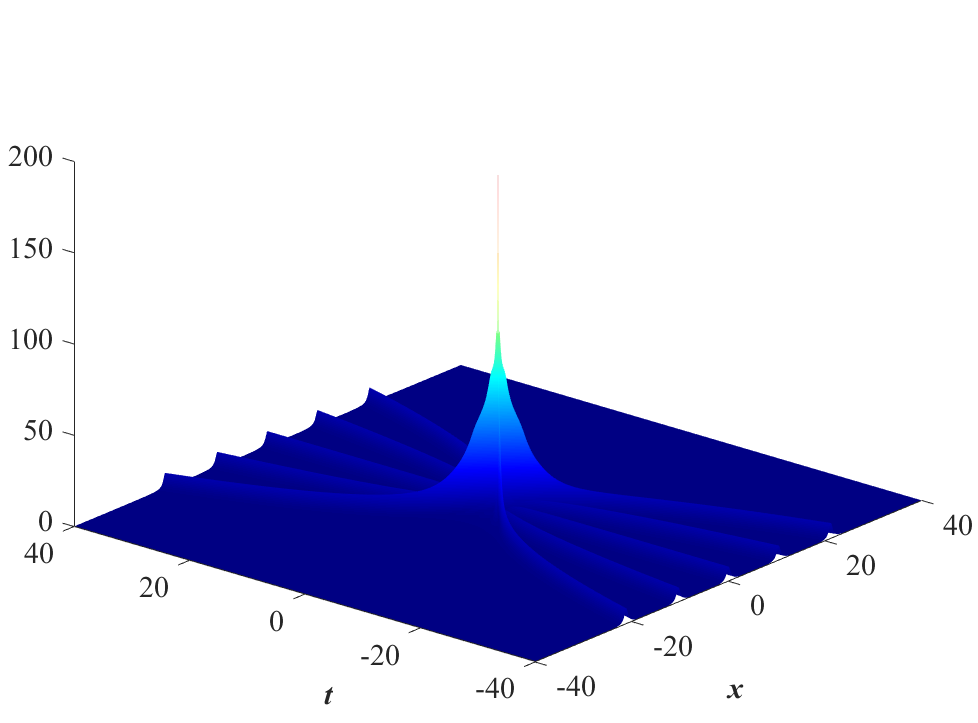}
	\includegraphics[width=0.45\columnwidth,height=6cm]{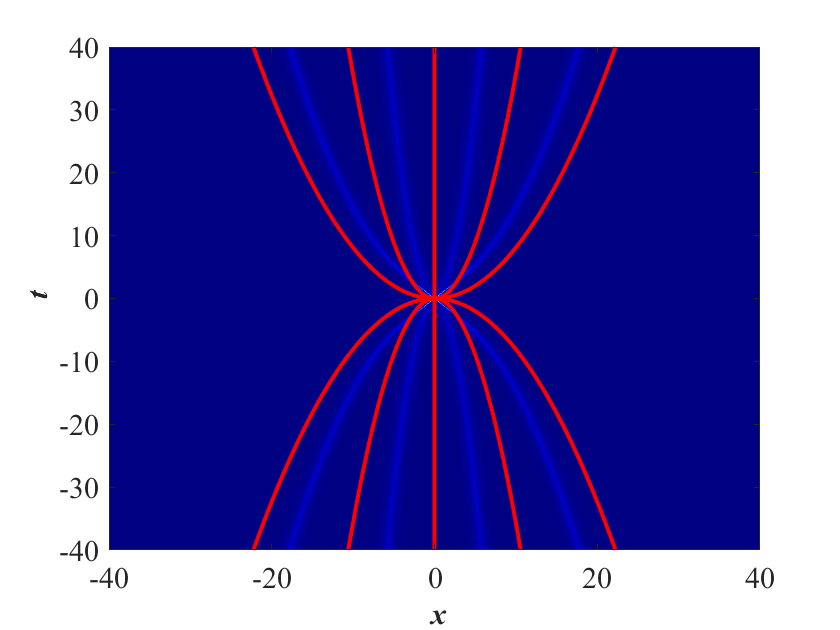}
	\caption{Quintuple algebraic solitons: the solution surface for $|u(x,t)|^2 + |v(x,t)|^2$ (left) and the density plot (right) together with the roots of the principal polynomial $p_5(x,t)$ (red curves).}
	\label{FIG:4}
\end{figure}

\begin{figure}[htb!]
	\centering
	\includegraphics[width=0.5\columnwidth,height=6cm]{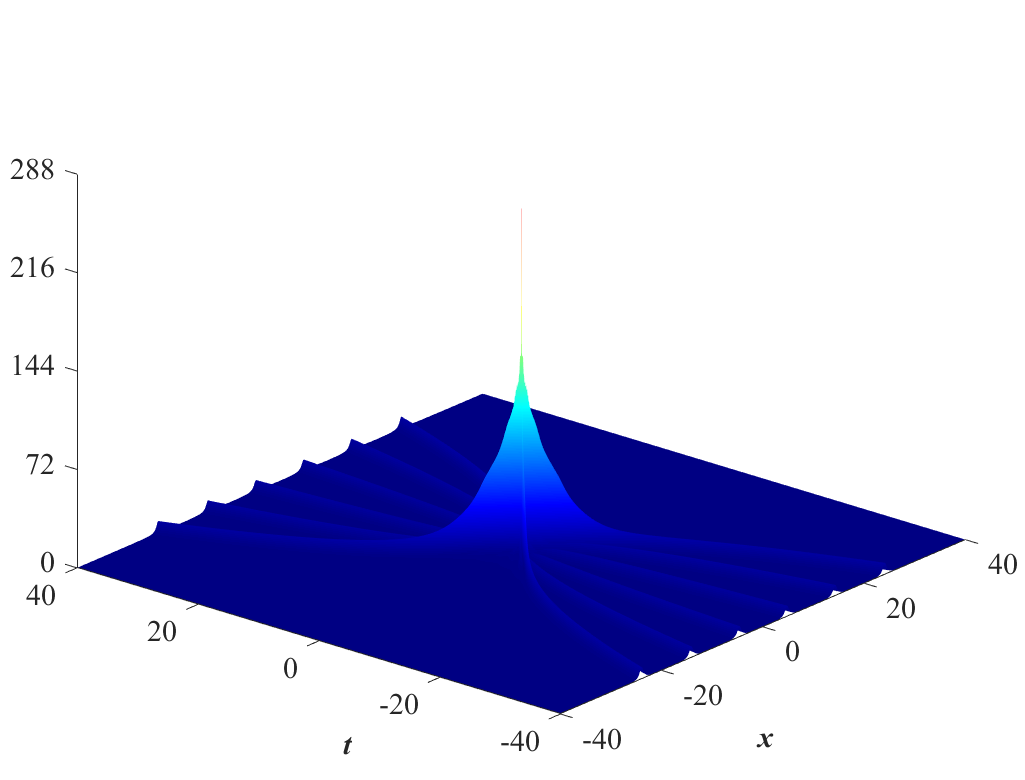}
	\includegraphics[width=0.45\columnwidth,height=6cm]{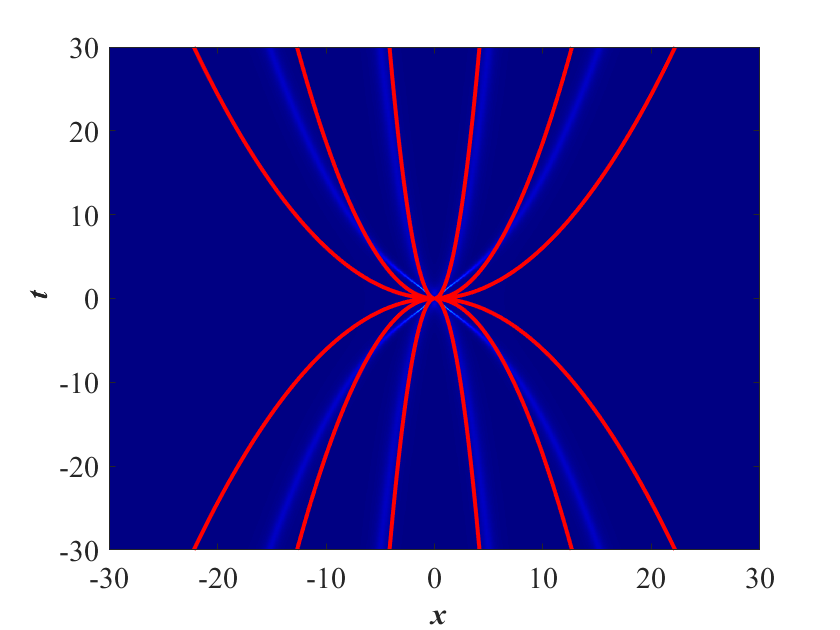}
	\caption{Sextuple algebraic solitons: the solution surface for $|u(x,t)|^2 + |v(x,t)|^2$ (left) and the density plot (right) together with the roots of the principal polynomial $p_6(x,t)$ (red curves).}
	\label{FIG:5}
\end{figure}

\subsection{Quintuple and sextuple algebraic solitons}
\label{sec4}

The rational solutions for $N = 5$ and $N = 6$ are too long to be written explicitly. Figures \ref{FIG:4} and \ref{FIG:5} show the solution surfaces (left) and the density plots (right). We only give the principal part of the polynomials $P_5$ and $P_6$ in the explicit form:
\begin{align*}
p_5(x,t) =& 33554432 x \left( x^{24}-150 x^{20} t^{2}+1575 x^{16} t^{4}-220500 x^{12} t^{6}-15710625 x^{8} t^{8} \right.\\
& \left.+104186250 x^{4} t^{10}+260465625 t^{12}\right)
\end{align*}
and 
\begin{align*}
p_6(x,t) =& 68719476736( x^{36}-315 t^{2} x^{32}+18900 t^{4} x^{28}-1190700 t^{6} x^{24}-120856050 t^{8} x^{20}\\
& -12639875850 t^{10} x^{16}+396449518500 t^{12} x^{12}+1299619282500 t^{14} x^{8} \\
& + 34115006165625 t^{16} x^{4}-11371668721875 t^{18}).
\end{align*}
Roots of $p_5$ and $p_6$ in $x$ are found numerically and shown on the right panels.
There exist only five real roots of $p_5$ and six real roots of $p_6$. The maximum values of the solution surfaces are given by 
\begin{equation*}
\max\limits_{(x,t) \in \mathbb{R}^{2}} \left( |u_{5}(x,t)|^2 + |v_{5}(x,t)|^2 \right) =|u_{5}(0,0)|^2 + |v_{5}(0,0)|^2=2\cdot 10^{2}=200
\end{equation*} 
and 
\begin{equation*}
\max\limits_{(x,t) \in \mathbb{R}^{2}}|u_{6}(x,t)|^2 + |v_{6}(x,t)|^2=|u_{6}(0,0)|^2 + |v_{6}(0,0)|^2=2\cdot 12^{2}=288,
\end{equation*}   
which define the magnification factors $5^2$ and $6^2$, respectively, compared to the algebraic soliton (\ref{1-soliton-alg-expr}).

{\bf Acknowledgement.} The authors thank Jiaqi Han for the productive collaboration during the initial stage of this project. The work of Z. Zhao and C. He was conducted during their PhD studies while visiting McMaster University with the financial support from the China Scholarship Council. The work of C. He is partially supported by the National Natural Science Foundation of China under grant No. 12431008. The work of B. Feng is partially supported by the U.S. Department of Defense (DoD) and Air Force for Scientific Research (AFOSR) under grant No. W911NF2010276. The work of D. E. Pelinovsky is supported by the NSERC Discovery grant.

\appendix
\renewcommand{\thesubsection}{\Alph{section}.\arabic{subsection}} 
\numberwithin{equation}{section} 
\renewcommand{\theequation}{\Alph{section}\arabic{equation}}   
\begingroup
\let\oldsubsection\subsection
\renewcommand{\subsection}[1]{%
\oldsubsection{#1}\par\noindent}
\let\oldsubsubsection\subsubsection
\renewcommand{\subsubsection}[1]{%
	\oldsubsubsection{#1}\par\noindent}

\section{Proof of (\ref{det-expression-2})}
\label{app-A}

We use elementary column operations and transform $|M_N(1)|$ to the form:

\begin{align*}
|M_N(1)| &=
	\left|
	\begin{array}{c c c c : c c c c}
	1 & 0 & 0  & \dots & 1 & 0 & 0 &  \dots \\
	1  & 1 & 0 &  \dots  & -1 & 1 & 0 &  \dots \\
	\frac{1}{2!} & 1  & 1 &  \dots  & \frac{1}{2!} & -1 & 1 & \dots \\
	\frac{1}{3!} & \frac{1}{2!} & 1 &  \dots  & -\frac{1}{3!} & \frac{1}{2!} & -1 &  \dots \\
	\vdots & \vdots & \vdots & \ddots & \vdots & \vdots & \vdots & \ddots \\
	\frac{1}{(2N-1)!} & \frac{1}{(2N-2)!} & \frac{1}{(2N-3)!} &  \dots  & -\frac{1}{(2N-1)!} & \frac{1}{(2N-2)!} & -\frac{1}{(2N-3)!} &  \dots 
	\end{array}
	\right|\\
	&=2^N (-1)^N	\left|\begin{array}{c c c c : c c c c}
	1 & 0 & 0 &  \dots & 0 & 0 & 0 &  \dots \\
	0  & 1 & 0 &  \dots  & 1 & 0 & 0 & \dots \\
	\frac{1}{2!} & 0  & 1 & \dots  & 0 & 1 & 0 &  \dots \\
	0 & \frac{1}{2!} & 0 &  \dots  & \frac{1}{3!} & 0 & 1 &  \dots \\
	\vdots & \vdots &  \vdots & \ddots & \vdots & \vdots & \vdots &  \ddots \\
	0 & \frac{1}{(2N-2)!} & 0 & \dots  & \frac{1}{(2N-1)!} & 0 & \frac{1}{(2N-3)!} & \dots 
	\end{array}
	\right|,
\end{align*}
where we first added the $(N+j)$-th column to the $j$-th column for $1 \leq j \leq N$, then extracted the factor of $2$ from the first $N$ columns, then subtracted the updated $j$-th column from the $(N+j)$-th column for $1 \leq j \leq N$, and finally, multiplied the last $N$ columns by the negative signs. To further simplify the determinant denoted by $A_N$, we expand it trivially along the first row and reduce it to the $(2N-1)\times(2N-1)$ determinant:
\begin{align}
\label{A}
A_N = \left| 	\begin{array}{c c c c : c c c c c}
1 & 0 & 0 & \dots  & 1 & 0 & 0 & 0 & \dots \\
0  & 1 & 0 & \dots  & 0 & 1 & 0 & 0 & \dots \\
\frac{1}{2!} & 0 & 1 & \dots  & \frac{1}{3!} & 0 & 1 & 0 & \dots \\
0 & \frac{1}{2!} & 0 & \dots  & 0 & \frac{1}{3!} & 0 & 1 & \dots \\
\frac{1}{4!} & 0 & \frac{1}{2!} & \dots  & \frac{1}{5!} & 0 & \frac{1}{3!} & 0 & \dots \\
\vdots & \vdots & \vdots & \ddots & \vdots & \vdots & \vdots & \ddots & \vdots \\
\frac{1}{(2N-2)!} & 0 & \frac{1}{(2N-4)!} & \dots & \frac{1}{(2N-1)!} & 0 & \frac{1}{(2N-3)!} & 0 & \dots
	\end{array}\right|.
\end{align}
The following proposition gives the proof of (\ref{det-expression-2}).

\begin{prop}
	\label{pro1}
For any $N \in \mathbb{N}$, we have 
	\begin{equation}
	\label{eq:A1} 
		A_{N} =\frac{1}{1^{2N-1} 3^{2N-3} 5^{2N-5} 7^{2N-7} \ldots (2N-3)^3 (2N-1)}. 
	\end{equation}
\end{prop}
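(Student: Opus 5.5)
The plan is to exploit the parity structure of $A_N$ and then compute the determinant as a product of leading Taylor coefficients of suitably eliminated columns, arguing inductively in $N$.

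\emph{Step 1: split by parity.} Index the rows of (\ref{A}) by $k=0,1,\dots,2N-2$, the power of the corresponding Taylor coefficient.
\begin{itemize}
\item The $j$-th column of the first group carries $\frac{1}{(k-j)!}$ when $k-j$ is even and $0$ otherwise.
\item The $j$-th column of the second group carries $\frac{1}{(k-j+1)!}$ when $k-j$ is even and $0$ otherwise.
\end{itemize}
Reordering rows and columns by the parity of $k$ and of $j$ makes the matrix block diagonal. The sign of the permutation must be tracked; it should come out to $+1$ since the answer is positive.

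Each block is a truncated coefficient matrix. Its columns are the Taylor coefficients of $y^{j}\cosh y$ and $y^{j}\sinh y$ for consecutive $j$, read in the variable $y$ through powers $1,y^2,y^4,\dots$ in one block and $y,y^3,\dots$ in the other. This turns the determinant into a statement about the functions $\cosh y$ and $\sinh y$: it is a Wronskian/Hermite--Padé-type determinant for the pair $(\cosh y,\sinh y)$, or equivalently for $(e^{y},e^{-y})$.

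\emph{Step 2: triangularize by column elimination.} Column operations with constant coefficients replace the columns by the Taylor coefficient vectors of
\[
F_m(y)=p_m(y)\cosh y+q_m(y)\sinh y, \qquad m=0,1,\dots,2N-2.
\]
Here $p_m,q_m$ are polynomials whose degrees are limited by the available columns. They are chosen so that $F_m$ has the largest possible order of vanishing at $y=0$, which should be exactly $m$. The simplest instances are $\cosh y$, $\sinh y$, and $y\cosh y-\sinh y=\frac{y^3}{3}+\dots$, which shows how the odd factors $\frac{1}{3},\frac{1}{5},\dots$ arise.

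Once this is done, the matrix is lower triangular and $A_N$ is the product of the leading coefficients of the $F_m$, divided by the normalizations used in the operations.

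Two-column elimination is the inductive mechanism:
\begin{itemize}
\item Adding the two new columns that appear when passing from $N-1$ to $N$ (one $\cosh$-type and one $\sinh$-type column, as in the paper's Appendix strategy) reduces $A_N$ to $A_{N-1}$ times an explicit factor.
\item The claimed formula is then equivalent to the recursion
\[
\frac{A_N}{A_{N-1}}=\frac{2N-1}{\big((2N-1)!!\big)^2},
\]
which I would verify directly from the leading coefficients of the two newly created remainder functions.
\end{itemize}

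\emph{Main obstacle.} The hard step is obtaining closed forms for the eliminating polynomials and the leading coefficients of the remainders. I would first identify them with the diagonal and near-diagonal Padé approximants of $e^{2y}$, equivalently of $\tanh y$. Hermite's integral representation gives these explicitly: $\int_0^1 s^n(1-s)^n e^{2ys}\,ds$ yields the remainder $Q_n(y)e^{2y}-P_n(y)$, with leading coefficient proportional to $\frac{(n!)^2}{(2n)!\,(2n+1)!}$. Rewriting $Q_n(y)e^{y}-P_n(y)e^{-y}$ in terms of $\cosh y$ and $\sinh y$ produces exactly the functions $F_m$.

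Taking ratios of consecutive such coefficients should produce the odd numbers $2k-1$ with the stated multiplicities. I would then check the bookkeeping against the directly computed cases $N=1,2$, where $A_1=1$ and $A_2=\frac13$, before closing the induction.
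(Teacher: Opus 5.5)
Your proposal is correct and takes a genuinely different route from the paper, although both are at heart successive eliminations.

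The paper never reads the columns as functions. It alternately subtracts the columns of one block of $A_N$ from those of the other, proves by induction on the entries that the modified columns keep a closed factorial form, extracts $((2k-1)(2k+1))^{-(N-k)}$ at step $k$, and deletes two rows and two columns from the top at each step. Write $c_j=y^j\cosh y$ ($0\le j\le N-2$) and $s_j=y^{j-1}\sinh y$ ($0\le j\le N-1$) for the columns, as you do. The paper's algorithm is then in effect the three-term recurrence $y^2\hat F_{m-1}-\hat F_m=\hat F_{m+1}/\bigl((2m+1)(2m+3)\bigr)$ for the Pad\'e remainders $\hat F_m$ normalized to leading coefficient $1$ (the recurrence behind Lambert's continued fraction for $\tanh y$), checked coefficient by coefficient.

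You instead split by parity, which gives $A_N=\pm D_ND_{N-1}$. Here $D_n$ is the determinant of the coefficients of $y,y^3,\dots,y^{2n-1}$ in $\sinh y,\,y\cosh y,\,y^2\sinh y,\dots$ (the even block is of this form after multiplying by $y$). You then aim for the remainders in closed form. To finish your ``main obstacle'' you need the top coefficient of the eliminating polynomial as well as the leading remainder coefficient. The symmetric form ($t=2s-1$) of your Hermite integral gives both: repeated integration by parts shows that, with $P_m$ monic of degree $m$,
\[
\tfrac12\bigl(P_m(y)e^{y}-P_m(-y)e^{-y}\bigr)=\frac{y^{2m+1}}{2^{m+1}m!}\int_{-1}^{1}(1-t^2)^m e^{yt}\,dt=\frac{y^{2m+1}}{(2m+1)!!}+O(y^{2m+3}).
\]
Hence $D_n=\prod_{m=0}^{n-1}\frac{1}{(2m+1)!!}$. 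The exponent $2N-2k+1$ of $2k-1$ is then $(N-k+1)+(N-k)$ from the two blocks, and your recursion reads $A_N/A_{N-1}=\frac{1}{(2N-3)!!\,(2N-1)!!}$.

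The sign should be proved, not inferred from the expected answer. Triangularizing the unsplit matrix puts the columns in the order $s_0,s_1,c_0,c_1,s_2,s_3,c_2,c_3,\dots$. Each $c_j$ has moved past $2\lfloor j/2\rfloor+2$ of the $s_i$, so the permutation is even, and no row permutation is needed.

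Your route gives:
\begin{itemize}
\item brevity and a closed form for every intermediate column;
\item a transparent source for the odd factors, namely $\int_{-1}^1(1-t^2)^m\,dt=\frac{2^{2m+1}(m!)^2}{(2m+1)!}$;
\item nonvanishing for free, since each leading coefficient is a positive integral.
\end{itemize}
Off-diagonal integrals $\int_0^1 s^a(1-s)^b e^{xs}\,ds$ should treat $B_N$ the same way. The paper's route is elementary and self-contained, and the same elimination, tracking two extra columns, is reused for $B_N$, $C_N^{(1)}$ and $C_N^{(2)}$. The price is heavy index bookkeeping.
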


\begin{proof}
When $N=1$, we obtain $A_1 = 1$, which agrees with (\ref{eq:A1}). When $N=2$, we obtain
\begin{equation*}
	A_{2} =	\begin{vmatrix}
		1 & 1 & 0\\
		0 & 0 & 1\\
		\frac{1}{2!} & \frac{1}{3!} & 0
	\end{vmatrix}
	\xlongequal{\,C_1-C_2\,}
	\begin{vmatrix}
		0 & 1 & 0\\
		0 & 0 & 1\\
		\frac{1}{3} & \frac{1}{3!} & 0
	\end{vmatrix}	
	=\frac{1}{3} \begin{vmatrix}
		0 & 1 & 0\\
		0 & 0 & 1\\
		1 & \frac{1}{3!} & 0
	\end{vmatrix}	
=\frac{1}{3},
\end{equation*}
which agrees with (\ref{eq:A1}). When $N=3$, we obtain
\begin{align*}
	A_{3} &=
	\begin{vmatrix}
		1&0&1&0&0\\
		0&1&0&1&0\\
		\frac{1}{2!}&0&\frac{1}{3!}&0&1\\
		0&\frac{1}{2!}&0&\frac{1}{3!}&0\\
		\frac{1}{4!}&0&\frac{1}{5!}&0&\frac{1}{3!}
	\end{vmatrix}
	\xlongequal[C_2-C_4]{C_1-C_3}
		\begin{vmatrix}
	0&0&1&0&0\\
	0&0&0&1&0\\
	\frac{1}{3}&0&\frac{1}{3!}&0&1\\
	0&\frac{1}{3}&0&\frac{1}{3!}&0\\
	\frac{1}{5 \cdot 3!}&0&\frac{1}{5!}&0&\frac{1}{3!}
	\end{vmatrix} 
	=	
	\begin{vmatrix}
		\frac{1}{3}&0&1\\
		0&\frac{1}{3}&0\\
		\frac{1}{5 \cdot 3!}&0&\frac{1}{3!}
	\end{vmatrix}=
	\frac{1}{3^2}
	\begin{vmatrix}
		1&0&1\\
		0&1&0\\
		\frac{1}{5 \cdot 2}&0&\frac{1}{3!}
	\end{vmatrix}\nonumber\\
	&\xlongequal{C_3-C_1}\frac{1}{3^2}
	\begin{vmatrix}
		1&0&0\\
		0&1&0\\
		\frac{1}{5 \cdot 2}&0&\frac{1}{5 \cdot 3}
	\end{vmatrix}
	= 	\frac{1}{3^2\cdot (3 \cdot 5)}
	\begin{vmatrix}
1&0&0\\
0&1&0\\
\frac{1}{5 \cdot 2}&0&1
\end{vmatrix}
	= \frac{1}{3^3 \cdot 5^1},
\end{align*}
which agrees with (\ref{eq:A1}). This sets up an algorithm, where we 
successively eliminate two columns from the left and right blocks of the 
determinant $A_N$ after elementary column operations and factorizations. 
This algorithm is described next according to the following steps. For illustration, we use transformation of $A_4$ for $N = 4$.

\underline{\bf Step 1.} The determinant $A_N$ in (\ref{A}) contains two blocks of $(N-1)$ left columns and $N$ right columns. We subtract the first $(N-1)$ right columns from $(N-1)$ left columns, entries of which change to 
\begin{equation}
\label{step-1}
	\frac{1}{(2 \ell)!} - \frac{1}{(2\ell+1)!} = \frac{1}{(2\ell+1) \cdot (2\ell-1)!}, \qquad \ell=1,2,\dots,N-1,
\end{equation}
where $\ell$ corresponds to the $(2\ell)$-th lower diagonal of the left $(2N-1) \times (N-1)$ block. Since $\frac{1}{3}$ is a common factor in each left column, we extract the numerical factor  $\frac{1}{3^{N-1}}$ in front of the determinant and modify the entries 
of $(N-1)$ left columns  to 
\begin{equation*}
\frac{3}{(2\ell + 1) \cdot (2\ell-1)!}, \qquad \ell = 1,2,\dots,N-1.
\end{equation*}
For $N = 4$, this step yields 
\begin{equation*}
A_4 = \begin{vmatrix}
0&0&0&1&0&0&0\\
0&0&0&0&1&0&0\\
\frac{1}{3}&0&0&\frac{1}{3!}&0&1&0\\
0&\frac{1}{3}&0&0&\frac{1}{3!}&0&1\\
\frac{1}{5 \cdot 3!}&0&\frac{1}{3} & \frac{1}{5!}&0&\frac{1}{3!}&0 \\
0&\frac{1}{5 \cdot 3!}&0&0&\frac{1}{5!}&0& \frac{1}{3!} \\
\frac{1}{7 \cdot 5!}&0&\frac{1}{5 \cdot 3!} & \frac{1}{7!}&0&\frac{1}{5!}&0 
\end{vmatrix} 
= \frac{1}{3^3} \begin{vmatrix}
0&0&0&1&0&0&0\\
0&0&0&0&1&0&0\\
1&0&0&\frac{1}{3!}&0&1&0\\
0&1&0&0&\frac{1}{3!}&0&1\\
\frac{1}{5 \cdot 2}&0& 1 & \frac{1}{5!}&0&\frac{1}{3!}&0 \\
0&\frac{1}{5 \cdot 2}&0&0&\frac{1}{5!}&0& \frac{1}{3!} \\
\frac{3}{7 \cdot 5!}&0&\frac{1}{5 \cdot 2} & \frac{1}{7!}&0&\frac{1}{5!}&0 
\end{vmatrix} 
\end{equation*}

\underline{\bf Step 2.} We eliminate the first two rows and the first two right columns. This does not change the sign of the determinant, independently whether $N$ is even or odd. After the two-column elimination, we obtain a determinant which contains two blocks of $(N-1)$ left columns and $(N-2)$ right columns. We subtract the first $(N-2)$ left columns 
from $(N-2)$ right columns, entries of which change to 
\begin{equation}
\label{step-2}
\frac{1}{(2 \ell-1)!} - \frac{3}{(2\ell+1) \cdot (2\ell-1)!} = \frac{1}{(2\ell+1) \cdot (2\ell-1) \cdot (2\ell-3)!}, \qquad \ell = 2,\dots,N-1,
\end{equation}
where $\ell$ corresponds to the $2(\ell-1)$-th lower diagonal of the right 
$(2N-3) \times (N-2)$ block. Since $\frac{1}{3 \cdot 5}$ is a common factor in each right column, we extract the numerical factor  $\frac{1}{(3 \cdot 5)^{N-2}}$ in front of the determinant and modify the entries of $(N-2)$ right columns to 
\begin{equation*}
\frac{3 \cdot 5}{(2\ell+1) \cdot (2\ell-1) \cdot (2\ell-3)!}, \qquad \ell = 2,\dots,N-1.
\end{equation*}
For $N = 4$, this step yields 
\begin{align*}
A_4 &= \frac{1}{3^3} \begin{vmatrix}
1&0&0&1&0\\
0&1&0&0&1\\
\frac{1}{5 \cdot 2}&0& 1 & \frac{1}{3!}&0 \\
0&\frac{1}{5 \cdot 2}&0&0& \frac{1}{3!} \\
\frac{3}{7 \cdot 5!}&0&\frac{1}{5 \cdot 2} & \frac{1}{5!}&0 
\end{vmatrix} 
= \frac{1}{3^3} \begin{vmatrix}
1&0&0&0&0\\
0&1&0&0&0\\
\frac{1}{5 \cdot 2}&0& 1 & \frac{1}{5 \cdot 3}& 0 \\
0&\frac{1}{5 \cdot 2}&0&0& \frac{1}{5 \cdot 3} \\
\frac{3}{7 \cdot 5!}&0&\frac{1}{5 \cdot 2} & \frac{1}{7 \cdot 5 \cdot 3!}& 0 
\end{vmatrix} \\
&= \frac{1}{3^3 \cdot (3 \cdot 5)^2} \begin{vmatrix}
1&0&0&0&0\\
0&1&0&0&0\\
\frac{1}{5 \cdot 2}&0& 1 & 1& 0 \\
0&\frac{1}{5 \cdot 2}&0&0& 1 \\
\frac{3}{7 \cdot 5!}&0&\frac{1}{5 \cdot 2} & \frac{1}{7 \cdot 2}& 0 
\end{vmatrix} 
\end{align*}

\underline{\bf Step 3.} We eliminate the first two rows and the first two left columns and obtain a determinant which contains two blocks of $(N-3)$ left columns and $(N-2)$ right columns. We subtract the first $(N-3)$ right columns from $(N-3)$ left columns, entries of which change to 
\begin{align}
& \frac{3}{(2 \ell - 1) \cdot (2 \ell- 3)!} - \frac{3 \cdot 5}{(2\ell+1) \cdot (2 \ell - 1) \cdot (2\ell-3)!} \notag \\
&= \frac{3}{(2\ell+1) \cdot (2\ell-1) \cdot (2 \ell - 3) \cdot (2\ell-5)!}, \qquad \ell = 3,\dots,N-1, \label{step-3}
\end{align}
where $\ell$ corresponds to the $2(\ell-2)$-th lower diagonal of the left $(2N-5) \times (N-3)$ block. Since $\frac{1}{5 \cdot 7}$ is a common factor in each left column, we extract the numerical factor  $\frac{1}{(5 \cdot 7)^{N-3}}$ in front of the determinant and modify the entries of $(N-3)$ left columns to 
\begin{equation*}
\frac{3 \cdot 5 \cdot 7}{(2\ell+1) \cdot (2\ell-1) \cdot (2 \ell - 3) \cdot (2\ell-5)!}, \qquad \ell = 3,\dots,N-1.
\end{equation*}
For $N = 4$, this step yields 
\begin{align*}
A_4 &= \frac{1}{3^3 \cdot (3 \cdot 5)^2} \begin{vmatrix}
1 & 1& 0 \\
0&0& 1 \\
\frac{1}{5 \cdot 2} & \frac{1}{7 \cdot 2}& 0 
\end{vmatrix} 
= \frac{1}{3^3 \cdot (3 \cdot 5)^2} \begin{vmatrix}
0 & 1& 0 \\
0&0& 1 \\
\frac{1}{5 \cdot 7} & \frac{1}{7 \cdot 2}& 0 
\end{vmatrix} \\
&= \frac{1}{3^3 \cdot (3 \cdot 5)^2 \cdot (5 \cdot 7)^1} \begin{vmatrix}
0 & 1& 0 \\
0&0& 1 \\
1 & \frac{1}{7 \cdot 2}& 0 
\end{vmatrix} 
= \frac{1}{3^5 \cdot 5^3 \cdot 7^1},
\end{align*}
which completes the algorithm for $N = 4$. 

\underline{\bf Step $k$.} The algorithm is continued for any $2 \leq k \leq N-2$ 
by using the same alternation between the first and second blocks. 
After the subtraction of the first $(N-k)$ columns of the larger block from $(N-k)$ columns of the smaller block, entries of which are modified as
\begin{align}
& \frac{3 \cdot 5 \cdot \ldots \cdot (2k-3)}{(2 \ell - 1) \cdot (2\ell - 3) \cdot \ldots \cdot (2 \ell - 2k + 5) \cdot (2 \ell- 2 k +3)!} \notag \\
& - \frac{3 \cdot 5 \cdot \ldots \cdot (2k-1)}{(2\ell+1) \cdot (2 \ell - 1) \cdot \ldots \cdot (2\ell - 2 k + 5) \cdot (2\ell- 2k + 3)!} \notag \\
&= \frac{3 \cdot 5 \cdot \ldots \cdot (2k-3)}{(2\ell+1) \cdot (2\ell-1) \cdot \ldots \cdot  (2 \ell - 2k + 3) \cdot (2\ell - 2k + 1)!}, \qquad \ell = k,\dots,N-1,
\label{induction}
\end{align}
where $\ell$ corresponds to the $2 (\ell-k+1)$-th lower diagonal of the smaller $(2N-2k+1) \times (N-k)$ block. Since $\frac{1}{(2k-1) \cdot (2k+1)}$ is a common factor in each column, we extract the numerical factor $\frac{1}{((2k-1) \cdot (2k+1))^{N-k}}$ in front of the determinant and modify the entries of $(N-k)$ columns of the smaller block to 
\begin{equation}
\label{recursion}
\frac{3 \cdot 5 \cdot \ldots \cdot (2k+1)}{(2\ell+1)\cdot  (2\ell-1) \cdot \ldots \cdot  (2 \ell - 2k + 3) \cdot (2\ell - 2k + 1)!}, \qquad \ell = k,\dots,N-1.
\end{equation}
After eliminating the first two rows and the first two columns of the larger block, 
$k$ is incremented by $1$ until we reach $k = N-1$. 
When we recompute the entries in (\ref{induction}), we substitute (\ref{recursion}) with $k \to k-2$ and $\ell \to \ell - 1$ into the first term of (\ref{induction}) 
because the entries of the columns were not changed but 
the size of the block was reduced by two columns in the previous iteration and we substitute (\ref{recursion}) with $k \to k-1$ and $\ell \to \ell$ into the second term of (\ref{induction}) because the entries were changed but the size of the block contains the same number of columns as in the previous iteration. This proves the recursion formula (\ref{recursion}) by the method of mathematical induction.

\underline{\bf Step $k = N-1$.} At this last step, we have a $3 \times 3$ deteterminant. We perform (\ref{induction}) for the remaining column 
of a smaller block, extract the factor $\frac{1}{(2N-3) \cdot (2N-1)}$ and obtain a trivial $3\times 3$ determinant, which is equal to $1$, see examples for $N = 2,3,4$. Combining together all numerical factors, we obtain 
$$
A_N = \frac{1}{3^{N-1} \cdot (3 \cdot 5)^{N-2} \cdot (5 \cdot 7)^{N-3} \cdot 
	\ldots \cdot ((2N-5) \cdot(2N-3))^2 \cdot ((2N-3) \cdot (2N-1))},
$$
which recovers the numerical value (\ref{eq:A1}).
\end{proof}

\section{The proof of (\ref{det-expression-3})}
\label{app-B}
 \setcounter{equation}{0}
 
We first simplify the determinant by performing the same operations as in Appendix \ref{app-A}. However, since $\tilde{M}_N(1)$ consists of two blocks of $(N+1)$ and $(N-1)$ columns, we only add the $(N+1+j)$-th column to the $j$-th column for $1 \leq j \leq N-1$, then extract the factor of $2$ from the first $(N-1)$ columns,  then subtract the updated $j$-th column from the $(N+1+j)$-th column for $1 \leq j \leq N-1$ and finally, multiply the last $(N-1)$ columns by the negative signs. This leads to the block structure of $|\tilde{M}_N(1)|$:

\begin{align*}
|\tilde{M}_N(1)| &= 2^{N-1} (-1)^{N-1}\\
	&\left|
	\begin{array}{c c c c : c c:c c c}
	1 & 0 & 0 &  \dots & 0 & 0 & 0 & 0 & \dots \\
	0  & 1 & 0 &  \dots & 0 & 0 & 1 & 0 & \dots \\
	\frac{1}{2!} & 0  & 1 & \dots & 0 & 0 & 0 & 1 & \dots \\
	0 & \frac{1}{2!} & 0 &  \dots & 0 & 0 & \frac{1}{3!} & 0 & \dots \\
	\vdots & \vdots &  \vdots & \ddots & \vdots & \vdots & \vdots & \vdots &  \ddots \\
	\frac{1}{(N-1)!} & 0  & \frac{1}{(N-3)!} & \dots & 1 & 0 & 0 & \frac{1}{(N-2)!} &  \dots \\
	0 & \frac{1}{(N-1)!}  & 0 & \dots & 1 & 1 & \frac{1}{N!} & 0 &  \dots \\
	\vdots & \vdots &  \vdots & \ddots & \vdots & \vdots & \vdots & \vdots &  \ddots \\
	\frac{1}{(2N-2)!} & 0 & \frac{1}{(2N-4)!} & \dots & \frac{1}{(N-1)!} & \frac{1}{(N-2)!} & 0 & \frac{1}{(2N-3)!} & \dots \\
	0 & \frac{1}{(2N-2)!} & 0 & \dots & \frac{1}{N!} & \frac{1}{(N-1)!} & \frac{1}{(2N-1)!} & 0 & \dots 
	\end{array}
	\right|,
\end{align*}
where the $N$-th and $(N+1)$-th columns are unchanged and where we also show the $N$-th and $(N+1)$-th rows in the case when $N$ is odd. To further simplify the determinant, we expand it trivially along the first row and obtain the $(2N-1)\times(2N-1)$ determinant:
\begin{align}
\label{B}
B_N :=\left|
\begin{array}{c c c : c c:c c c}
1 & 0 &  \dots & 0 & 0 & 1 & 0 & \dots \\
0  & 1 & \dots & 0 & 0 & 0 & 1 & \dots \\
\frac{1}{2!} & 0 &  \dots & 0 & 0 & \frac{1}{3!} & 0 & \dots \\
\vdots &  \vdots & \ddots & \vdots & \vdots & \vdots & \vdots & \ddots \\
0  & \frac{1}{(N-3)!} & \dots & 1 & 0 & 0 & \frac{1}{(N-2)!} & \dots \\
\frac{1}{(N-1)!}  & 0 & \dots & 1 & 1 & \frac{1}{N!} & 0 & \dots \\
\vdots &  \vdots & \ddots & \vdots & \vdots & \vdots & \vdots & \ddots \\
0 & \frac{1}{(2N-4)!} & \dots & \frac{1}{(N-1)!} & \frac{1}{(N-2)!} & 0 & \frac{1}{(2N-3)!} & \dots \\
\frac{1}{(2N-2)!} & 0 & \dots & \frac{1}{N!} & \frac{1}{(N-1)!} & \frac{1}{(2N-1)!} & 0 & \dots 
	\end{array}
\right|,
\end{align}
where the two middle rows and columns are now located at $(N-1)$-th and $N$-th positions. The following proposition gives the proof of (\ref{det-expression-3}).

\begin{prop}
	\label{pro2}
	For any $N \in \mathbb{N}$, we have 
	\begin{equation}
	\label{eq:B1} 
	B_{N} = \frac{N}{1^{2N-1} 3^{2N-3} 5^{2N-5} 7^{2N-7} \ldots (2N-3)^3 (2N-1)}. 
	\end{equation}
\end{prop}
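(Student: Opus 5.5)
The plan is to mimic the two-column elimination algorithm of Proposition \ref{pro1}, treating the two middle columns of $B_N$ as a perturbation that survives until the end of the procedure. Concretely, $B_N$ in (\ref{B}) has a left block of $(N-2)$ columns of the same type as in $A_N$ (entries $\frac{1}{(2\ell)!}$ on even lower diagonals), then the two unchanged columns $L^{N-1}b_N(1)$ and $L^N b_N(1)$ with their first row deleted, and then a right block of $(N-1)$ columns with entries $\frac{1}{(2\ell+1)!}$. I would perform exactly the same alternating subtractions between left and right blocks as in Steps $1,\dots,k$ of Proposition \ref{pro1}. Since the block sizes are now $(N-2,N-1)$ instead of $(N-1,N)$, the same formulas (\ref{induction})--(\ref{recursion}) apply verbatim, and the same numerical factors $\frac{1}{3^{N-2}}$, $\frac{1}{(3\cdot 5)^{N-2}}$, $\dots$ are extracted, each time removing two rows and two columns of the larger block. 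The two middle columns are never used to modify the others; they are only truncated by the row eliminations.

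In order, I would first check $N=1,2,3$ by hand ($B_1=1$, $B_2=2/3$, and $B_3=3/(3^3\cdot 5)$) to fix signs and the placement of the middle columns. I would then run the induction on $k$, recording the extracted factors. I expect them to produce
\[
\frac{1}{1^{2N-1}3^{2N-3}\cdots(2N-1)}
\]
except possibly for one missing factor of the form $\frac{1}{(2N-3)(2N-1)}$ at the last step. After all eliminations, what remains is a small determinant of fixed size, presumably $4\times4$ or $3\times3$. It is built from the last rows of the two middle columns, whose entries are $\frac{1}{m!}$ with shifted indices, and the last reduced block columns, whose entries are given by (\ref{recursion}) with $k=N-1$.

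The main obstacle is evaluating this residual determinant and showing that it equals $N$ times the missing factor. The entries of the middle columns are not normalised by the algorithm, so the residual determinant is a finite alternating sum of reciprocal factorials and double-factorial ratios. My first attempt would be to compute it symbolically for general $N$ and reduce it, via partial fractions in $N$, to $N/((2N-3)(2N-1))$. If that becomes unwieldy, I would instead apply the algorithm's column operations to the middle columns as well, subtracting suitable multiples of the reduced block columns at each step. The aim is to prove by induction a closed form for the modified middle-column entries, analogous to (\ref{recursion}), in which a linear factor in $N$ (or in $\ell$) accumulates. The identity
\[
\frac{1}{(2\ell)!}-\frac{1}{(2\ell+1)!}=\frac{1}{(2\ell+1)(2\ell-1)!}
\]
together with its iterates should make these entries telescope, producing the extra factor $N$.
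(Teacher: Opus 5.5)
Your main route fails at the row eliminations themselves, not only at the final evaluation. In $B_N$ the two middle columns ($L^{N-1}b_N(1)$ and $L^{N}b_N(1)$ with the first row deleted) have their first nonzero entries in rows $N-1$ and $N$. Reaching a $3\times 3$ determinant, however, requires eliminating the top $2N-4$ rows. So for every $N\ge 3$ some eliminated rows contain nonzero middle-column entries. Discarding two rows together with two columns of the larger block while merely ``truncating'' the middle columns is then not a valid determinant identity.

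Already for $N=3$ this goes wrong. After $C_1-C_4$, row $2$ of $B_3$ has a $1$ both in the first middle column and in the second right column, and your truncation leaves
\[
\det\begin{pmatrix} 1/3 & 1 & 1\\ 0 & 1/2 & 1\\ 1/30 & 1/6 & 1/2\end{pmatrix}=\frac{2}{45},
\]
whereas $B_3=\frac{1}{45}$. The correct value appears only after first subtracting the second right column from the first middle column, which turns the entry $\tfrac12$ into $\tfrac13$.

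Your factor bookkeeping is also off. Block eliminations with block sizes $(N-2,N-1)$ produce only $3^{N-2}(3\cdot 5)^{N-3}\cdots((2N-5)(2N-3))$; note the exponent $N-3$, not $N-2$, at the second step. The part of the denominator that must come from the middle columns is therefore $3^2 5^2\cdots(2N-3)^2(2N-1)$, not a single factor $(2N-3)(2N-1)$.

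So modifying the middle columns is forced at each step, and it is exactly the content of the paper's proof. At step $k$ the paper does three things:
\begin{itemize}
\item it subtracts the last column of the larger block from the first middle column, clearing its entry in the rows about to be eliminated;
\item it extracts $\frac{1}{2k-1}$ from each middle column, which is where the squares $(2k-1)^2$ come from;
\item it subtracts $k$ times the first middle column from the second.
\end{itemize}
The entries of both middle columns are then tracked by explicit formulas, split according to the parity of $k$ and proven by induction, and the terminal $3\times 3$ determinant equals $\frac{N}{2N-1}$.

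Your fallback points in this direction, but it gives neither the specific operations, nor the closed form of the middle-column entries to be carried through the induction, nor the evaluation of the residual determinant. Those three items are the whole proof. The telescoping identity you quote governs only the block columns and does not by itself produce the factor $N$.
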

	
\begin{proof}
When $N=1$, we obtain $B_1 = 1$, which agrees with (\ref{eq:B1}). 
When $N=2$, we obtain 
\begin{align*}
B_{2} =
\begin{vmatrix}
1 & 0 & 1\\
1 & 1 & 0\\
\frac{1}{2!} & 1&\frac{1}{3!}
\end{vmatrix}
\xlongequal{\,C_1-C_3\,}
\begin{vmatrix}
0 & 0 & 1\\
1 & 1 & 0\\
\frac{1}{3} & 1&\frac{1}{3!}
\end{vmatrix}
\xlongequal{\,C_2-C_1\,}
\begin{vmatrix}
0 & 0 & 1\\
1 & 0 & 0\\
\frac{1}{3} & \frac{2}{3}&\frac{1}{3!}
\end{vmatrix}
=  \frac{2}{3},
\end{align*}
which agrees with (\ref{eq:B1}). When $N=3$, we obtain 
\begin{align*}
B_{3} &=
\begin{vmatrix}
1 & 0 & 0 & 1 & 0\\
0 & 1 & 0 & 0 & 1\\
\frac{1}{2!} & 1 & 1 & \frac{1}{3!}&0\\
0 & \frac{1}{2!} &1 &0 & \frac{1}{3!}\\
\frac{1}{4!} & \frac{1}{3!} &\frac{1}{2!} &\frac{1}{5!} & 0 
\end{vmatrix} 
\xlongequal[C_2-C_5]{C_1-C_4}
\begin{vmatrix}
0 & 0 & 0 & 1 & 0\\
0 & 0 & 0 & 0  & 1\\
\frac{1}{3} & 1 & 1 & \frac{1}{3!}&0\\
0 & \frac{1}{3} &1 &0 & \frac{1}{3!}\\
\frac{1}{5 \cdot 3!} & \frac{1}{3!} &\frac{1}{2!} &\frac{1}{5!} & 0 
\end{vmatrix} 
\xlongequal{C_3-C_2}	
\frac{1}{3}
\begin{vmatrix}
0 & 0 & 0 & 1 & 0\\
0 & 0 & 0 & 0  & 1\\
1 & 1 & 0 & \frac{1}{3!}&0\\
0 & \frac{1}{3} & \frac{2}{3} &0 & \frac{1}{3!}\\
\frac{1}{5 \cdot 2} & \frac{1}{3!} &\frac{1}{3} &\frac{1}{5!} & 0 
\end{vmatrix} 
	\\
&=	\frac{1}{3}
	\begin{vmatrix}
	1 & 1 & 0 \\
	0 & \frac{1}{3} & \frac{2}{3}\\
	\frac{1}{5 \cdot 2} & \frac{1}{3!} &\frac{1}{3} 
	\end{vmatrix}
\xlongequal{C_2-C_1}
\frac{1}{3}
\begin{vmatrix}
1 & 0 & 0\\
0 & \frac{1}{3} & \frac{2}{3}\\
\frac{1}{5 \cdot 2}& \frac{1}{5 \cdot 3}&\frac{1}{3}
\end{vmatrix} 
=\frac{1}{3 \cdot 3^2}
\begin{vmatrix}
1 & 0 & 0\\
0 & 1 & 2\\
\frac{1}{5 \cdot 2}& \frac{1}{5}& 1
\end{vmatrix} 
\xlongequal{C_3-2 C_2}
\frac{1}{3 \cdot 3^2}
\begin{vmatrix}
1 & 0 & 0\\
0 & 1 & 0\\
\frac{1}{5 \cdot 2} & \frac{1}{5}&\frac{3}{5}
\end{vmatrix} \\
&=\frac{3}{3^3 \cdot 5},
\end{align*}
which agrees with (\ref{eq:B1}).

We further adapt the algorithm based on the two-column elimination from the proof of Proposition \ref{pro1} to the determinant $B_N$, where we focus on the transformations of the two middle columns at each step of the algorithm. 
We label the first middle column as $C_1$ and the second middle column as $C_2$. As previously, we illustrate the algorithm in the case of $N = 4$.
	
\underline{\bf Step 1.} The determinant $B_N$ in (\ref{B}) contains three blocks of $(N-2)$ left columns, $2$ middle columns, and $(N-1)$ right columns. We subtract the first $(N-2)$ right columns from $(N-1)$ left columns. With the same recursive formula (\ref{step-1}),
we extract the numerical factor $\frac{1}{3^{N-2}}$ in front of the determinant and modify the entries of $(N-2)$ left columns to 
\begin{equation}
\label{B-step1}
		\frac{3}{(2\ell + 1) \cdot (2\ell-1)!}, \qquad \ell = 1,\dots,N-1,
	\end{equation}
where $\ell$ corresponds to the $(2\ell)$-th lower diagonal of the left $(2N-1) \times (N-2)$ block. 

In addition, we subtract the last right column from the first middle column. 
The nonzero entries at $j = N + 2 \ell$ are unchanged as $\frac{1}{(2 \ell +1)!}$
for $\ell = 0,1,\dots, \lfloor \frac{N-1}{2} \rfloor$ but the nonzero entries at $j = N + 2 \ell - 1$ are modified as $\frac{1}{(2\ell+1) \cdot (2\ell - 1)!}$ similarly to (\ref{step-1}) for $\ell = 1,\dots,\lfloor \frac{N}{2} \rfloor$, where $\lfloor \cdot \rfloor$ denotes the integer floor. For transparency, we record the nonzero entries of the first middle column $C_1$ as 
\begin{align}
\label{C1-step1}
\left\{ \begin{array}{ll} (C_1)_{N + 2 \ell -1} = \frac{1}{(2\ell+1) \cdot (2\ell - 1)!},
\quad & \ell = 1,\dots,\lfloor \frac{N}{2} \rfloor, \\
(C_1)_{N + 2 \ell} = \frac{1}{(2 \ell +1)!}, \quad & \ell = 0,\dots,\lfloor \frac{N-1}{2} \rfloor. \end{array} \right.
\end{align}

After this transformation, we subtract the first middle column from the second middle column. This changes the nonzero entries at $j = N+2\ell$ to $\frac{1}{(2\ell + 1) \cdot (2\ell -1)!}$ similarly to (\ref{step-1}) for $\ell = 1,\dots, \lfloor \frac{N-1}{2} \rfloor$ and changes the nonzero entries at $j = N+2\ell - 1$ as 
$$
\frac{1}{(2 \ell - 1)!} - \frac{1}{(2 \ell + 1) \cdot (2 \ell - 1)!} = \frac{(2 \ell)}{(2 \ell + 1) \cdot (2 \ell - 1)!}, \quad \ell = 1,\dots,\lfloor \frac{N}{2} \rfloor.
$$
Again, we record the nonzero entries of the second middle column $C_2$ as 
\begin{align}
\label{C2-step1}
\left\{ \begin{array}{ll} (C_2)_{N + 2 \ell -1} = \frac{(2 \ell)}{(2 \ell + 1) \cdot (2 \ell - 1)!},
\quad & \ell = 1,\dots,\lfloor \frac{N}{2} \rfloor, \\
(C_2)_{N + 2 \ell} = \frac{1}{(2\ell + 1) \cdot (2\ell -1)!}, \quad & \ell = 1,\dots,\lfloor \frac{N-1}{2} \rfloor. \end{array} \right.
\end{align}

For $N=4$, this step yields
	\begin{align*}
		B_4 &= \begin{vmatrix}
			1&0&0&0&1&0&0\\
			0&1&0&0&0&1&0\\
			\frac{1}{3}&0&1&0&\frac{1}{3!}&0&1\\
			0&\frac{1}{3}&1&1&0&\frac{1}{3!}&0\\
			\frac{1}{5 \cdot 3!}&0&\frac{1}{2!} &1& \frac{1}{5!}&0&\frac{1}{3!} \\
			0&\frac{1}{5 \cdot 3!}&\frac{1}{3!}&\frac{1}{2!}&0& \frac{1}{5!}&0 \\
			\frac{1}{7 \cdot 5!}&0&\frac{1}{4!} & \frac{1}{3!}&\frac{1}{7!}&0 &\frac{1}{5!}
		\end{vmatrix} 
		= \begin{vmatrix}
		0&0&0&0&1&0&0\\
		0&0&0&0&0&1&0\\
		\frac{1}{3}&0&0&0&\frac{1}{3!}&0&1\\
		0&\frac{1}{3}&1&1&0&\frac{1}{3!}&0\\
		\frac{1}{5 \cdot 3!}&0&\frac{1}{3} &1& \frac{1}{5!}&0&\frac{1}{3!} \\
		0&\frac{1}{5 \cdot 3!}&\frac{1}{3!}&\frac{1}{2!}&0& \frac{1}{5!}&0 \\
		\frac{1}{7 \cdot 5!}&0&\frac{1}{5 \cdot 3!} & \frac{1}{3!}&\frac{1}{7!}&0 &\frac{1}{5!}
		\end{vmatrix}\\
		&= \frac{1}{3^2} \begin{vmatrix}
			0&0&0&0&1&0&0\\
			0&0&0&0&0&1&0\\
			1&0&0&0&\frac{1}{3!}&0&1\\
			0&1&1&0&0&\frac{1}{3!}&0\\
			\frac{1}{5 \cdot 2}&0& \frac{1}{3}& \frac{2}{3} & \frac{1}{5!}&0&\frac{1}{3!} \\
			0&\frac{1}{5 \cdot 2}&\frac{1}{3!}&\frac{1}{3}&0&\frac{1}{5!}&0 \\
			\frac{3}{7 \cdot 5!}&0&\frac{1}{5 \cdot 2} &\frac{4}{5 \cdot 3!}& \frac{1}{7!}&0&\frac{1}{5!}
		\end{vmatrix} 
	\end{align*}
	
	\underline{\bf Step 2.} We eliminate the first rows and the first two right columns and obtain a determinant which contains three blocks of $(N-2)$ left columns, $2$ middle columns, and $(N-3)$ right columns. We subtract the first 
	$(N-3)$ left columns from $(N-3)$ right columns. With the same recursive formula (\ref{step-2}), we extract the numerical factor $\frac{1}{(3 \cdot 5)^{N-3}}$ in front of the determinant and modify the entries of $(N-3)$ right columns to 
\begin{equation}
\label{B-step2}
\frac{3 \cdot 5}{(2\ell + 1) \cdot (2\ell-1) \cdot (2\ell - 3)!}, \qquad \ell = 2,\dots,N-1,
\end{equation}
where $\ell$ corresponds to the $2(\ell - 1)$-th lower diagonal of the right $(2N-3) \times (N-3)$ block. Next, we subtract the last left column from the first middle column. 
The nonzero entries at $j = N + 2 \ell - 3$ are unchanged as $\frac{1}{(2 \ell + 1) \cdot (2 \ell - 1)!}$ obtained after Step 1 for $\ell = 1,\dots,\lfloor \frac{N}{2} \rfloor$ but the nonzero entries at $j = N + 2 \ell - 2$ are modified as $\frac{1}{(2\ell + 3) \cdot (2 \ell + 1) \cdot (2 \ell - 1)!}$ similarly to (\ref{step-2}) for $\ell = 1,\dots,\lfloor \frac{N-1}{2} \rfloor$. Since $\frac{1}{3}$ is a common factor in both middle columns, we extract it from both middle columns. Entries to the two columns are written explicitly as
\begin{align}
\label{C1-step2}
\left\{ \begin{array}{ll}  
(C_1)_{N+2\ell - 3} = \frac{3}{(2 \ell + 1) \cdot (2 \ell - 1)!}, \quad & \ell = 1,\dots,\lfloor \frac{N}{2} \rfloor, \\
(C_1)_{N+2\ell - 2} = \frac{3}{(2\ell + 3) \cdot (2 \ell + 1) \cdot (2 \ell - 1)!},  \quad & \ell = 1,\dots,\lfloor \frac{N-1}{2} \rfloor 
\end{array} \right.
\end{align}
and 
\begin{align*}
\left\{ \begin{array}{ll}  
(C_2)_{N+2 \ell -3} = \frac{(2 \ell) \cdot 3}{(2 \ell + 1) \cdot (2 \ell - 1)!}, \quad & \ell = 1,\dots,\lfloor \frac{N}{2} \rfloor, \\
(C_2)_{N+2 \ell -2} = \frac{3}{(2 \ell + 1) \cdot (2 \ell - 1)!},  \quad & \ell = 1,\dots,\lfloor \frac{N-1}{2} \rfloor 
\end{array} \right.
\end{align*}
After this operation, we subtract the first middle column multiplied by $2$ from the second middle column and recompute nonzero entries of the second column as 
\begin{align}
\label{C2-step2}
\left\{ \begin{array}{ll} (C_2)_{N+2 \ell -3} = \frac{3}{(2 \ell + 1) \cdot (2 \ell - 1) \cdot (2\ell - 3)!},
\quad & \ell = 2,\dots,\lfloor \frac{N}{2} \rfloor, \\
(C_2)_{N+2 \ell -2} = \frac{3}{(2 \ell + 3) \cdot (2 \ell - 1)!}, \quad & \ell = 1,\dots,\lfloor \frac{N-1}{2} \rfloor.  \end{array} \right.
\end{align}
	
For $N=4$, this step yields
	\begin{align*}
		B_{4} &=\frac{1}{3^2}\begin{vmatrix}
			1&0&0&0&1\\
			0&1&1&0&0\\
			\frac{1}{5 \cdot 2}&0&\frac{1}{3}& \frac{2}{3} & \frac{1}{3!}\\
			0&\frac{1}{5 \cdot 2}&\frac{1}{3!}&\frac{1}{3}&0\\
			\frac{3}{7 \cdot 5!}&0&\frac{1}{5\cdot 3!}&\frac{4}{5 \cdot 3!}&\frac{1}{5!}
		\end{vmatrix}=\frac{1}{3^2}
		\begin{vmatrix}
			1&0&0&0&0\\
			0&1&0&0&0\\
			\frac{1}{5 \cdot 2}&0&\frac{1}{3}&\frac{2}{3}&\frac{1}{5 \cdot 3}\\
			0&\frac{1}{5 \cdot 2}&\frac{1}{5 \cdot 3}&\frac{1}{3}&0\\
			\frac{3}{7 \cdot 5!}&0&\frac{1}{5\cdot 3!}&\frac{4}{5 \cdot3!}&\frac{1}{7 \cdot 5 \cdot 3!}
		\end{vmatrix} \\
		&=\frac{1}{3^2 \cdot (3 \cdot 5)^1 \cdot 3^2}
		\begin{vmatrix}
			1&0&0&0&0\\
			0&1&0&0&0\\
			\frac{1}{5 \cdot 2}&0&1&2&1\\
			0&\frac{1}{5 \cdot 2}&\frac{1}{5}&1&0\\
			\frac{3}{7 \cdot 5!}&0&\frac{1}{5\cdot 2}&\frac{4}{5 \cdot 2}&\frac{1}{7 \cdot 2}
		\end{vmatrix} 
		=\frac{1}{3^2 \cdot (3 \cdot 5)^1 \cdot 3^2}
		\begin{vmatrix}
		1&0&0&0&0\\
		0&1&0&0&0\\
		\frac{1}{5 \cdot 2}&0&1&0&1\\
		0&\frac{1}{5 \cdot 2}&\frac{1}{5}&\frac{3}{5}&0\\
		\frac{3}{7 \cdot 5!}&0&\frac{1}{5\cdot 2}&\frac{1}{5}&\frac{1}{7 \cdot 2}.
		\end{vmatrix}
	\end{align*}

	\underline{\bf Step 3.} We eliminate the first two rows and first two left columns and obtain a determinant which contains three blocks of $(N-4)$ left columns, $2$ middle columns, and $(N-3)$ right columns. If $N \geq 5$, we subtract the first $(N-4)$ right columns from $(N-4)$ left columns. With the same recursive formula (\ref{step-3}), we extract the numerical factor $\frac{1}{(5 \cdot 7)^{N-4}}$ in front of the determinant and modify the entries of $(N-4)$ left columns to 
\begin{equation}
\label{B-step3}
	\frac{3 \cdot 5 \cdot 7}{(2\ell + 1) \cdot (2\ell-1) \cdot (2\ell - 3) \cdot (2\ell - 5)!}, \qquad \ell = 3,\dots,N-1,
\end{equation}
	where $\ell$ corresponds to the $2(\ell - 2)$-th lower diagonal of the left $(2N-5) \times (N-4)$ block. Next, we subtract the last right column from the first middle column. 
	The nonzero entries at $j = N+2 \ell - 4$ are unchanged as $\frac{3}{(2\ell + 3) \cdot (2 \ell + 1) \cdot (2 \ell - 1)!}$ obtained after Step 2 for $\ell = 1,\dots, \lfloor \frac{N-1}{2} \rfloor$ but the nonzero entries for 
	$j = N+2\ell - 5$ are modified as $\frac{3}{(2\ell + 3) \cdot (2 \ell + 1) \cdot (2 \ell - 1) \cdot (2 \ell - 3)!}$ similarly to (\ref{step-3}) for $\ell = 2, \dots, \lfloor \frac{N}{2} \rfloor$. Since $\frac{1}{5}$ is a common factor in both middle columns, we extract it from both middle columns. Entries to the two columns are written explicitly as
\begin{align}
\label{C1-step3}
\left\{ \begin{array}{ll}  
(C_1)_{N+2\ell - 5} = \frac{3 \cdot 5}{(2\ell + 3) \cdot (2 \ell + 1) \cdot (2 \ell - 1) \cdot (2 \ell - 3)!}, \quad & \ell = 2,\dots,\lfloor \frac{N}{2} \rfloor, \\
(C_1)_{N+2\ell - 4} = \frac{3 \cdot 5}{(2\ell + 3) \cdot (2 \ell + 1) \cdot (2 \ell - 1)!},  \quad & \ell = 1,\dots,\lfloor \frac{N-1}{2} \rfloor 
\end{array} \right.
\end{align}
and 
\begin{align*}
\left\{ \begin{array}{ll}  
(C_2)_{N+2 \ell - 5} = \frac{3 \cdot 5}{(2 \ell + 1) \cdot (2 \ell - 1) \cdot (2\ell - 3)!}, \quad & \ell = 2,\dots,\lfloor \frac{N}{2} \rfloor, \\
(C_2)_{N+2 \ell - 4} = \frac{3 \cdot 5}{(2 \ell + 3) \cdot (2 \ell - 1)!}, \quad & \ell = 1,\dots,\lfloor \frac{N-1}{2} \rfloor 
\end{array} \right. 
\end{align*}	
After this operation, we subtract the first middle column multiplied by $3$ from the second middle column and recompute nonzero entries of the second column as 
\begin{align}
\label{C2-step3}
\left\{ \begin{array}{ll} (C_2)_{N+2 \ell -5} = \frac{(2 \ell) \cdot 3 \cdot 5}{(2 \ell + 3) \cdot (2 \ell + 1) \cdot (2 \ell - 1) \cdot (2\ell - 3)!},
\quad & \ell = 2,\dots,\lfloor \frac{N}{2} \rfloor, \\
(C_2)_{N+2 \ell - 4} = \frac{3 \cdot 5}{(2\ell + 3) \cdot (2 \ell + 1) \cdot (2 \ell - 1) \cdot (2 \ell - 3)!}, \quad & \ell = 2,\dots,\lfloor \frac{N-1}{2} \rfloor.  \end{array} \right.
\end{align}	

For $N=4$, this step yields
\begin{align*}
B_4 &= \frac{1}{3^2 \cdot (3 \cdot 5)^1 \cdot 3^2}
\begin{vmatrix}
1&0&1\\
\frac{1}{5}&\frac{3}{5}&0\\
\frac{1}{5\cdot 2}&\frac{1}{5}&\frac{1}{7 \cdot 2}
\end{vmatrix}
= \frac{1}{3^2 \cdot (3 \cdot 5)^1 \cdot 3^2}
\begin{vmatrix}
0&0&1\\
\frac{1}{5}&\frac{3}{5}&0\\
\frac{1}{7 \cdot 5}&\frac{1}{5}&\frac{1}{7 \cdot 2}
\end{vmatrix} \\
&= \frac{1}{3^2 \cdot (3 \cdot 5)^1 \cdot 3^2 \cdot 5^2}
\begin{vmatrix}
0&0&1\\
1&3&0\\
\frac{1}{7}&1&\frac{1}{7 \cdot 2}
\end{vmatrix}
= \frac{1}{3^2 \cdot (3 \cdot 5)^1 \cdot 3^2 \cdot 5^2}
\begin{vmatrix}
0&0&1\\
1&0&0\\
\frac{1}{7}& \frac{4}{7}&\frac{1}{7 \cdot 2}
\end{vmatrix}
= \frac{4}{3^5 \cdot 5^3 \cdot 7},
\end{align*}
which completes the algorithm for $N = 4$.

\underline{\bf Step $k$.} The algorithm is continued for any $2 \leq k \leq N-2$ 
by using the same alternation between the left and right blocks. With the same recursive formula (\ref{induction}) but for $\ell = k,\dots,N-2$, we extract the numerical factor $\frac{1}{((2k-1) \cdot (2k+1))^{N-k-1}}$ and modify the entries of $(N-k)$ columns of a smaller block to 
\begin{equation}
\label{B-step-k}
\frac{3 \cdot 5 \cdot \ldots \cdot (2k+1)}{(2\ell+1)\cdot  (2\ell-1) \cdot \ldots \cdot  (2 \ell - 2k + 3) \cdot (2\ell - 2k + 1)!}, \qquad \ell = k,\dots,N-1,
\end{equation}
where $\ell$ corresponds to the $2 (\ell-k+1)$-th lower diagonal of the smaller $(2N-2k+1) \times (N-k-1)$ block. Next, we subtract the last column of the larger block from the first middle column and extract the numerical factor $\frac{1}{(2k-1)^2}$ from both middle columns. Entries of the first middle columns are now written in the form: 
\begin{align}
\label{C1-step-k-even}
\left\{ \begin{array}{ll}  
(C_1)_{N+2\ell - 2k + 1} = \frac{3 \cdot 5 \cdot \dots \cdot (2k-1)}{(2 \ell + 2m-1) \cdot (2\ell + 2m - 3) \cdot \ldots \cdot (2\ell - 2m + 3) \cdot (2 \ell - 2m + 1)!}, \; & \ell = m,\dots,\lfloor \frac{N}{2} \rfloor, \\
(C_1)_{N+2\ell - 2k + 2} = \frac{3 \cdot 5 \cdot \dots \cdot (2k-1)}{(2\ell + 2m + 1) \cdot (2 \ell + 2m - 1) \cdot \ldots \cdot (2 \ell -2m + 3) \cdot (2 \ell - 2m + 1)!},  \; & \ell = m,\dots,\lfloor \frac{N-1}{2} \rfloor ,
\end{array} \right.
\end{align}
if $k = 2m$ is even, and in the form:
\begin{align}
\label{C1-step3-k-odd}
\left\{ \begin{array}{ll}  
(C_1)_{N+2\ell - 2k + 1} = \frac{3 \cdot 5 \cdot \dots \cdot (2k-1)}{(2\ell + 2m - 1) \cdot (2 \ell + 2m-3) \cdot \ldots \cdot (2 \ell - 2m + 3) \cdot (2 \ell - 2m+1)!}, \; & \ell = m,\dots,\lfloor \frac{N}{2} \rfloor, \\
(C_1)_{N+2\ell - 2k + 2} = \frac{3 \cdot 5 \cdot \dots \cdot (2k-1)}{(2\ell + 2m - 1) \cdot (2 \ell + 2 m - 3) \cdot \ldots \cdot (2\ell - 2 m + 5) \cdot (2 \ell - 2m + 3)!},  \; & \ell = m-1,\dots,\lfloor \frac{N-1}{2} \rfloor ,
\end{array} \right.
\end{align}
if $k = 2m-1$ is odd, see (\ref{C1-step1}), (\ref{C1-step2}), and (\ref{C1-step3}). 

Finally, we subtract the first middle column multiplied by $k$ from the second middle column. Entries of the second middle columns are now written in the form: 
\begin{align}
\label{C2-step-k-even}
\left\{ \begin{array}{ll} (C_2)_{N+2 \ell - 2k + 1} = \frac{3 \cdot 5 \cdot \dots \cdot (2k-1)}{(2 \ell + 2m-1) \cdot (2\ell + 2m - 3) \cdot \ldots \cdot (2\ell - 2m + 1) \cdot (2 \ell - 2m - 1)!},
\quad & \ell = m+1,\dots,\lfloor \frac{N}{2} \rfloor, \\
(C_2)_{N+2 \ell - 2k + 2} = \frac{3 \cdot 5 \cdot \dots \cdot (2k-1)}{{\tiny \underbrace{(2\ell + 2m + 1) \cdot (2 \ell + 2m - 1) \cdot \ldots \cdot (2 \ell -2m + 3)}_{\neq (2\ell + 1)} }\cdot (2 \ell - 2m + 1)!}, \quad & \ell = m,\dots,\lfloor \frac{N-1}{2} \rfloor.  \end{array} \right.
\end{align}
if $k = 2m$ is even, and in the form:
\begin{align}
\label{C2-step-k-odd}
\left\{ \begin{array}{ll} (C_2)_{N+2 \ell - 2k+1} = \frac{(2 \ell) \cdot 3 \cdot 5 \cdot \dots \cdot (2k-1)}{(2\ell + 2m - 1) \cdot (2 \ell + 2m-3) \cdot \ldots \cdot (2 \ell - 2m + 3) \cdot (2 \ell - 2m+1)!},
\; & \ell = m,\dots,\lfloor \frac{N}{2} \rfloor, \\
(C_2)_{N+2 \ell - 2k + 2} = \frac{3 \cdot 5 \cdot \dots \cdot (2k-1)}{(2\ell + 2m - 1) \cdot (2 \ell + 2m-3) \cdot \ldots \cdot (2 \ell - 2m + 3) \cdot (2 \ell - 2m+1)!}, \; & \ell = m,\dots,\lfloor \frac{N-1}{2} \rfloor, 
 \end{array} \right.
\end{align}	
if $k = 2m-1$ is odd, see (\ref{C2-step1}), (\ref{C2-step2}), and (\ref{C2-step3}).

After eliminating the first two rows and the first two columns of the larger block, 
$k$ is incremented by $1$ until we reach $k = N-1$. 
The recursion formulas (\ref{B-step-k}), (\ref{C1-step-k-even}), (\ref{C1-step3-k-odd}), (\ref{C2-step-k-even}), and (\ref{C2-step-k-odd}) are proven by the method of mathematical induction.

\underline{\bf Step $k = N-1$.} At this last step, 
we have a $3\times 3$ determinant. We subtract the last column of either left or right block from the first middle column and extract the numerical factor $\frac{1}{(2N-3)^2}$ from both middle columns. Then, we subtract the first middle column multiplied by $(N-1)$ from the second middle column. After this operation, we obtain a trivial $3 \times 3$ determinant, which is equal to $\frac{N}{(2N-1)}$, see examples for $N = 2,3,4$. Combining together all numerical factors, we obtain 
$$
B_N = \frac{N}{3^{N-2} \cdot (3 \cdot 5)^{N-3} \cdot \ldots \cdot ((2N-5) \cdot(2N-3))^1 \cdot 3^2 \cdot 5^2 \cdot \ldots \cdot (2N-3)^2 \cdot (2N-1)},
$$
which recovers the numerical value (\ref{eq:B1}).
\end{proof}

\section{The proof of (\ref{det-expression-4})}
\label{app-C}

Both $M^{(1)}_N(1)$ and $M^{(2)}_N(1)$ consist of two equal blocks of $N$ columns. 
Compared to $M_N(1)$ in Appendix \ref{app-A}, the last column of the left block 
is modified in $M_N^{(1)}(1)$ by an additional shift down 
and the last column of the right block is modified in $M_N^{(2)}(1)$ by an additional shift down. We perform the same operations with the first $(N-1)$ columns in the left and right blocks as in Appendix \ref{app-B}, expand it trivially along the first row, and obtain  obtain                                                                                                                                                                   
\begin{align*}
	\left| M^{(1)}_{N}(1) \right| = 2^{N-1} (-1)^{N-1} C_N^{(1)}, \qquad 
	\left| M^{(2)}_{N}(1) \right| = 2^{N-1} (-1)^{N-1} C_N^{(2)}, 
\end{align*}
where the $(2N-1) \times (2N-1)$ determinants $C_N^{(1)}$ and $C_N^{(2)}$ are given by 
\begin{align}
	C^{(1)}_{N} =\left|\begin{array}{c c c c : c c c c}
		1&0&\dots&0&1&0&\dots&0\\
		0&1&\dots&0&0&1&\dots&0\\
		\frac{1}{2!}&0&\dots&0&\frac{1}{3!}&0&\dots&0\\
		\vdots&\vdots&\ddots&\vdots&\vdots&\vdots&\ddots&\vdots\\
		0&\frac{1}{(N-3)!}&\cdots&0&0&\frac{1}{(N-2)!}&\cdots&1\\
		\frac{1}{(N-1)!}&0&\cdots&1&\frac{1}{N!}&0&\cdots&-1\\
		\vdots&\vdots&\ddots&\vdots&\vdots&\vdots&\ddots&\vdots\\
		0&\frac{1}{(2N-4)!}&\dots&\frac{1}{(N-2)!}&0&\frac{1}{(2N-3)!}&\dots&\frac{1}{(N-1)!}\\
		\frac{1}{(2N-2)!}&0&\dots&\frac{1}{(N-1)!}&\frac{1}{(2N-1)!}&0&\dots&-\frac{1}{N!}
	\end{array}\right|, \label{C1}
	\end{align}
	and
	\begin{align}
	C^{(2)}_{N}:=\left|\begin{array}{c c c c : c c c c}
		1&0&\dots&0&1&0&\dots&0\\
		0&1&\dots&0&0&1&\dots&0\\
		\frac{1}{2!}&0&\dots&0&\frac{1}{3!}&0&\dots&0\\
		\vdots&\vdots&\ddots&\vdots&\vdots&\vdots&\ddots&\vdots\\
		0&\frac{1}{(N-3)!}&\dots&1&0&\frac{1}{(N-2)!}&\cdots&0\\
		\frac{1}{(N-1)!}&0&\dots&1&\frac{1}{N!}&0&\cdots&1\\
		\vdots&\vdots&\ddots&\vdots&\vdots&\vdots&\ddots&\vdots\\
		0&\frac{1}{(2N-4)!}&\dots&\frac{1}{(N-1)!}&0&\frac{1}{(2N-3)!}&\dots&-\frac{1}{(N-2)!}\\
		\frac{1}{(2N-2)!}&0&\dots&\frac{1}{N!}&\frac{1}{(2N-1)!}&0&\dots&\frac{1}{(N-1)!}
	\end{array}\right|. 
	\label{C2}
\end{align}
We note that the $(N-1)$-th and $(2N-1)$-th columns in $C_N^{(1)}$ and $C_N^{(2)}$ 
are unchanged and we show entries of the $(N-1)$-th and $N$-th rows in the case when $N$ is odd. The following proposition gives the proof of (\ref{det-expression-4})

\begin{prop}\label{pro3}
	For any $N \in \mathbb{N}$, we have 
	\begin{align} 
C_N^{(1)}= - C_N^{(2)} = - \frac{N^2}{1^{2N-1}3^{2N-3}5^{2N-5}7^{2N-7}\ldots(2N-3)^{3}(2N-1)^1}.
		\label{eq:C4}
	\end{align}
\end{prop}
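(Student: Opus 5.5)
I plan to evaluate $C_N^{(1)}$ and $C_N^{(2)}$ by the two-column elimination algorithm of Propositions \ref{pro1} and \ref{pro2}. The special (shifted) column is treated as the middle columns were in Appendix \ref{app-B}: it is carried along and its entries are recorded explicitly at every step. The first $2N-2$ columns of $C_N^{(1)}$ and $C_N^{(2)}$ coincide with the corresponding columns of $A_N$ in (\ref{A}), except that one block has one column fewer. Steps 1 through $N-1$ of the proof of Proposition \ref{pro1} can therefore be repeated verbatim on those columns. The extraction factors are the same, $3^{N-1}$, $(3\cdot5)^{N-2}$, \dots, $((2N-3)(2N-1))^1$, up to the bookkeeping of the missing column. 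This produces the common denominator $1^{2N-1}3^{2N-3}\cdots(2N-1)$.

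The special column is the last one, and its entries are the reversed-sign/shifted factorial sequence displayed in (\ref{C1}) and (\ref{C2}). The key task is to track what happens to it. At each step $k$, I will subtract from it the last column of the currently larger block, multiplied by a suitable integer. This is analogous to the ``multiply by $k$'' subtraction that produced the factor $N$ in Proposition \ref{pro2}. I will then record its nonzero entries on the two interlaced sub-diagonals in closed form, analogous to (\ref{C1-step-k-even})--(\ref{C2-step-k-odd}). The ansatz I will try to verify by induction is the following. After step $k$, the entries of the special column equal those of the standard column (\ref{recursion}) multiplied by a linear factor of the form $(\ell+\text{const})$, with an overall prefactor depending quadratically on $k$. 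The step-$k$ recursion then reduces to the identity
$$\frac{a}{(2\ell-1)!}-\frac{b}{(2\ell+1)\,(2\ell-1)!}=\frac{(2\ell+1)a-b}{(2\ell+1)\,(2\ell-1)!},$$
together with its higher analogues. These are checked exactly as in (\ref{induction}).

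Before the general induction, I will compute the small cases $N=1,2,3$ by hand, as in the propositions above. This will pin down the pattern of the accumulated factor. For example, $N=2$ should give $C_2^{(1)}=-C_2^{(2)}=-4/3$. The sign relation $C_N^{(1)}=-C_N^{(2)}$ should be visible structurally. The two special columns differ only by the sign pattern $\pm$ on alternate rows coming from $b_N(\pm z)$, so after the reductions the final $3\times3$ determinants differ by a sign flip of one column. Alternatively, one could relate $|M_N^{(1)}(1)|$ and $|M_N^{(2)}(1)|$ directly through the symmetry $z\mapsto -z$, which interchanges the blocks $b_N(z)$ and $b_N(-z)$, and then count the column swaps.

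I expect the main obstacle to be getting the numerator $N^2$ rather than $N$, and keeping signs consistent between the parity cases of $N$. Whereas $B_N$ accumulated the factor $N$ through one linear multiplier, here the special column is shifted by two sub-diagonals. I anticipate that each step contributes a ratio of the form $\frac{(k+1)^2}{k^2}$, which telescopes to $N^2$. My first attempt will be to conjecture the entries after step $k$ with a factor $k^2$ or $(2\ell)(2\ell+\cdot)$ from the computed cases, and then verify the inductive step. The final $3\times3$ determinant should then evaluate to $\mp\frac{N^2}{2N-1}$.
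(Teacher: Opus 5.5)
Your outline follows the paper's route: two-column elimination with the non-standard columns carried along, multipliers $k^2$, and a final $3\times 3$ determinant $-\frac{N^2}{2N-1}$. However, it rests on a wrong picture of $C_N^{(1)}$.

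\textbf{There are two special columns, not one.} Only the $N-1$ pairs $(L^jb_N(1),L^jb_N(-1))$, $0\le j\le N-2$, can be parity-split. The two leftover columns $L^Nb_N(1)$ and $L^{N-1}b_N(-1)$ stay unsplit. After the first-row expansion, $C_N^{(1)}$ therefore contains only $2N-3$ columns of $A_N$-type, plus \emph{two} special columns:
the last left column $(0,\dots,0,1,1,\frac{1}{2!},\frac{1}{3!},\dots)^T$, which starts in row $N$ and has no interlacing zeros; and the alternating last right column.
Similarly, $C_N^{(2)}$ carries $L^{N-1}b_N(1)$ and $L^Nb_N(-1)$. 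Already for $N=2$ the columns of $C_2^{(1)}$ are $(0,1,1)^T$, $(1,0,\frac{1}{3!})^T$, $(1,-1,\frac{1}{2!})^T$, and only the middle one occurs in $A_2$.

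So Steps $1,\dots,N-1$ of Proposition~\ref{pro1} cannot be rerun verbatim with the same factors. The standard columns supply only $3^{N-2}(3\cdot 5)^{N-3}\cdots((2N-5)(2N-3))^1$. The factor $3^2 5^2\cdots(2N-3)^2$ must be extracted from the two special columns, and $2N-1$ sits in the final $3\times 3$ determinant. Your own count is inconsistent here: the full $A_N$ denominator from the standard columns, combined with a final value $\mp\frac{N^2}{2N-1}$, would leave a spurious factor $\frac{1}{2N-1}$.

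\textbf{The mechanism producing $N^2$ is also not the one you describe.} In the paper, the special columns are reduced against standard columns, but in addition, at step $k$ the last right column multiplied by $k^2$ is combined with the last left column. This coupling of the two special columns, the analogue of the multiplier $k$ between the two middle columns in Appendix~\ref{app-B}, is what produces $N^2$. A scheme organized around a single tracked column reduced against the larger block does not capture it.

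\textbf{The inductive step is missing.} You never state the inductive hypothesis. Explicit closed forms for both special columns after step $k$, and a check that the step-$k$ operations reproduce them with $k\to k+1$, are the actual content of the proof. Your proposal defers exactly this.

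\textbf{The sign relation.} Your second argument for $C_N^{(1)}=-C_N^{(2)}$ is correct, and cleaner than the paper's parity-dependent row and column sign flips. Swapping the two $N$-column blocks gives $|M_N^{(1)}(-1)|=(-1)^{N^2}|M_N^{(2)}(1)|$. Homogeneity, $|M_N^{(i)}(z)|=z^{N^2-1}|M_N^{(i)}(1)|$ for all $z\neq 0$, gives $|M_N^{(1)}(-1)|=(-1)^{N^2-1}|M_N^{(1)}(1)|$. Both determinants reduce to $C_N^{(i)}$ with the same factor $2^{N-1}(-1)^{N-1}$, so the relation follows. Drop your first heuristic (a sign flip of a single column); it is not what happens.
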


\begin{proof}
The case $N=1$ is exceptional, for which we obtain directly
$$
|M^{(1)}_1(1)| = \left| \begin{matrix} 0 & 1 \\ 1 & -1 \end{matrix} \right| = -1, \quad  |M^{(2)}_1(1)| = \left| \begin{matrix} 1 & 0 \\ 1 & 1 \end{matrix} \right| = 1,
$$ 
which still agrees with (\ref{eq:C4}). When $N=2$, we obtain
	\begin{align*}
		&C^{(1)}_{2}
		=
		\begin{vmatrix}
			0 & 1 & 1 \\
			1 & 0 & -1 \\
			1 & \frac{1}{3!} & \frac{1}{2!}
		\end{vmatrix}
		\xlongequal{\,C_3-C_2\,}
		\begin{vmatrix}
			0 & 1 & 0 \\
			1 & 0 & -1 \\
			1 & \frac{1}{3!} & \frac{1}{3}
		\end{vmatrix}
		\xlongequal{\,C_1+C_3\,}
		\begin{vmatrix}
			0 & 1 & 0 \\
			0 & 0 & -1 \\
			\frac{4}{3} & \frac{1}{3!} & \frac{1}{3}
		\end{vmatrix}
		=-\frac{2^2}{3}
		, \\
		&C^{(2)}_{2}=
		\begin{vmatrix}
			1 & 1 & 0 \\
			1 & 0 & 1 \\
			\frac{1}{2!} & \frac{1}{3!} & -1
		\end{vmatrix}
		\xlongequal{\,C_1-C_2\,}
		\begin{vmatrix}
			0 & 1 & 0 \\
			1 & 0 & 1 \\
			\frac{1}{3} & \frac{1}{3!} & -1
		\end{vmatrix}
		\xlongequal{\,C_1-C_3\,}
		\begin{vmatrix}
			0 & 1 & 0 \\
			0 & 0 & 1 \\
			\frac{4}{3} & \frac{1}{3!} & -1
		\end{vmatrix}=\frac{2^2}{3},
	\end{align*}
	which agrees with (\ref{eq:C4}). When $N=3$, we obtain 
	\begin{align*}
C^{(1)}_{3} &=\begin{vmatrix}
			1&0&1&0&0\\
			0&0&0&1&1\\
			\frac{1}{2!}&1&\frac{1}{3!}&0&-1\\
			0&1&0&\frac{1}{3!}&\frac{1}{2!}\\
			\frac{1}{4!}&\frac{1}{2!}&\frac{1}{5!}&0&-\frac{1}{3!}
		\end{vmatrix}
		\xlongequal[C_5-C_4]{C_1-C_3}
		\begin{vmatrix}
			0&0&1&0&0\\
			0&0&0&1&0\\
			\frac{1}{3}&1&\frac{1}{3!}&0&-1\\
			0&1&0&\frac{1}{3!}&\frac{1}{3}\\
			\frac{1}{5\cdot 3!}&\frac{1}{2!}&\frac{1}{5!}&0&-\frac{1}{3!}
		\end{vmatrix} 
		= \begin{vmatrix}
			\frac{1}{3}&1&-1\\
			0&1&\frac{1}{3}\\
			\frac{1}{5\cdot 3!}&\frac{1}{2!}&-\frac{1}{3!}
		\end{vmatrix} = \frac{1}{3}\begin{vmatrix}
		1&1&-1\\
		0&1&\frac{1}{3}\\
		\frac{1}{5\cdot 2}&\frac{1}{2!}&-\frac{1}{3!}
		\end{vmatrix} \\
		&
		\xlongequal[C_3+C_1]{C_2+C_3}
		\frac{1}{3}\begin{vmatrix}
			1&0&0\\
			0&\frac{4}{3}&\frac{1}{3}\\
			\frac{1}{5\cdot 2}&\frac{1}{3}&-\frac{1}{5\cdot 3}
		\end{vmatrix}
		=\frac{1}{3^3}
		\begin{vmatrix}
			1&0&0\\
			0&4&1\\
			\frac{1}{5\cdot 2}&1&-\frac{1}{5}
		\end{vmatrix}
		\xlongequal{\,C_2- 4 C_3\,}
		\frac{1}{3^3}
		\begin{vmatrix}
			1&0&0\\
			0&0&1\\
			\frac{1}{5\cdot 2}&\frac{9}{5}&-\frac{1}{5}
		\end{vmatrix} = -\frac{3^2}{3^3\cdot 5^1},\\
C^{(2)}_{3} &=\begin{vmatrix}
			1&0&1&0&0\\
			0&1&0&1&0\\
			\frac{1}{2!}&1&\frac{1}{3!}&0&1\\
			0&\frac{1}{2!}&0&\frac{1}{3!}&-1\\
			\frac{1}{4!}&\frac{1}{3!}&\frac{1}{5!}&0&\frac{1}{2!}
		\end{vmatrix}
		\xlongequal[C_2-C_4]{C_1-C_3}
		\begin{vmatrix}
			0&0&1&0&0\\
			0&0&0&1&0\\
			\frac{1}{3}&1&\frac{1}{3!}&0&1\\
			0&\frac{1}{3}&0&\frac{1}{3!}&-1\\
			\frac{1}{5\cdot 3!}&\frac{1}{3!}&\frac{1}{5!}&0&\frac{1}{2!}
		\end{vmatrix}
			= \begin{vmatrix}
		\frac{1}{3} &1&1\\
		0&\frac{1}{3}&-1\\
		\frac{1}{5\cdot 3!} &\frac{1}{3!}&\frac{1}{2!}
		\end{vmatrix}
		= \frac{1}{3}\begin{vmatrix}
			1&1&1\\
			0&\frac{1}{3}&-1\\
			\frac{1}{5\cdot 2}&\frac{1}{3!}&\frac{1}{2!}
		\end{vmatrix}\\
		&\xlongequal[C_2-C_1]{C_3-C_2}
		\frac{1}{3}\begin{vmatrix}
			1&0&0\\
			0&\frac{1}{3}&-\frac{4}{3}\\
			\frac{1}{5\cdot 2}&\frac{1}{5\cdot 3}&\frac{1}{3}
		\end{vmatrix}
		=\frac{1}{3^3}
		\begin{vmatrix}
			1&0&0\\
			0&1&-4\\
			\frac{1}{5\cdot 2}&\frac{1}{5}&1
		\end{vmatrix}
		\xlongequal{\,C_3 + 4 C_2\,}
		\frac{1}{3^3}
		\begin{vmatrix}
			1&0&0\\
			0&1&0\\
			\frac{1}{5\cdot 2}&\frac{1}{5}&\frac{9}{5}
		\end{vmatrix} = \frac{3^2}{3^3\cdot 5^1},
	\end{align*}
	which agrees with (\ref{eq:C4}). 
	
	\underline{\bf We first prove that $C_N^{(1)} = -C_N^{(2)}$.}
	The determinants $C^{(1)}_{N}$ and $C^{(2)}_{N}$ contain two blocks of $(N-1)$ left columns and $N$ right columns, where the first $(N-2)$ left columns and 
	the first $(N-1)$ right columns of $C^{(1)}_{N}$ are identical to the corresponding columns of $C^{(2)}_{N}$. 
	Starting with $C_N^{(1)}$, we swap the last left column with the last right column, which changes the sign of the determinant to the opposite. If $N$ is odd, we multiply all odd-numbered rows and odd-numbered columns by $-1$. This transformation does not change the sign of the determinant. The final determinant coincides with $C_N^{(2)}$, which proves that $C_N^{(1)} = -C_N^{(2)}$. If $N$ is even, then we multiply all even-numbered rows, all even-numbered left columns and all odd-numbered right columns by $-1$. Again, this transformation does not change the sign of the determinant and since 
	it recovers $C_N^{(2)}$, we again obtain $C_N^{(1)} = -C_N^{(2)}$.
	For $N=4$ (even), we illustrate this transformation by
	\begin{align*}
	C^{(1)}_{4} &=\begin{vmatrix}
	1&0&0&1&0&0&0\\
	0&1&0&0&1&0&0\\
	\frac{1}{2!}&0&0&\frac{1}{3!}&0&1&1\\
	0&\frac{1}{2!}&1&0&\frac{1}{3!}&0&-1\\
	\frac{1}{4!}&0&1&\frac{1}{5!}&0&\frac{1}{3!}&\frac{1}{2!}\\
	0&\frac{1}{4!}&\frac{1}{2!}&0&\frac{1}{5!}&0&-\frac{1}{3!}\\
	\frac{1}{6!}&0&\frac{1}{3!}&\frac{1}{7!}&0&\frac{1}{5!}&\frac{1}{4!}
	\end{vmatrix}
	=-\begin{vmatrix}
	1&0&0&1&0&0&0\\
	0&1&0&0&1&0&0\\
	\frac{1}{2!}&0&1&\frac{1}{3!}&0&1&0\\
	0&\frac{1}{2!}&-1&0&\frac{1}{3!}&0&1\\
	\frac{1}{4!}&0&\frac{1}{2!}&\frac{1}{5!}&0&\frac{1}{3!}&1\\
	0&\frac{1}{4!}&-\frac{1}{3!}&0&\frac{1}{5!}&0&\frac{1}{2!}\\
	\frac{1}{6!}&0&\frac{1}{4!}&\frac{1}{7!}&0&\frac{1}{5!}&\frac{1}{3!}
	\end{vmatrix}\\
	&=-\begin{vmatrix}
	1&0&0&1&0&0&0\\
	0&1&0&0&1&0&0\\
	\frac{1}{2!}&0&1&\frac{1}{3!}&0&1&0\\
	0&\frac{1}{2!}&1&0&\frac{1}{3!}&0&1\\
	\frac{1}{4!}&0&\frac{1}{2!}&\frac{1}{5!}&0&\frac{1}{3!}&-1\\
	0&\frac{1}{4!}&\frac{1}{3!}&0&\frac{1}{5!}&0&\frac{1}{2!}\\
	\frac{1}{6!}&0&\frac{1}{4!}&\frac{1}{7!}&0&\frac{1}{5!}&-\frac{1}{3!}
	\end{vmatrix} = -C^{(2)}_{4}.
	\end{align*}
	Therefore, it is sufficient to prove (\ref{eq:C4}) for $C_N^{(1)}$ only. 
	We adapt the algorithm based on the two-column elimination from the proof of Proposition \ref{pro1} for the determinant  $C^{(1)}_{N}$, where we focus on the transformations of the last left and right columns at each step of the algorithm. As previously, we illustrate the algorithm in the case of $N = 4$.

	\underline{\bf Step $1$.} We subtract the first $(N-2)$ right columns from $(N-2)$ left columns. With the same recursive fomula (\ref{step-1}), we extract the numerical fator $\frac{1}{3^{N-2}}$ in front of the determinant and modify the entries of $(N-2)$ left columns to 
	\begin{equation*}
		\frac{3}{(2\ell + 1) \cdot (2\ell-1)!}, \qquad \ell = 1,2,\dots,N-1.
	\end{equation*}
	where $\ell$ corresponds to the $(2\ell)$-th lower diagonal of the $(2N-1)\times (N-2)$ block.                                                                                                                                                                                                                                                                                                                                                                                                                                                                                                                                                                                                                                                                                                                                                                                                                                                                                                                                                                                                                                                                                                                                                                                       
	We substract the $(N-1)$-nd column of the right columns from the last right column, entries of which change to $(-1)^{j+1} \frac{1}{j!\left(\ell+1+\frac{1-(-1)^{j}}{2}\right)}$ for $j=0,1,\dots,N-1$. The nonzero entries in the first $(N-2)$ columns of the right columns at $j=N+2\ell$ remain unchanged as $-\frac{1}{(2\ell+1)!}$ for $\ell=0,1,\dots,\left\lfloor\frac{N-1}{2}\right\rfloor$ but the nonzero entries in the first $(N-2)$ columns of the left columns at $j=N+2\ell-1$ are modified $\frac{1}{(2\ell+1)\cdot (2\ell-1)!}$ similary to (\ref{step-1}) for $\ell=1,2,\dots,\left\lfloor \frac{N}{2} \right\rfloor$. For transparency, we record the nonzero entries of the last right column $C_{2}$ as
	\begin{equation}
		\begin{dcases}
			\left(C_{2}\right)_{N+2\ell-1}=\frac{1}{(2\ell+1)\cdot(2\ell-1)!}, \qquad &\ell=1,2,\dots,\left\lfloor\frac{N}{2}\right\rfloor,\\
			\left(C_{2}\right)_{N+2\ell}=-\frac{1}{(2\ell+1)!},\qquad &\ell=0,1,\dots,\left\lfloor\frac{N-1}{2}\right\rfloor.
		\end{dcases}
	\end{equation}

	For $N=4$, this step yields
	\begin{align*}
		C^{(1)}_{4}&=\begin{vmatrix}
			1&0&0&1&0&0&0\\
			0&1&0&0&1&0&0\\
			\frac{1}{2!}&0&0&\frac{1}{3!}&0&1&1\\
			0&\frac{1}{2!}&1&0&\frac{1}{3!}&0&-1\\
			\frac{1}{4!}&0&1&\frac{1}{5!}&0&\frac{1}{3!}&\frac{1}{2!}\\
			0&\frac{1}{4!}&\frac{1}{2!}&0&\frac{1}{5!}&0&-\frac{1}{3!}\\
			\frac{1}{6!}&0&\frac{1}{3!}&\frac{1}{7!}&0&\frac{1}{5!}&\frac{1}{4!}
		\end{vmatrix}=
		\begin{vmatrix}
			0&0&0&1&0&0&0\\
			0&0&0&0&1&0&0\\
			\frac{1}{3}&0&0&\frac{1}{3!}&0&1&1\\
			0&\frac{1}{3}&1&0&\frac{1}{3!}&0&-1\\
			\frac{1}{5 \cdot 3!}&0&1&\frac{1}{5!}&0&\frac{1}{3!}&\frac{1}{2!}\\
			0&\frac{1}{5 \cdot 3!}&\frac{1}{2!}&0&\frac{1}{5!}&0&-\frac{1}{3!}\\
			\frac{1}{7\cdot 5!}&0&\frac{1}{3!}&\frac{1}{7!}&0&\frac{1}{5!}&\frac{1}{4!}
		\end{vmatrix}\\=
		&\frac{1}{3^2}
		\begin{vmatrix}
			0&0&0&1&0&0&0\\
			0&0&0&0&1&0&0\\
			1&0&0&\frac{1}{3!}&0&1&1\\
			0&1&1&0&\frac{1}{3!}&0&-1\\
			\frac{1}{5 \cdot 2}&0&1&\frac{1}{5!}&0&\frac{1}{3!}&\frac{1}{2!}\\
			0&\frac{1}{5 \cdot 2}&\frac{1}{2!}&0&\frac{1}{5!}&0&-\frac{1}{3!}\\
			\frac{3}{7\cdot 5!}&0&\frac{1}{3!}&\frac{1}{7!}&0&\frac{1}{5!}&\frac{1}{4!}
		\end{vmatrix}
		=\frac{1}{3^2}\begin{vmatrix}
			0&0&0&1&0&0&0\\
			0&0&0&0&1&0&0\\
			1&0&0&\frac{1}{3!}&0&1&0\\
			0&1&1&0&\frac{1}{3!}&0&-1\\
			\frac{1}{5 \cdot 2}&0&1&\frac{1}{5!}&0&\frac{1}{3!}&\frac{1}{3}\\
			0&\frac{1}{5 \cdot 2}&\frac{1}{2!}&0&\frac{1}{5!}&0&-\frac{1}{3!}\\
			\frac{3}{7\cdot 5!}&0&\frac{1}{3!}&\frac{1}{7!}&0&\frac{1}{5!}&\frac{1}{5 \cdot 3!}.
		\end{vmatrix}
	\end{align*}

	\underline{\bf Step 2.} We eliminate the first two rows and the first two right columns and obtain a determinant which contains two blocks of $(N-1)$ left columns and $(N-2)$ right columns. This does not change the sign factor of the determinant, independently whether $N$ is even or odd. 
	After the two-column elimination, we obtain a determinant which contains two blocks of $(N-1)$ left columns and $(N-2)$ right columns. With the same recursive formula (\ref{step-2}), we extract the numerical factor $\frac{1}{(3 \cdot 5)^{N-3}}$ in front of the determinant and modify the entries 
	of $(N-3)$ left columns to 
	\begin{equation*}
		\frac{3 \cdot 5}{(2\ell+1) (2\ell-1) \cdot (2\ell-3)!}, \qquad \ell = 2,\dots,N-2.
	\end{equation*}
	where $\ell$ corresponds to the $(2\ell-1)$-th lower diagonal of the right $(2N-5)\times(N-4)$ block.
	
	In addition, we add last right column to the last left column. This changes the nonzero entries in the last left column at $j=N+2\ell-1$ to $\frac{2\ell+2}{(2\ell+1)\cdot (2\ell-1)!}$ and changes the nonzero entries in the last left column as
	\begin{equation*}
		\frac{1}{(2\ell-1)!}+\frac{1}{(2\ell+1)\cdot(2\ell-1)!}=\frac{2\ell+2}{(2\ell+1)\cdot(2\ell-1)!},\qquad \ell=1,2,\dots,\left\lfloor\frac{N}{2}\right\rfloor.
	\end{equation*}
	This changes the nonzero entries in the last left column at $j=N+2\ell$ to $\frac{1}{(2\ell+1)\cdot (2\ell-1)!}$ and changes the nonzero entries in the last left column as 
	\begin{equation*}
		\frac{1}{(2\ell)!}-\frac{1}{(2\ell+1)!}=\frac{1}{(2\ell+1)\cdot(2\ell-1)!},\qquad \ell=1,2,\dots,\left\lfloor\frac{N-1}{2}\right\rfloor.
	\end{equation*}
	
	After this transformation, we add the $(N-2)$-th column of the first columns to the last right column. The nonzero entries at $j=N+2\ell-1$ are unchanged as $\frac{1}{(2\ell+1)\cdot(2\ell-1)!}$ for $\ell=1,2,\dots,\left\lfloor\frac{N}{2}\right\rfloor$ but the nonzero entries at $\ell=N+2\ell$ are modified as
	\begin{equation*}
		\frac{3}{(2\ell+3)\cdot(2\ell+1)!}-\frac{1}{(2\ell+1)!}=-\frac{1}{(2\ell+3)\cdot(2\ell+1)\cdot(2\ell-1)!},\quad \ell=0,1,\dots,\left\lfloor\frac{N-1}{2}\right\rfloor.
	\end{equation*}
	
	Since $\frac{1}{3}$ is a common factor in the last left column and the last right column, we extract it from the last left column and the last right column.	For transparency, we record the nonzero entries of the last left column $C_{1}$ as
	\begin{equation}
		\begin{dcases}
			\left(C_{1}\right)_{N+2\ell-1}=\frac{3\cdot(2\ell+2)}{(2\ell+1)\cdot(2\ell-1)!}, \qquad &\ell=1,2,\dots,\left\lfloor\frac{N}{2}\right\rfloor,\\
			\left(C_{1}\right)_{N+2\ell}=\frac{3}{(2\ell+1)\cdot(2\ell-1)!},\qquad &\ell=1,2,\dots,\left\lfloor\frac{N-1}{2}\right\rfloor.
		\end{dcases}
	\end{equation}
	Again, we record the nonzero entries of the last right column $C_{2}$ as
	\begin{equation}
		\begin{dcases}
			\left(C_{2}\right)_{N+2\ell-1}=\frac{3}{(2\ell+1)\cdot(2\ell-1)!}, \quad &\ell=1,2,\dots,\left\lfloor\frac{N}{2}\right\rfloor,\\
			\left(C_{2}\right)_{N+2\ell}=-\frac{3}{(2\ell+3)\cdot(2\ell+1)\cdot (2\ell-1)!},\quad &\ell=1,2,\dots,\left\lfloor\frac{N-1}{2}\right\rfloor.
		\end{dcases}
	\end{equation}
	For $N=4$, this step yields
	\begin{align*}
		C^{(1)}_{4}&=
		\frac{1}{3^2}\begin{vmatrix}
			1&0&0&1&0\\
			0&1&1&0&-1\\
			\frac{1}{5\cdot 2}&0&1&\frac{1}{3!}&\frac{1}{3}\\
			0&\frac{1}{5\cdot 2}&\frac{1}{2!}&0&-\frac{1}{3!}\\
			\frac{3}{7\cdot 5!}&0&\frac{1}{3!}&\frac{1}{5!}&\frac{1}{5 \cdot 3!}
		\end{vmatrix}
		=\frac{1}{3^2}\begin{vmatrix}
			1&0&0&0&0\\
			0&1&1&0&-1\\
			\frac{1}{5\cdot 2}&0&1&\frac{1}{5\cdot 3}&\frac{1}{3}\\
			0&\frac{1}{5\cdot 2}&\frac{1}{2!}&0&-\frac{1}{3!}\\
			\frac{3}{7\cdot 5!}&0&\frac{1}{3!}&\frac{1}{7\cdot 5\cdot 3!}&\frac{1}{5\cdot 3!}
		\end{vmatrix}\\
		&=\frac{1}{3^2\cdot (5\cdot 3)^1}\begin{vmatrix}
			1&0&0&0&0\\
			0&1&1&0&-1\\
			\frac{1}{5\cdot 2}&0&1&1&\frac{1}{3}\\
			0&\frac{1}{5\cdot 2}&\frac{1}{2!}&0&-\frac{1}{3!}\\
			\frac{3}{7\cdot 5!}&0&\frac{1}{3!}&\frac{1}{7\cdot 2}&\frac{1}{5 \cdot 3!}
		\end{vmatrix}
		=\frac{1}{3^2 \cdot(5\cdot 3)^1}\begin{vmatrix}
			1&0&0&0&0\\
			0&1&0&0&0\\
			\frac{1}{5\cdot 2}&0&\frac{4}{3}&1&\frac{1}{3}\\
			0&\frac{1}{5\cdot 2}&\frac{1}{3}&0&-\frac{1}{5\cdot 3}\\
			\frac{3}{7\cdot 5!}&0&\frac{1}{5}&\frac{1}{7\cdot 2}&\frac{1}{5 \cdot 3!}
		\end{vmatrix}\\
		&=
		\frac{1}{3^2 \cdot(5\cdot 3)^1\cdot 3^2}\begin{vmatrix}
			1&0&0&0&0\\
			0&1&0&0&0\\
			\frac{1}{5\cdot 2}&0&4&1&1\\
			0&\frac{1}{5\cdot 2}&1&0&-\frac{1}{5}\\
			\frac{3}{7\cdot 5!}&0&\frac{3}{5}&\frac{1}{7\cdot 2}&\frac{1}{5 \cdot 2}
		\end{vmatrix}.
	\end{align*}
	
	\underline{\bf Step 3.} We eliminate the first two rows and the first two left columns and obtain a determinant which contains two blocks of $(N-3)$ left columns and $(N-2)$ right columns. If $N\geq 5$, We subtract the first $(N-4)$ rights columns of the second block from $(N-2)$ columns of the  first block. With the same recursive formula (\ref{step-3}), we extract the numerical factor  $\frac{1}{(5 \cdot 7)^{N-4}}$ in front of the determinant and modify the entries of $(N-4)$ left columns to 
	\begin{equation*}
		\frac{3 \cdot 5 \cdot 7}{(2\ell+1) \cdot(2\ell-1) \cdot(2 \ell - 3) \cdot (2\ell-5)!}, \qquad \ell = 3,\dots,N-1.
	\end{equation*}
	where $\ell$ correspnds to the $2(\ell-2)$-th lower diagonal of the left $(2N-5)\times (N-4)$
	In addition, we subtract four times the last right column from the last left column. This changes the nonzero entries in the last left column at $j=N+2\ell-3$ to 
	\begin{align*}
		&\frac{3}{(2\ell+1)\cdot(2\ell-1)!}+\frac{4\cdot 3}{(2\ell+3)\cdot(2\ell+1)\cdot (2\ell-1)!}\\
		&=\frac{3\cdot(2\ell+7)}{(2\ell+3)\cdot(2\ell+1)\cdot(2\ell-1)!},\quad \ell=1,2,\dots,\left\lfloor\frac{N}{2}\right\rfloor.
	\end{align*}
	
	This changes the nonzero entries in the last left column at $j=N+2\ell-2$ to \begin{align*}
		&\frac{3\cdot(2\ell+2)}{(2\ell+1)\cdot (2\ell-1)!}-\frac{4\cdot 3}{(2\ell+1)\cdot(2\ell-1)!}\\
		&=\frac{3}{(2\ell+1)\cdot(2\ell-1)\cdot(2\ell-3)!},\quad \ell=2,3,\dots,\left\lfloor\frac{N-1}{2}\right\rfloor.
	\end{align*}
	After, we subtract the last right column from the $(N-5)$-th column of the right columns. The nonzero entries at $j=N+2\ell-3$ are unchanged as $-\frac{3}{(2\ell+3)\cdot(2\ell+1)\cdot (2\ell-1)!}$ obtain after Step 3 for $\ell=1,2,\dots,\lfloor\frac{N}{2}\rfloor$ but the nonzero entries at $j=N+2\ell-2$ as
	\begin{align*}
		&\frac{3}{(2\ell+1)\cdot(2\ell-1)!}-\frac{5\cdot 3}{(2\ell+3)\cdot(2\ell+1)\cdot(2\ell-1)!}\\
		&=\frac{3}{(2\ell+3)\cdot(2\ell+1)\cdot(2\ell-1)\cdot(2\ell-3)!},\quad \ell=2,3,\dots,\left\lfloor\frac{N-1}{2} \right\rfloor.
	\end{align*}
	Since $\frac{1}{5}$ is a common factor in the last left column and the last right column, we extract it from the last left column and the last right column. Entries to the two columns are written explicitly as
	\small{\begin{equation}
			\begin{dcases}
				(C_{1})_{N+2\ell-3}=\frac{3\cdot 5\cdot(2\ell+7)}{(2\ell+3)\cdot(2\ell+1)\cdot(2\ell-1)!},\quad &\ell=1,2,\dots,\left\lfloor\frac{N}{2} \right\rfloor,\\
				(C_{1})_{N+2\ell-2}=\frac{3\cdot 5}{(2\ell+1)\cdot(2\ell-1)\cdot(2\ell-3)!}\quad &\ell=2,3,\dots,\left\lfloor\frac{N-1}{2} \right\rfloor
			\end{dcases}
	\end{equation}}
	and
	\small{\begin{equation}
			\begin{dcases}
				(C_{2})_{N+2\ell-3}=-\frac{3\cdot 5}{(2\ell+3)\cdot(2\ell+1)\cdot (2\ell-1)!},\quad &\ell=1,2,\dots,\left\lfloor\frac{N}{2} \right\rfloor,\\
				(C_{2})_{N+2\ell-2}=\frac{3\cdot 5}{(2\ell+3)\cdot(2\ell+1)\cdot(2\ell-1)\cdot(2\ell-3)!}\quad &\ell=2,3,\dots,\left\lfloor\frac{N-1}{2} \right\rfloor.
			\end{dcases}
	\end{equation}}

	After this operation, we subtract the last right column multiplied by $9$ from the last left column.
	This changes the nonzero entries in the last left column at $j=N+2\ell-5$ to 
	\begin{align*}
		&\frac{3\cdot5}{(2\ell+1)\cdot (2\ell-1)\cdot(2\ell-3)!}+\frac{9\cdot 3\cdot 5}{(2\ell+3)\cdot(2\ell+1)\cdot(2\ell-1)\cdot(2\ell-3)!}\\
		&=\frac{3\cdot 5\cdot (2\ell+12)}{(2\ell+3)\cdot(2\ell+1)\cdot(2\ell-1)\cdot(2\ell-3)!},\quad \ell=3,4,\dots,\left\lfloor\frac{N}{2}\right\rfloor.
	\end{align*}
	This changes the nonzero entries in the last left column at $j=N+2\ell-4$ to \begin{align*}
		&\frac{3\cdot5\cdot(2\ell+7)}{(2\ell+3)\cdot(2\ell+1)\cdot(2\ell-1)!}-\frac{9\cdot 3\cdot 5}{(2\ell+3)\cdot(2\ell+1)\cdot (2\ell-1)!}\\
		&=\frac{3\cdot 5}{(2\ell+3)\cdot(2\ell+1)\cdot(2\ell-1)\cdot(2\ell-3)!},\quad \ell=2,3,\dots,\left\lfloor\frac{N-1}{2}\right\rfloor.
	\end{align*}
	Recompute nonzero entries of the last left column as
	\small{\begin{equation}
			\begin{dcases}
				(C_{1})_{N+2\ell-5}=\frac{3\cdot 5\cdot7\cdot(2\ell+12)}{(2\ell+3)\cdot(2\ell+1)\cdot (2\ell-1)\cdot(2\ell-3)!},\quad &\ell=3,4,\dots,\left\lfloor\frac{N}{2} \right\rfloor,\\
				(C_{1})_{N+2\ell-4}=\frac{3\cdot 5\cdot 7}{(2\ell+3)\cdot(2\ell+1)\cdot(2\ell-1)\cdot(2\ell-3)!}\quad &\ell=2,3,\dots,\left\lfloor\frac{N-1}{2} \right\rfloor.
			\end{dcases}
	\end{equation}}

	For $N=4$, this step yields
	\begin{align*}
		C^{(1)}_{4}&=
		\frac{1}{3^2 \cdot (5\cdot 3)^1\cdot3^2}\begin{vmatrix}
			4&1&1\\
			1&0&-\frac{1}{5}\\
			\frac{3}{5}&\frac{1}{7\cdot 2}&\frac{1}{5 \cdot 2}
		\end{vmatrix}=
		\frac{1}{3^2 \cdot(5\cdot 3)^1\cdot3^2}\begin{vmatrix}
			0&1&0\\
			\frac{9}{5}&0&-\frac{1}{5}\\
			\frac{1}{5}&\frac{1}{7\cdot 2}&\frac{1}{7 \cdot 5}
		\end{vmatrix}\\
		&=
		\frac{1}{3^2 \cdot (5\cdot 3)^1\cdot3^2\cdot 5^2}\begin{vmatrix}
			0&1&0\\
			9&0&-1\\
			1&\frac{1}{7\cdot 2}&\frac{1}{7}
		\end{vmatrix}
		=
		\frac{1}{3^2 \cdot (5\cdot 3)^1\cdot3^2\cdot 5^2}\begin{vmatrix}
			0&1&0\\
			0&0&-1\\
			\frac{16}{7}&\frac{1}{7\cdot 2}&\frac{1}{7}
		\end{vmatrix}\\
		&	=\frac{1}{3^2 \cdot (5\cdot 3)^1\cdot3^2\cdot 5^2\cdot 7^1}\begin{vmatrix}
			0&1&0\\
			0&0&-1\\
			16&\frac{1}{7\cdot 1}&1
		\end{vmatrix}
		=-\frac{4^2}{ 3^5\cdot 5^3\cdot 7^1},
	\end{align*}
	which completes the algorithm for $N=4$.
	
	\underline{\bf Step $k$.} This algorithm can be continued for any $2\leq k\leq N-2$ by alternately extracting the product common factor, square common factor, and order reduction between the left columns and right solumns. With the same recursive formula (\ref{induction}) but for $\ell=k,\dots,N-2$, we extract the numerical factor $\frac{1}{((2k-1) \cdot (2k+1))^{N-k-1}}$ in front of the determinant and modify the entries of $(N-k-1)$ columns of the first block to 
	\begin{equation*}
		\frac{3 \cdot 5 \cdot \ldots \cdot (2k+1)}{(2\ell+1) (2\ell-1)\cdots  (2 \ell - 2k + 3) \cdot (2\ell - 2k + 1)!}, \qquad \ell = k,\dots,N-1.
	\end{equation*}
	Then we eliminate the first two rows and the first two columns of the larger block. Afterwards, we perform operations on the last left column and the last right column. We subtract the last right column multiplied by $k^2$ from the last left column. Entries of the last left column are written in the form:
	\begin{tiny}\begin{equation}
			\begin{dcases}
				(C_{1})_{N+2\ell-2k+3}=\frac{3\cdot 5\cdot\ldots\cdot (2k-1)\cdot(2\ell+5k-8)}{(2\ell+2m-1)\cdot(2\ell+2m-3)\cdot\ldots\cdot(2\ell-2m+3)\cdot(2\ell-2m+1)!} ,\quad &\ell=m,\dots,\left\lfloor\frac{N}{2} \right\rfloor,\\
				(C_{1})_{N+2\ell-2k+4}=\frac{3\cdot 5\cdot\ldots\cdot (2k-1)}{(2\ell+2m-1)\cdot(2\ell+2m-3)\cdot\ldots\cdot(2\ell-2m+3)\cdot(2\ell-2m+1)!} ,\quad &\ell=m,\dots,\left\lfloor\frac{N-1}{2} \right\rfloor
			\end{dcases}
	\end{equation}\end{tiny}
	if $k=2m$ is even, and in the form:
	\begin{tiny}\begin{equation}
			\begin{dcases}
				(C_{2})_{N+2\ell-2k+3}=\frac{3\cdot 5\cdot\ldots\cdot (2k-1)\cdot(2\ell+5k-8)}{(2\ell+2m-1)\cdot(2\ell+2m-3)\cdot\ldots\cdot(2\ell-2m+3)\cdot(2\ell-2m+1)!} ,\quad &\ell=m-1,\dots,\left\lfloor\frac{N}{2} \right\rfloor,\\
				(C_{2})_{N+2\ell-2k+4}=\frac{3\cdot 5\cdot\ldots\cdot (2k-1)}{(2\ell+2m-3)\cdot(2\ell+2m-5)\cdot\ldots\cdot(2\ell-2m+3)\cdot(2\ell-2m+1)!} ,\quad &\ell=m,\dots,\left\lfloor\frac{N-1}{2} \right\rfloor
			\end{dcases}
	\end{equation}\end{tiny}
	if $k=2m-1$ is odd.
	
	Entries of the last right column are written in the form:
	
	\begin{tiny}\begin{equation}
			\begin{dcases}
				(C_{1})_{N+2\ell-2k+3}=\frac{3\cdot 5\cdot\ldots\cdot (2k-1)\cdot(2\ell+5k-8)}{(2\ell+2m-1)\cdot(2\ell+2m-3)\cdot\ldots\cdot(2\ell-2m+3)\cdot(2\ell-2m+1)!} ,\quad &\ell=m,\dots,\left\lfloor\frac{N}{2} \right\rfloor,\\
				(C_{1})_{N+2\ell-2k+4}=\frac{3\cdot 5\cdot\ldots\cdot (2k-1)}{(2\ell+2m-1)\cdot(2\ell+2m-3)\cdot\ldots\cdot(2\ell-2m+3)\cdot(2\ell-2m+1)!} ,\quad &\ell=m,\dots,\left\lfloor\frac{N-1}{2} \right\rfloor
			\end{dcases}
	\end{equation}\end{tiny}
	if $k=2m$ is even, and in the form:
	\begin{tiny}
		\begin{equation}
			\begin{dcases}
				(C_{2})_{N+2\ell-2k+3}=\frac{3\cdot 5\cdot\ldots\cdot (2k-1)\cdot(2\ell+5k-8)}{(2\ell+2m-1)\cdot(2\ell+2m-3)\cdot\ldots\cdot(2\ell-2m+3)\cdot(2\ell-2m+1)!} ,\quad &\ell=m-1,\dots,\left\lfloor\frac{N}{2} \right\rfloor,\\
				(C_{2})_{N+2\ell-2k+4}=\frac{3\cdot 5\cdot\ldots\cdot (2k-1)}{(2\ell+2m-3)\cdot(2\ell+2m-5)\cdot\ldots\cdot(2\ell-2m+3)\cdot(2\ell-2m+1)!} ,\quad &\ell=m,\dots,\left\lfloor\frac{N-1}{2} \right\rfloor
			\end{dcases}
		\end{equation}
	\end{tiny}
	if $k=2m-1$ is odd.

	When we recompute the entries in the above equations, we substitute the recursive formula from the previous step with $k \to k-1$ and $\ell \to \ell$ in the first term of the difference because the entries of the columns were not changed but the size of the block was reduced in the previous iteration, and we use the current parameters $k$ and $\ell$ in the second term because the entries were changed by the column operations of the current iteration. The algebraic simplification confirms that the results satisfy the general recursive formula. The process is then repeated cyclically, incrementing $k$ by $1$ until we reach $k = N-1$.
	
	\underline{\bf Step $k=N-1$.} In this last step, we perform (\ref{induction}) for the remaining columns, extract the remaining factor $\frac{1}{(2N-3)\cdot(2N-1)}$. We apply the recursive formula summarized in step $k$ to the last left column and the last right column, and obtain a trivial $3\times 3$ determinant, which is equal is  $-\frac{N^2}{2N-1}$, see examples for $N=2,3,4$. Combining together all numerical factors, we obtain 
	$$
	C^{(1)}_{N} = -\frac{N^2}{3^{N-2} \cdot (3 \cdot 5)^{N-3} \cdot\ldots\cdot ((2N-5) \cdot(2N-3))^1\cdot 3^2 \cdot 5^2\cdot\ldots\cdot(2N-3)^2 \cdot (2N-1)^1},
	$$
which recovers the numerical value (\ref{eq:C4}).
\end{proof}

\bibliographystyle{plain}

\begin{thebibliography}{99}
	
\bibitem{bilman1} D. Bilman and R. Buchingham, ``Large-order asymptotics for multiple-pole solitons of the focusing nonlinear Schr\"{o}dinger equation",
J. Nonlinear Sci. {\bf 29} (2019) 2185--2229.

\bibitem{bilman2} D. Bilman, R. Buchingham, and D.S. Wang, ``Far-field asymptotics for multiple-pole solitons in the large-order limit",
J. Differential Equations {\bf 297} (2021) 320--369.

\bibitem{bilman3} D. Bilman and P. D. Miller, ``Broader universality of rogue waves of infinite order", Phys. D {\bf 435} (2022)  133289 (12 pp).

\bibitem{bilman4} D. Bilman and P. D. Miller, ``Extreme superposition: high-order fundamental rogue waves in the far-field regime", 
Mem. Amer. Math. Soc. {\bf 300} (2024) 1505 (90 pp).
	
\bibitem{BC19} N. Boussaid and A. Comech, 
{\em Nonlinear Dirac equations: Spectral stability of solitary waves}, Mathematical Surveys and Monographs {\bf 244} (American Mathematical Society, Providence, RI, 2019).

\bibitem{Candy} T. Candy, ``Global existence for an $L^2$   critical nonlinear Dirac equation in one dimension",
Adv. Differential Equations {\bf 16} (2011) 643--666.
	
\bibitem{Chen-SAPM-2023}
J. Chen and B.  Feng, ``Tau-function formulation for bright, dark soliton and breather solutions to
the massive Thirring model", Stud. Appl. Math. \textbf{150} (2023) 35--68.

\bibitem{Feng2}
J. Chen, B. Yang, and B. Feng, ``Rogue waves in the massive Thirring model", Stud Appl Math. {\bf 151} (2023) 1020--1052.


\bibitem{CPS16} A. Contreras, D.E. Pelinovsky, and Y. Shimabukuro, ``$L^2$ orbital stability of Dirac solitons in the massive Thirring model", Commun. PDEs {\bf 41} (2016) 227--255.



\bibitem{Degasperis} A. Degasperis, S. Wabnitz, and A. B. Aceves, 
``Bragg grating rogue wave", Phys. Lett. A {\bf 379} (2015)  1067--1070.

\bibitem{Han-2024}
J. Han, C. He, and D. E. Pelinovsky, ``Algebraic solitons in the massive Thirring model", Phys. Rev. E {\bf 110} (2024) 034202 (11 pages)


\bibitem{Cheng23} C. He, J. Liu, and C. Qu, ``Massive Thirring model: inverse scattering and soliton resolution", 
Sci. China Math. {\bf 69} (2026)  39--92.

\bibitem{Huh} H.   Huh, ``Global strong solution to the Thirring model in critical space",  J. Math. Anal. Appl. {\bf 381} (2011) 513--520.

\bibitem{Guo-SAPM-2013}
B. Guo, L. Ling, and Q. P. Liu, ``High-order solutions and generalized Darboux transformations of derivative nonlinear Schr\"{o}dinger equations", Stud. Appl. Math. \textbf{130} (2013) 317--344.

\bibitem{He}
L. Guo, L. Wang, Y. Cheng, and J. He, ``High-order rogue wave solutions of the classical massive Thirring model equations", Commun. Nonlinear Sci. Numer. Simulat. {\bf 52}  (2017) 11--23.

\bibitem{Kaup-ANC-1977}
D.J. Kaup and A.C. Newell,  ``On the Coleman correspondence and the solution of the Massive Thirring model", Lett. Nuovo Cimento, \textbf{20} (1977) 325--331.

\bibitem{KPR06} M. Klaus, D.E. Pelinovsky, and V.M. Rothos,  ``Evans function for Lax operators with algebraically decaying potentials", J. Nonlin. Science {\bf 16} (2006) 1--44.

\bibitem{Kuznetsov-TMP-1977}
E.A. Kuznetsov and A.V. Mikhailov, ``On the complete integrability of the two-dimensional
classical Thirring model", Theor. Math. Phys. \textbf{30} (1977) 193--200.

\bibitem{Wu18} S. Kwon and Y. Wu, ``Orbital stability of solitary waves for derivative nonlinear Schr\"{o}dinger equation", 
J. Anal. Math. {\bf 135} (2018)  473--486.

\bibitem{LiPelin2025} Z. Li, D. E. Pelinovsky, and S. Tian, ``Exponential and algebraic double-soliton solutions of the massive Thirring model", J. Math. Phys. {\bf 66} (2025) 101529 (24 pages)

\bibitem{HirotaOhtaSatsuma1988} R. Hirota, Y. Ohta, and J. Satsuma,  ``Wronskian structures of solutions for soliton equations", Prog. Theor. Phys. Suppl.  {\bf 94} (1988) 59--72.

\bibitem{Kakei1988} S. Kakei, N. Sasa, and J. Satsuma,  ``Bilinearization of generalized derivative Schr\"{o}dinger equation", J. Phys. Soc. Jpn.  {\bf 64} (1995) 1519--1523.

\bibitem{Wang1} S. Z. Liu, J. Wang, and D. J. Zhang, ``The Fokas-Lenells equations: Bilinear approach", Stud. Appl. Math. {\bf 148} (2022) 651--688.

\bibitem{Mikhailov-JETP-1976}
A.V. Mikhailov, ``Integrability of the two-dimensional Thirring model", JETP Lett. \textbf{23} (1976) 320--323.

\bibitem{Orfanidis-PRD-1976}
S. J. Orfanidis,  ``Soliton solutions of the massive Thirring model and the inverse scattering transform", Phys. Rev. D \textbf{14} (1976) 472--478.

 \bibitem{Pel11}
D. E. Pelinovsky {\it Localization in Periodic Potentials; from Schr\"{o}dinger Operators to the Gross--Pitaevskii Equation}, London Mathematical Society Lecture Note Series, 2011. 

\bibitem{PS19} D.E. Pelinovsky and A. Saalmann, ``Inverse scattering for the massive Thirring model" in {\em Nonlinear Dispersive Partial Differential Equations and Inverse Scattering} (Editors: P. Miller, P. Perry, J.C. Saut, and C. Sulem), Fields Institute Communications {\bf 83} (Springer, New York, NY, 2019) 497--528.
 
\bibitem{PS14} D.E. Pelinovsky and Y. Shimabukuro, ``Orbital stability of Dirac solitons", Lett. Math. Phys. {\bf 104} (2014) 21--41.

\bibitem{PS18} D. E. Pelinovsky and Y. Shimabukuro, ``Existence of global solutions to the derivative NLS equation with the inverse scattering transform method", Inter. Math. Res. Notices {\bf 2018} (2019) 5663--5728.

\bibitem{Thirring-AOP-1958}
W. Thirring,  ``A soluble relativistic field theory", Ann. Phys. \textbf{3} (1958) 91--112.

\bibitem{Wang2} J. Wang and H. Wu, ``Rational solutions with zero background and algebraic solitons of three derivative nonlinear Schr\"{o}dinger equations:
bilinear approach", Nonlinear Dyn. {\bf 109} (2022) 3101--3111. 

\bibitem{Xu-JPA-2011}
S. Xu, J. He, and L. Wang, ``The Darboux transformation of the derivative nonlinear Schr\"{o}dinger equation", J. Phys. A: Math. Theor. \textbf{44} (2011) 305203 (22 pages).

\bibitem{YY1} B. Yang and J. Yang, ``Rogue wave patterns in the nonlinear
Schr\"{o}dinger equation", Physica D {\bf 419} (2021) 132850.

\bibitem{YY2} B. Yang and J. Yang, ``Universal rogue wave patterns associated
with the Yablonskii–Vorob\'ev polynomial hierarchy", Physica D
{\bf 425} (2021) 132958.

\bibitem{YY3} B. Yang and J. Yang, ``Rogue wave patterns associated with
Okamoto polynomial hierarchies", Stud. Appl. Math. {\bf 151} (2023) 60--115.

\bibitem{YY} B. Yang and J. Yang, {\em Rogue Waves in Integrable Systems} 
(Springer, 2024).

\bibitem{Ye}
Y. Ye, L. Bu, C. Pan, S. Chen, D. Mihalache, and F. Baronio, ``Super rogue wave states in the classical massive Thirring model system", Rom. Rep. Phys. 73 (2021) 117 (16 pages).

\end{thebibliography}

\end{document}